\documentclass[letterpaper, 10 pt, conference]{ieeeconf}  
\IEEEoverridecommandlockouts                              
\usepackage{amsmath} 
\usepackage{amssymb}  
\usepackage{amsthm}

\usepackage{color}
\usepackage{graphicx}
\usepackage{comment}
\usepackage{cite}
\usepackage{subcaption}
\usepackage{url}
\usepackage{hyperref}

\newcommand{\dP}{\mathrm{d}\mathbb{P}}
\newcommand{\dQ}{\mathrm{d}\mathbb{Q}}
\newcommand{\dG}{\mathrm{d}\mathbb{G}}
\newcommand{\Wt}{\mathrm{W}_t}
\newcommand{\T}{\mathrm{T}}
\newcommand{\I}{\mathcal{I}}
\newcommand{\KL}{\mathrm{KL}}
\newcommand{\Cov}{\text{Cov}}
\newcommand{\Tr}{\textnormal{Tr}}
\newcommand{\etal}{et al.}
\newcommand{\SI}{\text{SI}}

\newtheorem{remark}{Remark}
\newtheorem{thm}{Theorem}

\newtheorem{defn}{Definition}
\newtheorem{lemma}{Lemma}

\newtheorem{prop}{Proposition}
\newtheorem{corollary}{Corollary}

\title{\LARGE \bf
Minimum Rate For Partially Observable Linear System with Side Information: LQG Plant and Gaussian-Markov Source
}

\author{Sijie Li and Hyeji Kim
\thanks{Sijie Li and Hyeji Kim are with the Department of Electrical and Computer Engineering,
        University of Texas at Austin
        {\tt\small \{sijieli,hyeji\}@utexas.edu}}%
}

\begin{document}

\maketitle
\thispagestyle{empty}
\pagestyle{empty}

\begin{abstract}
This paper studies the minimum rate required for a partially observable linear system with side information. The Linear Quadratic Gaussian(LQG) plant and the Gaussian-Markov source are considered. We show that a class of linear policies is sufficient for optimizing the conditional directed information lower bound. We also show that the resulting optimization problem is convex for the scalar case in both time-varying and time-invariant systems. Our results generalize the past works that consider the case with full or partial observation only, and the case with full observation and side information. Numerical simulations are presented to illustrate the effect of side information for partially observable systems.
\end{abstract}

\section{INTRODUCTION}

In this paper, we study the minimum amount of information needed for a linear system, i.e., a Linear Quadratic Gaussian (LQG) plant or a Gaussian-Markov source, to achieve a certain distortion requirement with partial observation and side information. In our setting, a sensor receives a noisy measurement of the plant state at each time step and sends a codeword to the controller/decoder through a noiseless communication link. The codeword can depend only on all the information the sensor has received so far. The controller/decoder receives the codeword and additional side information to output a control signal or an estimate of the plant state, which depends only on the received codewords and side information so far, and on the past output signals. This problem is both interesting and challenging, as achieving the optimal tradeoff necessitates a co-design of communication and control. Moreover, our problem setup generalizes several prior models, including settings with full or partial observation only, as well as those with full observation and side information.

The causal rate-distortion tradeoff has been extensively studied for the {single} encoder and {single} controller/decoder case \cite{lqg-scalar,GMsdp,lqgsdp,lqg-epi,LQG-sideinfo1,LQG-sideinfo2,partial-observation-GM,multiple-sensor-lower-bound}. Nonetheless, prior work focuses on scenarios with either partial observation or side information (or neither), but not both. 
For the LQG plant, Tatikonda \etal \cite{lqg-scalar} present explicit expressions of the sequential rate-distortion function for the scalar case with full observation only. Tanaka \etal \cite{GMsdp} study the causal rate-distortion tradeoff with full observation only for the Gaussian-Markov source and present a Semidefinite programming (SDP) form to solve it for the general vector case. This is later generalized to the LQG case in \cite{lqgsdp} by Tanaka \etal. In \cite{lqgsdp}, they also consider the case with partial observation and use the idea of the innovation process to transform the partial observation case to the full observation case. It follows immediately from their result that similar results also hold for the Gaussian-Markov source with partial observation, and this is discussed by Stavrou and Skoglund \cite{partial-observation-GM}. Another work by Kostina and Hassibi \cite{lqg-epi} uses the conditional entropy power inequality to establish an analytical lower bound for both full observation and partial observation cases. Moreover, their results also hold for the general non-Gaussian noises that drive the linear plant.

For the case of full observation with side information, Cuvelier and Tanaka \cite{LQG-sideinfo2} study the case when part of the full observation is sent to the controller as side information. Sabag \etal \cite{LQG-sideinfo1} study the case with additional noisy side information to the controller. They both construct the SDP form for the causal rate-distortion tradeoff in these cases. Li \etal \cite{multiple-sensor-lower-bound} show a different proof for the case with noisy side information and construct the SDP form for the case where the covariance matrix of the Gaussian noise in the linear plant is singular in the full-version paper \cite{longversion}. 
For the achievable schemes, Silva \etal \cite{1.254-gap} shows that the gap between the lower and the upper bound is at most 1.254 bits for the scalar case. Tanaka \etal \cite{vector-achievable-scheme} generalizes the result to the vector case. 
A few results are available for \textit{multiple} sensor and \textit{single} controller/decoder case. Jung \etal \cite{lqgceo} considers the case with multiple sensors with partial observations but with the \textit{feedback} from the controller/decoder to the encoders. Li \etal \cite{multiple-sensor-lower-bound} study the case without the \textit{feedback} and show a conditional directed information lower bound and the optimality of a class of linear policy for optimizing the weighted sum lower bound.


In our paper, we study the case when the encoder has {\em partial observation} and the controller/decoder has {\em side information}. The incomplete state information at the encoder introduces additional challenges, as the optimal coordination between the communication and estimation/control is nontrivial in this setting. To tackle this challenge, we note that the control in the LQG plant and the estimation in the Gaussian-Markov source can be viewed as for an {\em observable Markov chain}, rather than a hidden one. The plant state for the {\em observable Markov chain} is the global estimation of the hidden state, i.e., the conditional mean of the hidden state given both the partial observation and the side information, and the noises that drive the {\em observable Markov chain} are also Gaussian and independent. With this observation, we show that it is sufficient to consider a class of linear policies to minimize the conditional directed information lower bound for the rate from \cite{multiple-sensor-lower-bound} with respect to a certain distortion requirement.
This result generalizes the optimal linear policy structure from the past works \cite{GMsdp,lqgsdp,LQG-sideinfo1,LQG-sideinfo2,partial-observation-GM}. We further show that the resulting optimization problem is convex in the scalar case, for both time-varying and time-invariant systems. For the time-invariant and infinite-horizon case, we have a single-letter convex optimization form. A new challenge here is to show the convergence of two coupled recursions. The main contributions of the paper can be summarized as follows:
\begin{itemize}
    \item We consider optimizing the conditional directed information lower bound of the rate within the measurable policy set for the LQG plant with partial observations to the encoder and additional side information to the controller/decoder in the general vector case. We show that it is sufficient to consider the linear encoder, which is a linear function of the conditional mean of the plant state given all observations and an independent Gaussian noise, and the certainty equivalence controller. This can be generalized to the Gaussian-Markov source case, and the conditional mean decoder replaces the certainty equivalence controller in the optimal policy set. The optimal linear policy structure recovers the results of past work \cite{GMsdp,lqgsdp,LQG-sideinfo1,LQG-sideinfo2,partial-observation-GM} for the \textit{single} encoder and \textit{single} controller/decoder, including the case with full observation and with or without side information and the case with partial observation only.
    \item We further evaluate the conditional directed information optimization problem within the optimal linear policy set for the scalar system.  
    We show that the optimization problem admits a convex form, which holds for both the LQG plant and the Gaussian-Markov source, in both time-varying and time-invariant settings. 
    \item For the time-invariant system, we consider optimizing the average of the conditional directed information with respect to the average control cost or weighted mean-square error constraint for the infinite horizon case. We show that the infinite horizon optimization problem admits a single-letter convex form, which also holds for both the LQG plant and the Gaussian-Markov source. 
\end{itemize}


\section{Problem Model}

In this section, we present the problem models. Figure \ref{fig:LQG-model} is for the LQG plant, and Figure \ref{fig:GM-model} is for the Gaussian-Markov source. In Section \ref{sec:lqg-model} and \ref{sec:GM-model}, we have the problem models for the LQG plant and the Gaussian-Markov source, respectively.

\subsection{Linear Quadratic Gaussian Plant}\label{sec:lqg-model}
Consider an LQG plant with the evolving equation:
\begin{equation}\label{eqn:hidden-process}
    x_{t+1} = A_{t}x_{t} + B_{t}u_{t}+w_{t}
\end{equation}
where $x_{t}\in\mathbb{R}^{n}$ is the plant state, $u_{t}\in\mathbb{R}^{l}$ is the control input, $A_{t}\in\mathbb{R}^{n\times n}$, $B_{t}\in\mathbb{R}^{n \times l}$ are given matrices and $w_t \sim \mathcal{N}(0,W_{t})$ is an independent Gaussian noise with $W_t\succ 0$. Consider two partial observations $z_{t}$ and $y_{t}$:
\begin{align}
    z_{t} &= C_{t}x_{t} + n_{t}\label{eqn:enc-observation}\\
    y_t &= F_t x_t + v_t\label{eqn:side-info}
\end{align}
where $C_t \in \mathbb{R}^{l_1\times n}$, $F_t\in\mathbb{R}^{l_2\times n}$, $n_t\sim \mathcal{N}(0,N_t)$ and $v_t\sim\mathcal{N}(0,V_t)$ are independent Gaussian noises with $N_t\succ 0$ and $V_t\succ 0$. In this paper, we consider the scenario as Figure \ref{fig:LQG-model} such that at each time step, an encoder observes $z_t$ and sends $c_t$ to the controller with a prefix code; a decoder/controller receives $c_t$ and side information $y_t$ and outputs a control signal $u_t$ to the plant. The feasible policy set $\Gamma$ consists of a causal encoder and a causal controller as follows:
\begin{align}
    \dP(c^{T}\Vert z^{T}) &\triangleq \prod_{t=1}^{T}\dP(c_t|c^{t-1},z^{t})\label{encoder}\\
    \dP(u^{T}\Vert c^{T},y^{T}) &\triangleq \prod_{t=1}^{T}\dP(u_t|u^{t-1},c^{t},y^{t}) \label{controller}
\end{align}
In particular, the feasible policy implies the following Markov chain condition
\begin{align}\label{markov-chain-condition}
    u_t - (u^{t-1},z^t,y^t)-x_t
\end{align}
Define the rate, the control cost, and the conditional directed information as follows:
\begin{align*}
    R &= \frac{1}{T}\sum_{t=1}^T \mathbb{E}[l(c_t)] \\
    J(x^{T+1},u^T) &= \frac{1}{T} \sum_{t=1}^{T}\mathbb{E}[\Vert x_{t+1}\Vert_{Q_t}^{2} + \Vert u_t \Vert_{R_t}^{2}]  \\
    I(x^T \rightarrow y^T \Vert z^T) &= \sum_{t=1}^T I(x^t;y_t | y^{t-1} , z^t)
\end{align*}
where $l(\cdot)$ measures the length of the codeword $c_t $ in terms of bits, $\Vert x_{t+1}\Vert_{Q_t}^{2} = x_{t+1}^{T}Q_t x_{t+1}$, $ Q_t\succ 0$ and $R_t\succ 0 $.
From \cite{multiple-sensor-lower-bound}, we can show that the following lower bound of the rate.

\begin{figure}[htbp]
    \centering
    \includegraphics[width=1\linewidth]{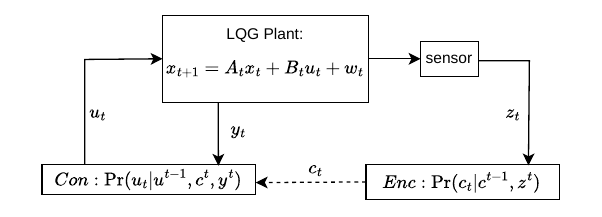}
    \caption{Problem model of the LQG plant.}
    \label{fig:LQG-model}
\end{figure}

\begin{lemma}\label{lemma:DIlowerbound} Suppose the use of prefix code, the following conditional directed information serves as a lower bound for the rate \cite{multiple-sensor-lower-bound}.
\begin{align*}
    R\geq \frac{1}{T}I(z^{T}\rightarrow u^T \Vert y^T)
\end{align*}
\end{lemma}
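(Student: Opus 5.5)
The plan is the standard chain for causal source coding with prefix codes, carried out conditionally on the side information $y^T$. First, lower-bound the expected codeword lengths by conditional entropies, and then by mutual informations. Since $c_t$ is produced by a prefix code, for every realization of the past $(c^{t-1},y^{t})$ the codebook is a prefix code. Kraft's inequality then gives
\begin{align*}
\mathbb{E}[l(c_t)] \ge H(c_t\mid c^{t-1},y^{t}) \ge I(z^{t};c_t\mid c^{t-1},y^{t}).
\end{align*}
Conditioning only reduces entropy, so the bound holds even if the code is allowed to depend on this past. Summing over $t$ gives $TR \ge \sum_{t=1}^T I(z^t;c_t\mid c^{t-1},y^t)$. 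This is the conditional directed information $I(z^T\rightarrow c^T\Vert y^T)$ in the causally conditioned sense, where $y_t$ is revealed at time $t$.

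Second, pass from $c^T$ to $u^T$ with a data-processing argument for causally conditioned directed information. By the controller factorization \eqref{controller}, $u_t$ is generated from $(u^{t-1},c^t,y^t)$ with fresh randomness independent of $z^T$ given these quantities. The target is
\begin{align*}
\sum_{t} I(z^t;u_t\mid u^{t-1},y^t) \le \sum_t I(z^t;c_t\mid c^{t-1},y^t).
\end{align*}
I would prove this by expanding $I(z^T\rightarrow u^T\Vert y^T)$ together with the chain rule on the joint process $(c_t,u_t)$. The quantity $\sum_t I(z^t;c_t,u_t\mid c^{t-1},u^{t-1},y^t)$ dominates the $u$-directed information. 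One way to see this is to note that $I(z^t;u_t\mid u^{t-1},y^t)\le I(z^t;c^t,u_t\mid u^{t-1},y^t)$ and telescope. A cleaner route is to write each term as a difference of conditional entropies $h(u_t\mid u^{t-1},y^t)-h(u_t\mid u^{t-1},y^t,z^t)$ and compare. The $u$ terms themselves contribute nothing extra: $I(z^t;u_t\mid c^t,u^{t-1},y^t)=0$ by the controller Markov structure.

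Third, the encoder structure \eqref{encoder} implies $c_t-(c^{t-1},z^t)-(z_{t+1}^T,y_{t+1}^T)$ given the past, and $y$ is exogenous to the encoder apart from its correlation through $x$. This is what is needed to identify the time-$t$ term correctly. I expect the main obstacle to be exactly this step, namely handling the side information. Because $y_t$ is correlated with $z^t$ through $x_t$, conditioning on future side-information symbols could in principle increase mutual information. The telescoping therefore has to be done with only $y^t$ in the conditioning at time $t$. It must also use the fact that the encoder does not see $y$, so that $y_{t+1}$ is conditionally independent of $c^t$ given $z^t$ and the past, which prevents the sum from being reshuffled illegitimately. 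If this direct argument becomes messy, I would instead invoke the corresponding lemma of \cite{multiple-sensor-lower-bound}, specialized to a single sensor, since the present model is the one-encoder instance of that setting. Dividing by $T$ then gives the claim.
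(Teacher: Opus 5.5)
Your Kraft step is fine, and the architecture (prefix code $\Rightarrow$ conditional entropy $\Rightarrow$ conditional mutual information $\Rightarrow$ data processing from $c$ to $u$) is the standard one behind the cited result. The paper itself gives no argument beyond the citation.

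\textbf{The data-processing step is missing, and the Markov chain you propose for it is false.} This step is the entire content of the lemma, and you only sketch it. The chain $c_t-(c^{t-1},z^t)-(z_{t+1}^T,y_{t+1}^T)$ fails in closed loop for a randomized encoder. The controller uses $c_t$ to form $u_t,u_{t+1},\dots$, and these drive $x_{t+1},x_{t+2},\dots$ and hence the future observations. What is true, and what the telescoping needs, is a one-step chain conditioned on past \emph{controls}: $c^{t-1}-(z^{t-1},y^{t-1},u^{t-1})-(z_t,y_t)$. It holds because, given $(z^{t-1},y^{t-1},u^{t-1})$, the joint density of $(x^{t-1},c^{t-1})$ factors into a plant factor and an encoder/controller factor, and $(z_t,y_t)$ is generated from $(x_{t-1},u_{t-1})$ and fresh noise. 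Set $\Phi_t=I(z^t;c^t\mid u^t,y^t)$ with $\Phi_0=0$. Expand $I(z^t;c^t,u_t\mid u^{t-1},y^t)$ in two orders and use $I(z^t;u_t\mid c^t,u^{t-1},y^t)=0$. This gives
\begin{align*}
I(z^t;u_t\mid u^{t-1},y^t)+\Phi_t &= I(z^t;c^{t-1}\mid u^{t-1},y^t)+I(z^t;c_t\mid c^{t-1},u^{t-1},y^t),\\
I(z^t;c^{t-1}\mid u^{t-1},y^t) &\le I(z^t,y_t;c^{t-1}\mid u^{t-1},y^{t-1})=\Phi_{t-1}.
\end{align*}
Summing over $t$ yields $I(z^T\rightarrow u^T\Vert y^T)\le\sum_t I(z^t;c_t\mid c^{t-1},u^{t-1},y^t)-\Phi_T$.

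\textbf{Your two intermediate quantities do not match.} The bound above has $u^{t-1}$ in the conditioning. Your Step 1 produced $\sum_t I(z^t;c_t\mid c^{t-1},y^t)$, and your stated target silently identifies the two. They differ in general: with a randomized controller, $u^{t-1}$ is not a function of $(c^{t-1},y^{t-1})$, and $u_{t-1}$ enters $z_t$. The repair is easy but must be made, in either of two ways:
\begin{itemize}
\item Condition on $u^{t-1}$ already in Step 1. The bound $\mathbb{E}[l(c_t)]\ge H(c_t\mid c^{t-1},u^{t-1},y^t)$ holds because for every realization $c_t$ lies in the prefix-free codebook indexed by $c^{t-1}$.
\item Use the encoder chain $c_t-(c^{t-1},z^t)-(u^{t-1},y^t)$ to get $I(z^t;c_t\mid c^{t-1},y^t)=I(z^t,u^{t-1};c_t\mid c^{t-1},y^t)\ge I(z^t;c_t\mid c^{t-1},u^{t-1},y^t)$.
\end{itemize}
This $u^{t-1}$-conditioned sum is exactly what the paper hands to the cited theorem in step (c) of the proof of Theorem~\ref{thm:linear-policy-opt}. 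Falling back on that citation, as you suggest, is what the paper does. But then your proposal itself establishes nothing beyond the Kraft step.
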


Let $d$ be the upper bound of control cost and $\gamma$ be a policy in $\Gamma$. We consider the following optimization problem in this paper for finite $T$.
\begin{align}\label{LQG-plain-opt-form}
    \min_{\gamma\in\Gamma, J\leq d} I(z^{T}\rightarrow u^T \Vert y^T)
\end{align}

\subsection{Gaussian-Markov Source}\label{sec:GM-model}

In this setting, we consider the Gaussian-Markov process, which can be viewed as an uncontrolled version of \eqref{eqn:hidden-process}:
\begin{align}\label{eqn:hidden-process-GM}
    x_{t+1} = A_t x_t +w_t
\end{align}

\begin{figure}[htbp]
    \centering
    \includegraphics[width=1\linewidth]{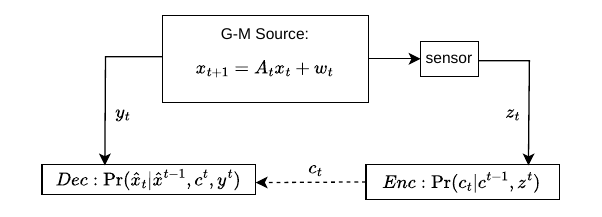}
    \caption{Problem model of the Gaussian-Markov source.}
    \label{fig:GM-model}
\end{figure}

We also have partial observations $z_t$ for the encoder and $y_t$ for the decoder as side information. $z_t$ follows \eqref{eqn:enc-observation} and $y_t$ follows \eqref{eqn:side-info}. The problem model is as Figure \ref{fig:GM-model}. We also assume the use of a prefix code, and the feasible policy set $\Lambda$ is 
\begin{align}
    \dP(c^{T}\Vert z^{T}) &\triangleq\prod_{t=1}^{T} \dP(c_t|c^{t-1},z^{t})\label{encoder-GM}\\
    \dP(\hat{x}^{T}\Vert c^{T},y^{T}) &\triangleq \prod_{t=1}^{T}\dP(\hat{x}_t|\hat{x}^{t-1},c^{t},y^{t}) \label{decoder-GM}
\end{align}
where $\hat{x}_t$ is the estimation of $x_t$ at time step $t$. We are interested in the tradeoff between the rate, which is the same as the LQG setting, and the weighted mean-square error distortion:
\begin{align*}
    \text{WMSE}(x^T , \hat{x}^T) = \frac{1}{T}\sum_{t=1}^{T}\mathbb{E}[(x_t - \hat{x}_t)^\T \Wt (x_t - \hat{x}_t)]
\end{align*}
where $\Wt\succ 0$ is the weight matrix for MSE. Similar to the LQG setting, the conditional directed information lower bound also holds, and we are interested in the following optimization problem for finite $T$ where $\lambda$ is a feasible policy in $\Lambda$.
\begin{align}\label{GM-plain-opt-form}
    \min_{\lambda\in\Lambda, \text{WMSE}\leq d} I(z^{T}\rightarrow \hat{x}^T \Vert y^T)
\end{align}

\section{Observable Markov Chain}

For this partially observable system, the key step is to identify the \textit{observable Markov chain}. The same idea is used \cite{GM-zero-delay-partial-observation} and \cite{lqgsdp} for the Gaussian-Markov source and the LQG plant with partial observation only. The same idea is also applicable to the case with partial observation and linear side information.
 We identify and show some properties of the \textit{observable Markov chain} for the LQG case. The Gaussian-Markov source case can be viewed as the LQG case with $u_t = 0$ for all $t$.
\subsection{Linear Quadratic Gaussian Plant}

We first define the information set at time step $t $ as 
\begin{align*}
    \I_t = \{y^t,z^t,u^t\}
\end{align*}
Note that the information set $\I_t$ contains all the information that the encoder and the controller have jointly about the linear process at each time step. Denote the conditional expectation of $x_t$  and $x_{t+1}$ given $\I_t$ as:
\begin{align*}
    \Bar{x}_t &\triangleq \mathbb{E}[x_t|\I_t]\\
    \Bar{x}_{t+1|t} &\triangleq \mathbb{E}[x_{t+1}|\I_t]
\end{align*}
Since the Markov chain condition \eqref{markov-chain-condition} holds, we also have
\begin{align*}
    \bar{x}_t = \mathbb{E}[x_t|y^t,z^t,u^{t-1}]
\end{align*}
Denote 
 $\Bar{y}_t = \begin{bmatrix}
    y_t \\
    z_t
\end{bmatrix}$ with $\Bar{F}_t = \begin{bmatrix}
    F_t \\
    C_t
\end{bmatrix}$ and $\Bar{v}_t = \begin{bmatrix}
    v_t \\
    n_t
\end{bmatrix}$ with covariance matrix $\Bar{V}_t$. The error covariance matrix of $x_t$ and $x_{t+1}$ given $\I_t$  are denoted as follows:
\begin{align*}
    \Bar{W}_t &= \mathbb{E}[(x_t - \Bar{x}_t)(x_t - \Bar{x}_t)^{\T}]\\
    \Bar{W}_{t+1|t} &= \mathbb{E}[(x_{t+1} - \Bar{x}_{t+1|t})(x_{t+1} - \Bar{x}_{t+1|t})^{\T}]
\end{align*}
Following the plant state evolution equation \eqref{eqn:hidden-process}, we have 
\begin{align}
    \Bar{x}_{t+1|t} = A_t \Bar{x}_t + B_t u_t
\end{align} which immediately implies that 
\begin{align}
    \Bar{W}_{t+1|t} =A_t\Bar{W}_t A_t^{\T} + W_t
\end{align}
The following update equation for $\bar{x}_t$ is true due to the Proposition \ref{prop:decomposition}:
\begin{align}\label{eqn:kalman-update}
    \Bar{x}_t = \Bar{x}_{t|t-1} + \Bar{L}_{t}(\Bar{y}_t- \Bar{F}_t \Bar{x}_{t|t-1})
\end{align}
where the Kalman gain $\Bar{L}_t$ can be computed recursively with the following equations:
\begin{align}
    \Bar{L}_t &= \Bar{W}_{t|t-1}\Bar{F}_t^{\T}(\Bar{F}_t \Bar{W}_{t|t-1}\Bar{F}_t^{\T}+\Bar{V}_t)^{-1}\label{eqn:kalman-gain1}\\
    \Bar{W}_t &= (I-\Bar{L}_t \Bar{F}_t)\Bar{W}_{t|t-1}\label{eqn:kalman-gain2}\\
    \Bar{W}_{t+1|t} &=A_t\Bar{W}_t A_t^{\T} + W_t\label{eqn:kalman-gain3}
\end{align}
Denote $\Bar{N}_t = \bar{L}_{t+1}(\bar{F}_{t+1}\bar{W}_{t+1|t}\bar{F}_{t+1}^{\T}+\bar{V}_{t+1})\bar{L}_{t+1}^{\T}$, which will be used later. Denote the estimation error of $x_t$ given $\I_t$ as:
\begin{align*}
    \bar{w}_t = x_t - \bar{x}_t
\end{align*}
From the Kalman filtering theory, we can show the following proposition.

\begin{prop} \label{prop:decomposition}
Given $\bar{y}_{t}$, the Kalman filtering update equations \eqref{eqn:kalman-update}, \eqref{eqn:kalman-gain1}, \eqref{eqn:kalman-gain2}, \eqref{eqn:kalman-gain3} are true. Moreover, $\bar{w}_t \sim\mathcal{N}(0,\bar{W}_t)$ and $\bar{w}_t$ is independent of $\I_t$.
\end{prop}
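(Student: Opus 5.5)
The plan is to reduce the controlled, partially observed system to an uncontrolled, jointly Gaussian one by the standard separation (superposition) argument. Then I apply the classical Kalman filter to that uncontrolled system and transfer the result back.

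\textbf{Superposition.} Model the randomization of the encoder \eqref{encoder} and controller \eqref{controller} by auxiliary random variables $\xi$, independent of $(x_1, w^T, n^T, v^T)$. Then $c_t$ and $u_t$ are measurable functions of $(\bar{y}^t, \xi)$, possibly also of past $c$'s and $u$'s, which are themselves such functions. Define:
\begin{itemize}
\item the control-driven part $x^u_{t+1} = A_t x^u_t + B_t u_t$ with $x^u_1 = 0$, so that $x^u_t$ is a deterministic linear function of $u^{t-1}$;
\item the uncontrolled part $x^0_t = x_t - x^u_t$, which obeys $x^0_{t+1} = A_t x^0_t + w_t$;
\item the uncontrolled observations $\bar{y}^0_t = \bar{y}_t - \bar{F}_t x^u_t = \bar{F}_t x^0_t + \bar{v}_t$.
\end{itemize}
The process $(x^0_t, \bar{y}^{0,t})$ is jointly Gaussian and does not depend on the policy. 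Moreover, $(x^0, \bar{y}^0)$ is independent of $\xi$.

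\textbf{Induction on $t$.} I will show that $\sigma(\I_t) = \sigma(\bar{y}^{0,t}, u^t)$ and that $u^t$ is a measurable function of $(\bar{y}^{0,t}, \xi)$. Suppose this holds at $t-1$. Since $x^u_t$ is determined by $u^{t-1}$, the observation $\bar{y}_t$ and $\bar{y}^0_t$ determine each other given $u^{t-1}$. Also $u_t$ is a function of $(\bar{y}^t, u^{t-1}, \xi)$, hence of $(\bar{y}^{0,t}, \xi)$.

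\textbf{Conditional law of the state.} From the induction, for any bounded $g$,
\[
\mathbb{E}[g(x^0_t) \mid \I_t] = \mathbb{E}[g(x^0_t) \mid \bar{y}^{0,t}].
\]
The reason is that conditioning on $\I_t$ adds only $u^t$, and $u^t$ carries no information beyond $\bar{y}^{0,t}$ about $x^0_t$ because $\xi$ is independent of $(x^0, \bar{y}^0)$. Consequently $\bar{x}_t = \mathbb{E}[x^0_t \mid \bar{y}^{0,t}] + x^u_t$, and
\[
\bar{w}_t = x^0_t - \mathbb{E}[x^0_t \mid \bar{y}^{0,t}].
\]
By the Gaussian conditioning theorem, $\bar{w}_t \sim \mathcal{N}(0, \bar{W}_t)$, with $\bar{W}_t$ the deterministic covariance produced by the standard Kalman recursions \eqref{eqn:kalman-gain1}--\eqref{eqn:kalman-gain3} for the uncontrolled system. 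Since $\bar{w}_t$ is jointly Gaussian with and uncorrelated with $\bar{y}^{0,t}$, it is independent of $\bar{y}^{0,t}$. It is also independent of $\xi$, because $\xi$ is independent of the whole Gaussian process and $\bar{w}_t$ is a function of that process alone. Together these give independence of $\bar{w}_t$ from $(\bar{y}^{0,t}, \xi)$, which generates a $\sigma$-field containing $\sigma(\I_t)$. Hence $\bar{w}_t$ is independent of $\I_t$.

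\textbf{Kalman update for the controlled system.} The standard Kalman update for $\mathbb{E}[x^0_t \mid \bar{y}^{0,t}]$ holds. Add $x^u_t$ to both sides and use the two identities
\[
\bar{y}^0_t - \bar{F}_t\, \mathbb{E}[x^0_t \mid \bar{y}^{0,t-1}] = \bar{y}_t - \bar{F}_t \bar{x}_{t|t-1}, \qquad \bar{x}_{t|t-1} = A_{t-1}\bar{x}_{t-1} + B_{t-1}u_{t-1}.
\]
This yields \eqref{eqn:kalman-update} with gain $\bar{L}_t$ from \eqref{eqn:kalman-gain1}. The gain and covariances are policy-independent.

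\textbf{Main obstacle.} The delicate step is the measure-theoretic claim that conditioning on $u^t$ adds nothing about $x^0_t$ beyond $\bar{y}^{0,t}$ when the policies are randomized and also involve the codewords $c^t$. I will handle it by representing all policy randomness through $\xi$, independent of the primitive Gaussian noises, and running the induction jointly for $c_t$ and $u_t$. After that, the Markov condition \eqref{markov-chain-condition} makes the identification $\bar{x}_t = \mathbb{E}[x_t \mid y^t, z^t, u^{t-1}]$ immediate.
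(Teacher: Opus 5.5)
Your proof is correct, but it takes a different route from the paper's. The paper stays inside the controlled system and inducts directly on the statement itself. Assuming $\bar w_t$ is Gaussian and independent of $\I_t$, it notes that $x_{t+1}-\bar x_{t+1|t}=A_t\bar w_t+w_t$ and the innovation $\bar y_{t+1}-\bar F_{t+1}\bar x_{t+1|t}$ are jointly Gaussian and independent of $\I_t$. It then reads off \eqref{eqn:kalman-gain1}--\eqref{eqn:kalman-gain2} and \eqref{eqn:kalman-update} from Gaussian conditioning on the innovation. Finally, it gets independence of $\bar w_{t+1}$ from $\I_{t+1}$ from orthogonality to the innovation. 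In that argument the induction hypothesis does the work your independent randomization $\xi$ does, and the Kalman filter is re-derived by innovations rather than imported.

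Your superposition argument puts all policy dependence into $x^u_t$ and $\xi$. It proves once that $x^0_t$ and $u^t$ are conditionally independent given $\bar y^{0,t}$, and then uses the classical uncontrolled filter as a black box. This makes the policy-independence of $\bar L_t$ and $\bar W_t$ obvious. It also covers two points the paper leaves implicit. First, $(w_t,\bar v_{t+1})$ is independent of $(\bar w_t,\I_t)$ under randomized policies. Second, $u_{t+1}$ can be dropped from $\I_{t+1}$: the paper's step $\dP(\bar w_{t+1}|\I_{t+1})=\dP(\bar w_{t+1}|\I_t,\bar y_{t+1})$ tacitly uses \eqref{markov-chain-condition}, which your induction derives. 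The paper's version is shorter and self-contained; yours is more careful about general randomized, nonlinear policies.

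Two small points to tighten. Independence of $\bar w_t$ from $\bar y^{0,t}$ and from $\xi$ separately does not by itself give independence from the pair. State explicitly that $\xi$ is independent of the whole Gaussian process, so the law of $(\bar w_t,\bar y^{0,t},\xi)$ factors into its three marginals. Also, the identity $\bar x_{t|t-1}=\mathbb{E}[x^0_t|\bar y^{0,t-1}]+x^u_t$ needs your conditional-independence argument applied at the prediction step, i.e. to $x^0_t$ given $\I_{t-1}$; it goes through verbatim.
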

\begin{proof}
We prove this by induction. Firstly for the base case, since $(x_1,\bar{y}_1)$ are jointly Gaussian, given $\I_1$, we have 
\begin{align*}
    \bar{x}_1 = \bar{L}_1 \bar{y}_1
\end{align*}
where $\bar{L}_1 = \bar{W}_{1|0}\bar{F}_1^{\T}(\bar{F}_1 \bar{W}_{1|0}\bar{F}^{\T}_1)^{-1}$. Similarly, the equations \eqref{eqn:kalman-gain2} and \eqref{eqn:kalman-gain3} and that $\bar{w}_1$ is independent of $\I_1$ are true.

Assume these hold for the time step $ t$. We show they also hold for $t+1$. Note that given $\I_t$, conditioning on $\bar{y}_{t+1}$ is equivalent to conditioning on $\bar{y}_{t+1} - \bar{F}_{t+1}\bar{x}_{t+1|t}$ and that $x_{t+1} - \bar{x}_{t+1|t}$ and $\bar{y}_{t+1} - \bar{F}_{t+1}\bar{x}_{t+1|t}$ are jointly Gaussian random variables and are independent of $\I_t$ by the induction hypothesis. Hence given $\I_{t+1}$, the $\bar{w}_{t+1}$ follows:
\begin{align*}
    \bar{w}_{t+1} &= x_{t+1} - \bar{x}_{t+1|t} -\mathbb{E}[x_{t+1} - \bar{x}_{t+1|t}|\I_t, \bar{y}_{t+1}]\\
    &=x_{t+1} - \bar{x}_{t+1|t} - \bar{L}_{t+1}(\bar{y}_{t+1} - \bar{F}_{t+1}\bar{x}_{t+1|t})
\end{align*}
where $\bar{L}_{t+1}$ is computed as \eqref{eqn:kalman-gain1} and $\bar{W}_{t+1}$ is therefore updated by \eqref{eqn:kalman-gain2}. 
Then we show that $\bar{w}_{t+1}$ is independent of $\I_{t+1}$. Since, by the assumption, we have that $x_{t+1} - \bar{x}_{t+1|t}$ and $\bar{y}_{t+1} - \bar{F}_{t+1}\bar{x}_{t+1|t}$ are jointly Gaussian and jointly independent of $\I_t$, and from the orthogonal principle, we have that $\bar{w}_{t+1}$ is independent of $\bar{y}_{t+1} - \bar{F}_{t+1}\bar{x}_{t+1|t}$, we have
\begin{align*}
    \dP(\bar{w}_{t+1} | \I_{t+1}) &= \dP(\bar{w}_{t+1} | \I_{t}, \bar{y}_{t+1})\\
    &= \dP(\bar{w}_{t+1} | \I_{t}, \bar{y}_{t+1}- \bar{F}_{t+1}\bar{x}_{t+1|t})\\
    &= \dP(\bar{w}_{t+1} | \bar{y}_{t+1}- \bar{F}_{t+1}\bar{x}_{t+1|t})\\
    &= \dP(\bar{w}_{t+1})
\end{align*}
Hence, $\bar{w}_{t+1}$ is independent of $\I_{t+1}$.
This completes the proof.
\end{proof}

The Proposition \ref{prop:decomposition} implies that $x_t$ can be decomposed as $\bar{x}_{t}+\bar{w}_t$ and $\bar{w}_t$ is independent o $\I_t$. Note that $c^{t} - \I_t-x_t$ is Markov, this implies that $c^{t}$ provides zero information about $\bar{w}_t$, hence the conditional expectation of $x_t$ is the same as the conditional expectation of $\bar{x}_t$:
\begin{align}
    \mathbb{E}[x_t | c^{t},y^{t}] = \mathbb{E}[\bar{x}_t | c^{t},y^{t}]
\end{align}
This implies that $\bar{w}_t$ is the part of $x_t$ that can never be estimated from the observations of the encoder and the controller.
We can finally define the \textit{observable Markov process}:
\begin{equation}\label{eqn:observable-process-LQG}
    \Bar{x}_{t+1} = A_t \Bar{x}_t +B_t u_t + \Bar{n}_t
\end{equation}
The $\bar{n}_t$s are independent Gaussian noises with zero mean and variance $\bar{N}_t$, which is a direct result of Proposition \ref{prop:decomposition}.

\subsection{Gaussian-Markov Source}
For the Gaussian-Markov source, setting $u_t = 0$, we have
\begin{align*}
    \bar{x}_t = \mathbb{E}[x_t | y^{t},z^{t}] = \mathbb{E}[x_t |\I_t]
\end{align*}
It is immediate from the orthogonal principle that $\bar{w}_t = x_t - \bar{x}_t$ is independent of $\I_t$ and is a Gaussian random variable with zero mean and variance $\bar{W}_t$. The update Kalman filter equations also follow from \eqref{eqn:kalman-update}, \eqref{eqn:kalman-gain1}, \eqref{eqn:kalman-gain2}, and \eqref{eqn:kalman-gain3}. The \textit{observable Markov process} for Gaussian-Markov source is:
\begin{equation}\label{eqn:observable-process-GM}
    \Bar{x}_{t+1} = A_t \Bar{x}_t + \Bar{n}_t
\end{equation}
where $\bar{n}_t$ follows from $\mathcal{N}(0,\bar{N}_t)$ are independent Gaussian noises as well.

\section{Linear Quadratic Gaussian Plant}

In this section, we first show that it is sufficient to consider the linear policy set $\Gamma_1$ as in Definition \ref{def:LQG-opt-policy} for the optimization problem \eqref{LQG-plain-opt-form}. Then we show that the resulting optimization problem is convex for the time-varying plant. The time-invariant plant is also considered, and we show that it can be reduced to a single-letter convex optimization form. The sufficient linear policy and the time-varying case are discussed in Section \ref{sec:LQG-TV} and the time-invariant case is discussed in \ref{sec:LQG-TI}.

\subsection{Time-Varying Plant}\label{sec:LQG-TV}
In this subsection, we consider the time-varying plant, i.e., the system parameters, like $A_{t}$, $B_{t}$, $W_{t}$, $C_{t}$, can be different at each time step. We assume a finite $T$. We first show that it is sufficient to consider the following policy set $\Gamma_1$ for optimizing \eqref{LQG-plain-opt-form}.
\begin{defn}[Policy Set $\Gamma_1$]\label{def:LQG-opt-policy}
    The policy set $\Gamma_1$ consists of linear encoders and certainty equivalence controller with control gain $K_{t}$:
    \begin{enumerate}
        \item A linear encoder:
        \begin{align}
            c_t = L'_t z^{t} + h_t
        \end{align}
        where $L'_{t} = L_t L^{*}_t$ is the product of two matrices where $\mathbb{E}[x_t|y^{t},z^{t},u^{t-1}] = L^{*}_t z^{t} +\text{linear\_func}(y^{t},u^{t-1})$ and $h_t\sim \mathcal{N}(0,H_{t})$ is white Gaussian noise independent of $(x^{t},y^{t},z^{t},u^{t-1})$ with $H_{t}\succ 0$;
        \item The certainty equivalence controller $u_t = K_t \mathbb{E}[x_t|y^t , c^t, u^{t-1}]$.
    \end{enumerate}
\end{defn}
\begin{remark}\label{remark:linear-controller}
    Since $(y^{t},u^{t-1})$ are known to the controller and $\bar{x}_t = \mathbb{E}[x_t|y^{t},z^{t},u^{t-1}]$, effectively, the codeword $c_t$ carries the same information as $L_t \bar{x}_t + h_t$ at the controller.
\end{remark}

\begin{thm}[Sufficient Linear Policy]\label{thm:linear-policy-opt}
    It is sufficient to consider the policy set $\Gamma_1$ for the optimization problem \eqref{LQG-plain-opt-form}.
\end{thm}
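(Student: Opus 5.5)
The plan is to show that for any feasible policy $\gamma\in\Gamma$ we can build a policy in $\Gamma_1$ whose control cost is no larger and whose conditional directed information is no larger. The argument has three steps.

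\textbf{Step 1: rewrite the objective.} Using the Markov structure of $\Gamma$, I rewrite each term
\begin{align*}
I(z^t;u_t\mid u^{t-1},y^t)
\end{align*}
in terms of the observable process \eqref{eqn:observable-process-LQG}. Since $\bar{x}_t$ is a deterministic (linear) function of $(y^t,z^t,u^{t-1})$, the chain rule gives the lower bound
\begin{align*}
I(z^{T}\rightarrow u^T \Vert y^T) \geq \sum_{t=1}^T I(\bar{x}_t; u_t\mid u^{t-1},y^t).
\end{align*}
For policies in $\Gamma_1$ this holds with equality. By Remark \ref{remark:linear-controller}, $u_t$ depends on $z^t$ only through $L_t\bar{x}_t+h_t$, so $z^t - (\bar{x}_t,y^t,u^{t-1}) - u_t$ is Markov.

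\textbf{Step 2: rewrite the cost.} Using Proposition \ref{prop:decomposition}, write $x_{t+1}=\bar{x}_{t+1}+\bar{w}_{t+1}$ with $\bar{w}_{t+1}$ independent of $\I_{t+1}$. The cost $J$ then equals the LQG cost of the fully observed chain \eqref{eqn:observable-process-LQG} plus the constant $\sum_t\Tr(Q_t\bar{W}_{t+1})/T$, which does not depend on the policy. The standard completion-of-squares with the Riccati recursion then gives
\begin{align*}
J=\text{const}+\frac{1}{T}\sum_t \Tr\big(\Theta_t \Cov(\bar{x}_t\mid u^{t-1},y^t,\text{controller info})\big)
\end{align*}
plus a nonnegative term that vanishes exactly when $u_t=K_t\mathbb{E}[\bar{x}_t\mid\cdot]$. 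Here $K_t$ and $\Theta_t$ come from the Riccati recursion. So for fixed encoder, the certainty equivalence controller is optimal. Moreover, $J$ depends on the encoder only through the error covariances $P_t$ of the controller's estimate of $\bar{x}_t$.

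\textbf{Step 3: Gaussian replacement.} This is the key and hardest step. Given an arbitrary policy with estimation error covariances $P_t$, I construct a Gaussian linear encoder $L_t\bar{x}_t+h_t$ that achieves the same $P_t$ and does not increase $\sum_t I(\bar{x}_t;u_t\mid u^{t-1},y^t)$.

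The first thing I would try is the argument of \cite{GMsdp,lqgsdp,multiple-sensor-lower-bound}. Condition on the controller's past $(u^{t-1},y^t)$. The innovation $\bar{n}_{t-1}$ and the side-information innovation are Gaussian and independent of the past (Proposition \ref{prop:decomposition}), so the prior of $\bar{x}_t$ given the past has a Gaussian-like covariance structure. Then apply the maximum-entropy bound
\begin{align*}
h(\bar{x}_t\mid u^t,y^t)\le \tfrac12\log\det(2\pi e P_t)
\end{align*}
to lower-bound each mutual information term by a log-det expression in $(P_t,\tilde{P}_{t|t-1})$, where $\tilde{P}$ is the covariance after incorporating the new side information $y_t$. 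The Gaussian linear policy attains this expression with equality.

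The main difficulty is the side information. The decoder's prior at time $t$ involves conditioning on $y_t$, which is correlated with $\bar{x}_t$. I need the log-det lower bound to telescope correctly across time despite that correlation. To handle it, I will use the fact that $y_t - F_t\mathbb{E}[x_t\mid\text{past}]$ is a Gaussian innovation independent of $\bar{w}_t$, so conditioning on it acts as a linear Gaussian measurement of $\bar{x}_t$. I will then invoke the corresponding step of \cite{multiple-sensor-lower-bound}, with $x_t$ replaced by $\bar{x}_t$ and $W_t$ replaced by $\bar{N}_t$.

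Combining Steps 1–3 gives a policy in $\Gamma_1$ with lower or equal objective and the same cost, which proves the theorem.
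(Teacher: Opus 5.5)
\textbf{The gap is in Step 3, which carries the whole content of the theorem.}

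Each summand is $I(\bar x_t;u_t\mid u^{t-1},y^t)=h(\bar x_t\mid u^{t-1},y^t)-h(\bar x_t\mid u^t,y^t)$. The maximum-entropy bound controls only the posterior term. The prior term $h(\bar x_t\mid u^{t-1},y^t)$ must be bounded from \emph{below} in terms of $P_{t-1}$, and maximum entropy only gives upper bounds. Under a nonlinear (e.g.\ quantizing) policy, the prior of $\bar x_t$ is a non-Gaussian mixture whose entropy can lie far below $\frac12\log\det(2\pi e\,\bar P^{\SI}_{t|t-1})$. So no per-term log-det bound exists. After telescoping, you must prove $h(\bar x_{t+1}\mid u^t,y^{t+1})-h(\bar x_t\mid u^t,y^t)\ge\frac12\log\det\bar P^{\SI}_{t+1|t}-\frac12\log\det P_t$, where $\bar P^{\SI}_{t+1|t}$ is computed from $P_t$ via $A_tP_tA_t^{\T}+\bar N_t$ and \eqref{eqn:Pbarone-full}. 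Your proposal names no tool for this.

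Deferring it to \cite{multiple-sensor-lower-bound} with $x_t\to\bar x_t$, $W_t\to\bar N_t$ does not apply verbatim. In the observable chain, the side-information noise $F_t\bar w_t+v_t$ is correlated with $\bar n_{t-1}$; in the scalar case the covariance is $F_t\bar W_t\neq 0$. So $y_t$ is not a measurement of $\bar x_t$ with independent noise. Nor is $y_t-F_t\mathbb{E}[x_t\mid\text{past}]$ a Gaussian innovation, since $x_t-\mathbb{E}[x_t\mid u^{t-1},y^{t-1}]$ is non-Gaussian under a nonlinear policy.

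What does hold is the structure behind Lemma~\ref{lemma:P-equals-G}: given $(u^t,y^t,\bar x_t)$, the pair $(\bar x_{t+1},y_{t+1})$ is affine in $(\bar x_t,u_t)$ plus jointly Gaussian noise independent of the conditioning. The paper exploits this without any entropy bounds. It compares $\mathbb{P}$ with its moment-matched Gaussian version $\mathbb{G}$. It writes $\sum_t(I_{\mathbb{P}}-I_{\mathbb{G}})$ as a sum of KL divergences whose negative terms vanish because the kernels coincide (Lemma~\ref{lemma:Gaussian-lower-bound-CDI}). It then realizes $\mathbb{G}$ by the linear policy $\mathbb{Q}$ (Lemma~\ref{lemma:G-Q-identity}), which preserves second moments and hence the cost.

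\textbf{How your entropy route could be repaired.} It needs an ingredient you have not named. Using that kernel, write $h(\bar x_{t+1}\mid u^t,y^{t+1})=h(\bar x_{t+1},y_{t+1}\mid u^t,y^t)-h(y_{t+1}\mid u^t,y^t)$. For the joint term, relative entropy to the moment-matched Gaussian cannot increase through a linear Gaussian channel. Hence a Gaussian input minimizes the entropy gain for a given input covariance. Average over the conditioning using convexity of $\Sigma\mapsto\log\det(M\Sigma M^{\T}+Z)-\log\det\Sigma$. Bound the subtracted term by maximum entropy and concavity of $\log\det$. The entropy power inequality does not give matching log-det bounds in the vector setting of the theorem.

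\textbf{Two smaller points.}

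First, $P_t$ must be $\mathbb{E}\Cov(\bar x_t\mid u^t,y^t)$ in both Step 2 and Step 3. The max-entropy bound on $h(\bar x_t\mid u^t,y^t)$ is not available with the smaller covariance given $(c^t,y^t,u^{t-1})$.

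Second, the equality you claim in Step 1 for $\Gamma_1$ is unjustified. The valid Markov chain is $c_t-(\bar x_t,c^{t-1},y^t)-z^t$, and $c^{t-1}$ need not be recoverable from $(u^{t-1},y^{t-1})$. You only need $I(z^T\rightarrow u^T\Vert y^T)\le\sum_t I(\bar x_t;c_t\mid c^{t-1},y^t)$, which is the paper's \eqref{ineq3}.
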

\begin{proof}
See Section \ref{proof:linear-policy-opt}.
\end{proof} 
\begin{remark}
    The optimal linear policy structure recovers the past results \cite{GMsdp,lqgsdp,LQG-sideinfo1,LQG-sideinfo2,partial-observation-GM}, which correspond to the cases of full observation with or without linear side information and partial observations without side information for both LQG plant and Gaussian-Markov source.
\end{remark}

From the Theorem above, we have that the certainty equivalence controller is optimal, and the optimal control gain $K_t$ can be computed via the following backward Riccati recursions.
\begin{align}
    S_{t} &= \begin{cases} 
      Q_t & \text{if } t=T \\
      Q_t + \Phi_{t+1}  & \text{if } t<T
   \end{cases}\\
   \Phi_t &= A_t^{\T}(S_t - S_tB_t(B_t^{\T}S_tB_t+R_t)^{-1}B_t^{\T}S_t)A_t\\
   K_t &= -(B_t^{\T}S_t B_t +R_t)^{-1}B_t^{\T}S_t A_t\\
   \Theta_t &= K_t^{\T}(B_t^{\T}S_t B_t+R_t)K_t
\end{align}

Next, we define some error covariance matrices and find some relationships between them from the standard Kalman filtering equations. Denote the following error covariance matrices:
\begin{align*}
    \bar{P}_{t|t} &= \Cov\left(\bar{x}_{t} - \mathbb{E}[\bar{x}_t | c^{t},y^{t}]\right)\\
    \bar{P}^{\SI}_{t+1|t}&=\Cov\left(\bar{x}_{t+1} - \mathbb{E}[\bar{x}_{t+1} | c^{t},y^{t+1}]\right)\\
    P_{t|t} &= \Cov\left(x_t - \mathbb{E}[x_t|c^{t},y^{t}]\right)\\
    P_{t+1|t}^{\SI} &= \Cov\left(x_{t+1}-\mathbb{E}[x_{t+1}|c^{t},y^{t+1}]\right)
\end{align*}
where the superscript SI indicates the error covariance after observing the side information.
Note that $x_t = \bar{x}_t + \bar{w}_t$ and $\bar{w}_t$ is independent of $(y^{t},z^{t},u^{t})$, the following relationships are true between $\bar{P}$ and $P$ due to Proposition \ref{prop:decomposition}:
\begin{align*}
    P_{t|t} &= \bar{P}_{t|t} + \bar{W}_t\\
    P_{t+1|t} &= \bar{P}_{t+1|t} +\bar{W}_{t+1}\\
    P_{t+1|t}^{\SI} &= \bar{P}_{t+1|t}^{\SI} + \bar{W}_{t+1} 
\end{align*}
These equations are useful to find the relationship between $\bar{P}_{t|t}$ and $\bar{P}_{t+1|t}^{\SI}$. The update from $P_{t+1|t}$ to $P_{t+1|t}^{\SI}$ is
\begin{align*}
    (P_{t+1|t}^{\SI})^{-1} = P_{t+1|t}^{-1} + F_{t+1}^{\T}V_{t+1}^{-1}F_{t+1}
\end{align*}
Therefore, by the equations between $\bar{P}$ and $P$, we have the following equalities:
\begin{align}\label{eqn:y1update-for-Pbar}
    (P_{t+1|t}^{\SI})^{-1} &= (\bar{P}_{t+1|t}^{\SI}+\bar{W}_{t+1})^{-1}\\
                        &= (\bar{P}_{t+1|t} + \bar{W}_{t+1})^{-1} + F_{t+1}^{\T}V_{t+1}^{-1}F_{t+1}
\end{align}
Since $\bar{P}_{t+1|t} = A_{t}\bar{P}_{t|t}A_{t}^{\T}+\bar{N}_t$, the equalities above present the update from $\bar{P}_{t|t}$ to $\bar{P}_{t+1|t}^{\SI}$.


 To present the convex optimization form of \eqref{LQG-plain-opt-form} for the scalar case, we first denote the following variables
\begin{align*}
    \alpha_{t} &= A_{t}^{2}(V_{t+1}-F_{t+1}^{2}\bar{W}_{t+1})\\
    \beta_t &= \bar{N}_t(V_{t+1} - F_{t+1}^{2}\bar{W}_{t+1})-F_{t+1}^{2}\bar{W}_{t+1}^{2}\\
    \gamma_t &= A_{t}^{2}F_{t+1}^{2}\\
    \delta_t &= V_{t+1}+F_{t+1}^{2}(\bar{N}_{t}+\bar{W}_{t+1})
\end{align*} and the function $f_{t}(\bar{P}_{t|t})$
\begin{align*}
    f_{t}(\bar{P}_{t|t}) = \log(\alpha_t  \bar{P}_{t|t}+\beta_t) - \log(\gamma_t \bar{P}_{t|t} + \delta_t)-\log\bar{P}_{t|t}
\end{align*}

\begin{thm}[Convex-TV]\label{thm:convex-time-varying}
For the scalar case, the optimization form \eqref{LQG-plain-opt-form} can be transformed to the following convex optimization
\begin{align}
    \min_{\{\bar{P}_{t|t}\}_{t\in[T]}} & \sum_{t=1}^{T-1} \frac{1}{2}f_{t}(\bar{P}_{t|t}) + \frac{1}{2}(\log\bar{P}_{1|0}^{\SI} - \log\bar{P}_{T|T})\label{convex-TV}\\
    \text{s.t. }& \sum_{t=1}^{T}\Tr(\Theta_t \bar{P}_{t|t}) + \eta_1 \leq Td\\
    & \begin{bmatrix}
       \bar{P}_{t|t-1}  - \bar{P}_{t|t}& (\bar{P}_{t|t-1}+\bar{W}_t)F_{t} \\
        F_{t}(\bar{P}_{t|t-1}+\bar{W}_t) & F_{t}^{2}(\bar{P}_{t|t-1}+\bar{W}_t)+V_t
    \end{bmatrix} \succeq 0 \\
        & \bar{P}_{t+1|t} = A_{t}\bar{P}_{t|t} A_{t}^{\T} + \bar{N}_t \\
    &\bar{P}_{t|t} > 0
\end{align}
where $\eta_1 = \sum_{t=1}^{T}[\Tr(S_t W_t) + \Tr(\Theta_t \bar{W}_t)] + \Tr(\Phi_1 P_{1|0})$
    
\end{thm}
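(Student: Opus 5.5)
By Theorem \ref{thm:linear-policy-opt} we may restrict attention to $\Gamma_1$. In the scalar case this means $c_t \equiv L_t\bar{x}_t+h_t$ at the controller (Remark \ref{remark:linear-controller}) and $u_t=K_t\mathbb{E}[x_t|y^t,c^t,u^{t-1}]$. The plan has three parts. First, rewrite the objective and the cost constraint in terms of the error covariances $\bar{P}_{t|t}$ and $\bar{P}^{\SI}_{t+1|t}$ of the observable Markov chain \eqref{eqn:observable-process-LQG}. Second, eliminate the encoder parameters $(L_t,H_t)$ in favour of $\bar{P}_{t|t}$. Third, verify convexity of the resulting problem.

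\textbf{Cost constraint.} We will use the standard completion of squares with the Riccati quantities $S_t,\Phi_t,K_t,\Theta_t$. This gives
\begin{align*}
TJ=\sum_{t}\Tr(S_tW_t)+\Tr(\Phi_1P_{1|0})+\sum_t \Tr\left(\Theta_t\,\Cov(x_t-\mathbb{E}[x_t|c^t,y^t])\right).
\end{align*}
Substituting $P_{t|t}=\bar{P}_{t|t}+\bar{W}_t$ turns this into exactly $\sum_t\Tr(\Theta_t\bar{P}_{t|t})+\eta_1\le Td$. Here we use that with a certainty-equivalence controller the control is a known function at the decoder, so $\mathbb{E}[x_t|c^t,y^t]$ agrees with the estimate conditioned additionally on $u^{t-1}$.

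\textbf{Objective.} We will expand
\begin{align*}
I(z^T\rightarrow u^T\Vert y^T)=\sum_t I(z^t;u_t|u^{t-1},y^t).
\end{align*}
Under $\Gamma_1$, $u_t$ is an invertible function of $\mathbb{E}[\bar{x}_t|c^t,y^t]$ given past data, and the encoder observation enters only through $\bar{x}_t$. This reduces each term to $I(\bar{x}_t;c_t|c^{t-1},y^t)=h(\bar{x}_t|c^{t-1},y^t)-h(\bar{x}_t|c^t,y^t)$, which equals $\tfrac12\log\bar{P}^{\SI}_{t|t-1}-\tfrac12\log\bar{P}_{t|t}$ because everything is jointly Gaussian. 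Summing over $t$ and regrouping gives the boundary terms $\tfrac12(\log\bar{P}^{\SI}_{1|0}-\log\bar{P}_{T|T})$ plus the middle terms $\tfrac12(\log\bar{P}^{\SI}_{t+1|t}-\log\bar{P}_{t|t})$ for $t\le T-1$.

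To express $\bar{P}^{\SI}_{t+1|t}$ as a function of $\bar{P}_{t|t}$, use \eqref{eqn:y1update-for-Pbar} together with $\bar{P}_{t+1|t}=A_t^2\bar{P}_{t|t}+\bar{N}_t$. Solving the scalar equation
\begin{align*}
(\bar{P}^{\SI}+\bar{W}_{t+1})^{-1}=(A_t^2\bar{P}_{t|t}+\bar{N}_t+\bar{W}_{t+1})^{-1}+F_{t+1}^2/V_{t+1}
\end{align*}
should give $\bar{P}^{\SI}_{t+1|t}=(\alpha_t\bar{P}_{t|t}+\beta_t)/(\gamma_t\bar{P}_{t|t}+\delta_t)$ after clearing denominators. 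This is the algebraic check that produces $f_t$.

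\textbf{Eliminating $(L_t,H_t)$.} Choosing $(L_t,H_t)$ freely realizes exactly those $\bar{P}_{t|t}$ with $0<\bar{P}_{t|t}\le\bar{P}^{\SI}_{t|t-1}$. The $y_t$ side-information update enters through the Schur complement. By \eqref{eqn:y1update-for-Pbar}, with the prior $\bar{P}_{t|t-1}+\bar{W}_t$, the condition $\bar{P}_{t|t}\le\bar{P}^{\SI}_{t|t-1}$ is equivalent to
\begin{align*}
\bar{P}_{t|t-1}-\bar{P}_{t|t}-(\bar{P}_{t|t-1}+\bar{W}_t)^2F_t^2\big(F_t^2(\bar{P}_{t|t-1}+\bar{W}_t)+V_t\big)^{-1}\ge 0,
\end{align*}
which is the stated LMI. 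Conversely, any such $\bar{P}_{t|t}$ is achieved by a suitable $H_t$ (or $L_t=0$ at equality). This gives the equivalence of the two problems.

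\textbf{Main obstacle: convexity of the objective.} The constraints are linear or LMIs in $\{\bar{P}_{t|t},\bar{P}_{t|t-1}\}$ with the equality constraint affine, so they are convex. The difficulty is showing each $f_t$ is convex on $\bar{P}>0$. We have $-\log\bar{P}$ convex, and $\log\frac{\alpha p+\beta}{\gamma p+\delta}$ is a log of a Möbius function, so its second derivative is
\begin{align*}
-\frac{\alpha^2}{(\alpha p+\beta)^2}+\frac{\gamma^2}{(\gamma p+\delta)^2}.
\end{align*}
I would compute $f_t''$ directly and show it is nonnegative by combining with $1/p^2$. For that, use the structure that $\bar{P}^{\SI}$ is increasing and concave in $p$ (monotone information update), and the fact $\alpha p+\beta>0$ (it is proportional to $\bar{P}^{\SI}$). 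Alternatively, one can write $\log\bar{P}^{\SI}_{t+1|t}-\log\bar{P}_{t|t}$ as
\begin{align*}
\log\Big(\frac{\alpha}{\gamma p+\delta}+\frac{\beta}{p(\gamma p+\delta)}\Big)
\end{align*}
and prove that this is a log of a sum of log-convex functions of $p$, hence convex. I would try this second route first, checking signs of $\alpha,\beta$ using $V_{t+1}-F_{t+1}^2\bar{W}_{t+1}>0$, which follows from the Kalman relations.
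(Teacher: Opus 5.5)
Your outline matches the paper's proof everywhere except the convexity step. The shared parts are:
\begin{itemize}
\item the objective written as $\sum_t\tfrac12(\log\bar P^{\SI}_{t|t-1}-\log\bar P_{t|t})$ and regrouped into $f_t$ plus boundary terms;
\item the M\"obius form $\bar P^{\SI}_{t+1|t}=(\alpha_t p+\beta_t)/(\gamma_t p+\delta_t)$ with $p=\bar P_{t|t}$, obtained from \eqref{eqn:y1update-for-Pbar}, where your algebra is correct;
\item the Schur-complement LMI for $\bar P_{t|t}\le\bar P^{\SI}_{t|t-1}$;
\item the certainty-equivalence cost with $P_{t|t}=\bar P_{t|t}+\bar W_t$.
\end{itemize}

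For convexity, the paper differentiates twice and pairs $1/p^2$ with $-\alpha_t^2/(\alpha_t p+\beta_t)^2$. This gives $f_t''=\frac{2\alpha_t\beta_t p+\beta_t^2}{p^2(\alpha_t p+\beta_t)^2}+\frac{\gamma_t^2}{(\gamma_t p+\delta_t)^2}$. Your preferred route writes $f_t=\log\bigl(\frac{\alpha_t}{\gamma_t p+\delta_t}+\frac{\beta_t}{p(\gamma_t p+\delta_t)}\bigr)$ as the log of a sum of two log-convex functions. It is valid and somewhat cleaner: it needs no differentiation and shows that only $\alpha_t,\beta_t,\gamma_t\ge 0$ and $\delta_t>0$ are used. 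Drop the heuristic in your first route. Concavity of $\bar P^{\SI}$ in $p$ makes $\log\bar P^{\SI}$ concave, so it works against you; convexity comes entirely from the $-\log p$ term.

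The one piece you leave open is the sign of $\beta_t$, and the fact you propose to use does not settle it. The inequality $V_{t+1}-F_{t+1}^2\bar W_{t+1}>0$ gives $\alpha_t>0$. But $\beta_t=\bar N_t(V_{t+1}-F_{t+1}^2\bar W_{t+1})-F_{t+1}^2\bar W_{t+1}^2$ still contains a negative term. The paper closes this with $\bar N_t=\bar W_{t+1|t}-\bar W_{t+1}$ and the full scalar update $\bar W_{t+1}^{-1}=\bar W_{t+1|t}^{-1}+F_{t+1}^2/V_{t+1}+C_{t+1}^2/N_{t+1}$. Together these give
\[
\beta_t=\frac{\bar W_{t+1|t}^{2}}{1+\bigl(F_{t+1}^{2}/V_{t+1}+C_{t+1}^{2}/N_{t+1}\bigr)\bar W_{t+1|t}}\cdot\frac{C_{t+1}^{2}V_{t+1}}{N_{t+1}}\;\ge\;0 .
\]
A quicker argument in your own terms: $\beta_t/\delta_t$ is $\bar P^{\SI}_{t+1|t}$ evaluated at $\bar P_{t|t}=0$. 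That value equals $(\bar W_{t+1|t}^{-1}+F_{t+1}^2/V_{t+1})^{-1}-\bar W_{t+1}$, which is nonnegative because $\bar W_{t+1}$ additionally uses $z_{t+1}$.

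Separately, your termwise identity $I(z^t;u_t|u^{t-1},y^t)=I(\bar x_t;c_t|c^{t-1},y^t)$ requires $K_t\neq 0$. In general only $I(z^T\rightarrow u^T\Vert y^T)\le\sum_t I(\bar x_t;c_t|c^{t-1},y^t)$ holds. This is harmless, because inequalities \eqref{ineq1}--\eqref{ineq3} already identify the optimal value with the minimum of the right-hand side over $\Gamma_1$, which is exactly what the convex program encodes.
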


\begin{proof}
    See Section \ref{proof:convex-TV}.
\end{proof}

Note that the effective codeword $c_t = L_t \bar{x}_t + h_t$ can be found by choosing $L_t$ and $H_t$ such that
\begin{align*}
    \bar{P}_{t|t}^{-1} - (\bar{P}_{t|t-1}^{\SI})^{-1} = L_{t}^{2}H_t^{-1}
\end{align*}

\subsection{Time-Invariant Plant}\label{sec:LQG-TI}

In this section, we consider the time-invariant plant, i.e., the system parameters, like $A_t = A$, $B_t = B$, $W_t = W$, $C_t = C$ are all the same across time steps. We assume $T$ goes to infinity, and we consider the average control cost requirement. We assume that $(A,B)$ is stabilizable. Since $Q\succ 0$ and $R\succ 0$, the control Riccati recursion converges, and hence the optimal control gain matrix converges to its steady-state value. The optimization problem \eqref{LQG-plain-opt-form} is replaced by the following:
\begin{align}\label{LQG-plain-opt-form-TI}
    \min_{\gamma\in\Gamma, J\leq d} \limsup_{T\rightarrow \infty}\frac{1}{T}I(z^{T}\rightarrow u^T \Vert y^T)
\end{align}
The optimal control gain matrix $K$ in the infinite-horizon case can be computed by first finding the unique solution to the following Riccati equation:
\begin{align}
    A^{\T}SA-S-A^{\T}SB(B^{\T}SB+R)^{-1}B^{\T}SA+Q = 0
\end{align}
Then the control gain matrix $K = -(B^{\T}SB+R)^{-1}B^{\T}SA$. Set $\Theta = K^{\T}(B^{\T}SB+R)K$.

Assume that $(A,\bar{F})$ is detectable. Since $W\succ 0$, the error covariance matrix of estimating $x_t$ given $\I_t$, the $\bar{W}_t$ converges to the solution of the following discrete algebraic Riccati equation by \cite{kailath2000linear}:
\begin{align}
    \bar{W} = \bar{W}^{+} -\bar{W}^{+}\bar{F}^{\T}(\bar{V}+\bar{F}\bar{W}^{+}\bar{F}^{\T})^{-1}\bar{F}\bar{W}^{+}
\end{align}
where $\bar{W}^{+} = A\bar{W}A^{\T}+W$. Since $\bar{N}_t = \bar{W}_{t+1|t} - \bar{W}_{t+1}$, by \eqref{eqn:kalman-gain1} and \eqref{eqn:kalman-gain2}, $\bar{N}_t$ converges to $\bar{N}\triangleq\bar{W}^{+} - \bar{W}$. Immediately, $\alpha_t$, $\beta_t$, $\gamma_t$ and $\delta_t$ converge to the following terms correspondingly:
\begin{align*}
    \alpha &= A^{2}(V-F^{2}\bar{W})\\
    \beta &= \bar{N}(V-F^{2}\bar{W}) - F^{2}\bar{W}^{2}\\
    \gamma &= A^{2}F^{2} \\
    \delta &= V+F^{2}(\bar{N}+\bar{W})
\end{align*}
These are all positive parameters following the proof in \ref{proof:convex-TV}.
Denote the function $f(\bar{P}_{t|t}) $ as:
\begin{align*}
    f(\bar{P}_{t|t}) = \log(\alpha  \bar{P}_{t|t}+\beta) - \log(\gamma \bar{P}_{t|t} + \delta)-\log\bar{P}_{t|t}
\end{align*} 
Then we have the following single-letter convex optimization form for \eqref{LQG-plain-opt-form-TI}.

\begin{thm}[Convex-TI]\label{thm:convex-TI}
    The optimization problem \eqref{LQG-plain-opt-form-TI} can be transformed to the following single-letter optimization problem:
    \begin{align}
    \min_{\bar{P}> 0} & \frac{1}{2}\left[\log(\alpha  \bar{P}+\beta) - \log(\gamma \bar{P} + \delta)-\log\bar{P}\right] \\
    \text{s.t. }&  \Tr(\Theta \bar{P})+\Tr(S W)+ \Tr(\Theta \bar{W}) \leq d \\
    & \begin{bmatrix}
        \bar{P}^{+}-\bar{P} & (\bar{P}^{+}+\bar{W})F\\
        F(\bar{P}^{+}+\bar{W}) & F^{2}(\bar{P}^{+}+\bar{W})+V
    \end{bmatrix} \succeq 0 \\
        & \bar{P}^{+} = AP A^{\T} + \bar{N} 
\end{align}
\end{thm}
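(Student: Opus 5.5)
The plan is to reduce the infinite-horizon problem \eqref{LQG-plain-opt-form-TI} to a limit of the finite-horizon convex programs of Theorem~\ref{thm:convex-time-varying}. Theorem~\ref{thm:linear-policy-opt} applies for every finite $T$, so we may restrict to $\Gamma_1$. In the time-invariant setting, $S_t\to S$, $K_t\to K$, $\Theta_t\to\Theta$ (stabilizability, $Q,R\succ 0$) and $\bar{W}_t\to\bar{W}$, $\bar{N}_t\to\bar{N}$ (detectability, $W\succ 0$). Hence $\alpha_t,\beta_t,\gamma_t,\delta_t$ converge to the positive constants $\alpha,\beta,\gamma,\delta$, and $f_t\to f$ uniformly on compact subsets of $(0,\infty)$. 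Dividing the objective \eqref{convex-TV} by $T$, the boundary term $\frac{1}{2T}(\log\bar{P}^{\SI}_{1|0}-\log\bar{P}_{T|T})$ vanishes as long as $\bar{P}_{T|T}$ stays bounded away from $0$, which is enforced by the cost constraint and the LMI. The averaged cost constraint becomes $\frac1T\sum_t\Tr(\Theta_t\bar P_{t|t})+\frac{\eta_1}{T}\le d$, and $\eta_1/T\to \Tr(SW)+\Tr(\Theta\bar W)$. The $\Phi_1P_{1|0}$ term is $O(1)$ and disappears after dividing by $T$.

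\textbf{Achievability (upper bound).} Take any feasible $\bar P>0$ of the single-letter problem, with $\bar P^+=A^2\bar P+\bar N$ satisfying the LMI. The LMI is, by a Schur complement, equivalent to $\bar P\ge$ the side-information update of $\bar P^+$ given by \eqref{eqn:y1update-for-Pbar}, i.e.\ $\bar P\le \bar P^{\SI}$, so encoder information is nonnegative. Choose the stationary linear policy with $L^2H^{-1}=\bar P^{-1}-(\bar P^{\SI})^{-1}$. The resulting error recursion $\bar P_{t|t}$ is a scalar Riccati-type map, a composition of monotone increasing concave maps, whose fixed point is $\bar P$. Starting from any initial value, it converges to $\bar P$. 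Therefore the time-averaged objective converges to $\frac12 f(\bar P)$ and the averaged cost converges to the left side of the single-letter constraint. This gives the upper bound.

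\textbf{Converse (lower bound).} This is the main obstacle, since it involves the coupled recursions $\bar P_{t|t}\mapsto\bar P_{t+1|t}\mapsto\bar P^{\SI}_{t+1|t}\mapsto \bar P_{t+1|t+1}$ with time-varying coefficients. Take any sequence feasible for the finite-$T$ problems and form the Cesàro average $\tilde P_T=\frac1T\sum_t \bar P_{t|t}$. Convexity of $f$ (established in the proof of Theorem~\ref{thm:convex-time-varying}) with Jensen gives $\frac1T\sum_t f(\bar P_{t|t})\ge f(\tilde P_T)$. Linearity gives the averaged cost. For the LMI, the constraint set $\{(\bar P_{t|t},\bar P_{t|t-1})\}$ is jointly convex, being an LMI in both variables with the linear equality $\bar P_{t+1|t}=A^2\bar P_{t|t}+\bar N$. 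Averaging the constraints over $t$ shows that $(\tilde P_T, A^2\tilde P_T+\bar N)$ satisfies the LMI up to an $O(1/T)$ index-shift error and an error from $\bar W_t,\bar N_t,F$-coefficients not yet at their limits. Both errors vanish as $T\to\infty$. Replacing $f_t$ by $f$ costs $o(1)$ by uniform convergence on the relevant compact set. To obtain that compact set, bound $\bar P_{t|t}$ above by the cost constraint (if $\Theta>0$; otherwise by the no-communication covariance, which is bounded under detectability) and below away from zero on average through the objective. Passing to a convergent subsequence of $\tilde P_T$, continuity then gives $\limsup\frac1T I\ge \frac12 f(\bar P^*)$ for some feasible $\bar P^*$, which matches the achievability bound. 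Convexity of the single-letter problem is inherited from the time-varying one, with $f$ convex and the constraints an LMI plus a linear inequality.
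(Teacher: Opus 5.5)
Your overall plan is the paper's. You restrict to $\Gamma_1$. For the converse you Ces\`aro-average $\bar P_{t|t}$, apply Jensen to the convex $f$, average the affine LMIs up to $O(1/T)$ and coefficient-convergence errors, and pass to a feasible limit point. For achievability you use a stationary $(L,H)$ with $L^2H^{-1}=\bar P^{-1}-(\bar P^{\SI})^{-1}$.

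The step that does not go through as written is the convergence claim in achievability, which is exactly where the paper does its real work (Lemma~\ref{lemma:convergence-recursion}). Under your stationary $(L,H)$ the recursion is not autonomous. The prediction step is $\bar P_{t+1|t}=A^2\bar P_{t|t}+\bar N_t$, the side-information update uses $\bar W_{t+1}$, and $\bar W_t,\bar N_t$ are fixed by the given prior $P_{1|0}$ and only converge to $\bar W,\bar N$. With $c_3=L^2/H$ the update is $\bar P_{t+1|t+1}=g_t(\bar P_{t|t})$ with $g_t(x)=\frac{\alpha_t x+\beta_t}{(\gamma_t+c_3\alpha_t)x+\delta_t+c_3\beta_t}$, a different map at each $t$. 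Your ``composition of increasing concave maps with fixed point $\bar P$'' describes only the limit map $g$. Even for $g$, convergence from any initial value needs uniqueness of the positive fixed point, since an increasing concave map can have two fixed points.

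Both missing pieces are short:
\begin{itemize}
\item Uniqueness: $g(x)-x$ is concave and $g(0)=\beta/(\delta+c_3\beta)>0$. So $\bar P$ is the only positive fixed point, with $g(x)>x$ below it and $g(x)<x$ above it.
\item Perturbation: each $g_t$ is increasing, since $\alpha_t\delta_t-\beta_t\gamma_t=A^2V^2>0$. On $x\ge 0$ it takes values in $[\beta_t/(\delta_t+c_3\beta_t),\,\alpha_t/(\gamma_t+c_3\alpha_t))$, whose endpoints converge to positive limits. Moreover $g_t\to g$ pointwise.
\item Conclusion: $a=\liminf\bar P_{t|t}$ and $b=\limsup\bar P_{t|t}$ lie in $(0,\infty)$ and satisfy $a\ge g(a)$ and $b\le g(b)$. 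This forces $a=b=\bar P$.
\end{itemize}
The paper runs the same liminf/limsup argument in the variable $P_{t|t}=\bar P_{t|t}+\bar W_t$. Relatedly, you place the ``main obstacle'' in the converse, but after averaging there is no recursion to control there. The coupled time-varying recursion is the difficulty on the achievability side.

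Two smaller corrections. First, your justification for dropping the boundary term is inverted. Neither the cost constraint nor the LMI keeps $\bar P_{T|T}$ away from $0$; a large final transmission makes it tiny. The converse does not need such a bound, because a small $\bar P_{T|T}$ only increases $-\frac{1}{2T}\log\bar P_{T|T}$. What you need is $\limsup_T\frac1T\log\bar P_{T|T}\le 0$, i.e.\ an upper bound, which the constraints do supply: for $F\neq 0$, $\bar P_{T|T}\le\bar P^{\SI}_{T|T-1}\le V/F^2$.

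Second, replacing $f_t$ by $f$ needs no compact set and no lower bound on $\bar P_{t|t}$. Your ``bounded away from zero on average'' would not control the individual terms anyway. Because all coefficients and their limits are positive, $|f_t(x)-f(x)|\le|\log\frac{\alpha_t}{\alpha}|+|\log\frac{\beta_t}{\beta}|+|\log\frac{\gamma_t}{\gamma}|+|\log\frac{\delta_t}{\delta}|$ uniformly in $x>0$. This is Lemma~\ref{lemma:f-converges}.
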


\begin{proof}
    See Section \ref{proof:convex-TI}.
\end{proof}

We end this section with the closed-form solution of the above single-letter convex optimization problem.

\begin{corollary}\label{col:explicit-solution-TI}
    The optimal value of the convex optimization in Theorem \ref{thm:convex-TI} is
    \begin{align*}
        &\frac{1}{2}\log\left(A^{2}(1-F^{2}V^{-1}\bar{W})+\eta_{0}\right)\\
        -&\frac{1}{2}\log\left(A^{2}F^{2}V^{-1}\frac{d-SW}{\Theta}+1+F^{2}V^{-1}W\right)
    \end{align*}
    when $\Theta \bar{W} +SW < d \leq \Theta \bar{W } + SW + \Theta \bar{P}^{*}$, where 
    \begin{align*}
        \eta_0 = \frac{\bar{N}(1-F^{2}V^{-1}\bar{W})-F^{2}V^{-1}\bar{W}}{\frac{d-SW}{\Theta}-\bar{W}}
    \end{align*}
    The optimal value is 0 when $d> \Theta \bar{W } + SW + \Theta \bar{P}^{*}$. The $\bar{P}^{*}$ is the positive solution to the following polynomial
    \begin{align*}
        -(\bar{P}+\bar{W})^{2} F^{2}A^{2} + (\bar{P}+\bar{W})\left[(A^{2}-1)V-F^{2} W \right]&\\
        +VW &= 0
    \end{align*}
\end{corollary}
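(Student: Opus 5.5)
The plan is to solve the scalar problem in Theorem \ref{thm:convex-TI} directly. First I reduce both constraints to upper bounds on the single variable $\bar{P}$. Then I show the objective is nonincreasing in $\bar{P}$, so the optimum sits at the smaller of the two upper bounds.

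\emph{Step 1 (the LMI as an upper bound).} Write $\bar{P}^{\SI}(\bar{P})$ for the side-information update. By \eqref{eqn:y1update-for-Pbar} in the time-invariant scalar form, it is defined through
\begin{align*}
(\bar{P}^{\SI}+\bar{W})^{-1}=(A^2\bar{P}+\bar{N}+\bar{W})^{-1}+F^2V^{-1}.
\end{align*}
Take the Schur complement of the $2\times 2$ LMI with respect to its $(2,2)$ entry $F^2(\bar{P}^{+}+\bar{W})+V>0$. This shows the LMI is equivalent to $\bar{P}\le \bar{P}^{+}-\frac{F^2(\bar{P}^++\bar{W})^2}{F^2(\bar{P}^++\bar{W})+V}$, and the right-hand side is exactly $\bar{P}^{\SI}(\bar{P})$ with $\bar{P}^+=A^2\bar{P}+\bar{N}$. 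The cost constraint is simply $\bar{P}\le D-\bar{W}$ with $D\triangleq (d-SW)/\Theta$. It is nonempty only if $d>\Theta\bar{W}+SW$, which gives the lower end of the stated range.

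\emph{Step 2 (monotonicity of the objective).} I plan to show that, up to an additive constant, the objective equals $\frac12\log\big(\bar{P}^{\SI}(\bar{P})/\bar{P}\big)$. Indeed, expanding $\bar{P}^{\SI}=\frac{V(A^2\bar{P}+\bar{N}+\bar{W})}{V+F^2(A^2\bar{P}+\bar{N}+\bar{W})}-\bar{W}$ yields $\frac{\alpha\bar{P}+\beta}{\gamma\bar{P}+\delta}$ after using $\bar{N}+\bar{W}=A^2\bar{W}+W$. The map $\bar{P}\mapsto\bar{P}^{\SI}(\bar{P})$ is a linear-fractional, concave, increasing function with $\bar{P}^{\SI}(0)>0$, because $\beta,\delta>0$ as noted before the theorem. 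Hence $\bar{P}^{\SI}(\bar{P})/\bar{P}$ is strictly decreasing. So the objective decreases in $\bar{P}$, and the optimum is $\bar{P}_{\rm opt}=\min\{D-\bar{W},\bar{P}^*\}$, where $\bar{P}^*$ is the largest $\bar{P}$ with $\bar{P}\le\bar{P}^{\SI}(\bar{P})$, i.e.\ the fixed point. The main point to check carefully is this step: the positivity of $\beta$ and the exact reduction of the fraction. If the concavity argument is awkward, I will instead differentiate $f$ and show that the numerator $-(\alpha\bar{P}+\beta)(\gamma\bar{P}+\delta)+\bar{P}(\alpha\delta-\beta\gamma)$ equals $-(\gamma\alpha\bar{P}^2+2\beta\gamma\bar{P}+\beta\delta)<0$.

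\emph{Step 3 (the two regimes and the closed form).} At the fixed point the objective equals $\frac12\log 1=0$. Substituting $Q=\bar{P}+\bar{W}$ into $Q^{-1}=(A^2Q+W)^{-1}+F^2V^{-1}$ and clearing denominators gives
\begin{align*}
-F^2A^2Q^2+Q\big[(A^2-1)V-F^2W\big]+VW=0,
\end{align*}
which is the stated polynomial; its positive root is unique since the product of the roots is negative. Therefore, if $d>\Theta\bar{W}+SW+\Theta\bar{P}^*$, the minimum is $0$. Otherwise $\bar{P}_{\rm opt}=D-\bar{W}$. Dividing the numerator and denominator of $\frac{\alpha\bar{P}+\beta}{\bar{P}(\gamma\bar{P}+\delta)}$ by $V\bar{P}$ and using $\delta=V+F^2(A^2\bar{W}+W)$ turns the denominator into $A^2F^2V^{-1}D+1+F^2V^{-1}W$. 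The numerator becomes $A^2(1-F^2V^{-1}\bar{W})+\eta_0$ with $\eta_0=\beta/(V(D-\bar{W}))$, which is the stated expression.
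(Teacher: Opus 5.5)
Your proof is correct and has the same skeleton as the paper's. Both show the objective is strictly decreasing in $\bar P$, that the feasible set is $(0,\min\{D-\bar W,\bar P^{*}\}]$, and then evaluate at the right endpoint; your fallback derivative computation is exactly the paper's.

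\textbf{Where you differ: the LMI.} The paper reads it as ``nonnegative diagonal entries and nonnegative determinant'' $g(\bar P)\ge 0$, and uses $g(0)=\beta>0$ to get $\bar P\le\bar P^{*}$. It then needs a separate argument for $|A|<1$ (evaluating $g$ at $-\bar N/(A^{2}-1)$) to show the $(1,1)$-entry constraint is inactive. It also leaves implicit that its $g$ equals the stated polynomial in $\bar P+\bar W$, which requires $\bar N+\bar W=A^{2}\bar W+W$. Finally, it only asserts that the objective vanishes at $\bar P^{*}$.

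Your Schur complement with respect to the strictly positive $(2,2)$ entry makes the LMI exactly equivalent to $\bar P\le\bar P^{\SI}(\bar P)$. This buys three things:
\begin{itemize}
\item the diagonal case analysis becomes unnecessary, since it is implied;
\item the stated polynomial comes straight out of the fixed-point equation;
\item the zero value is just $\tfrac12\log 1$, because the objective is literally $\tfrac12\log\bigl(\bar P^{\SI}(\bar P)/\bar P\bigr)$.
\end{itemize}
The two views are linked by $(\gamma\bar P+\delta)\bigl(\bar P^{\SI}(\bar P)-\bar P\bigr)=g(\bar P)$, a downward parabola with $g(0)=\beta>0$. Citing this is the quickest way to make airtight that the LMI-feasible set is exactly $(0,\bar P^{*}]$, and that the positive root in $Q$ corresponds to a positive $\bar P^{*}$. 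For your concavity and monotonicity claim, note that $\alpha\delta-\beta\gamma=A^{2}V^{2}>0$.

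\textbf{One correction to Step 3.} We have $\beta/V=\bar N(1-F^{2}V^{-1}\bar W)-F^{2}V^{-1}\bar W^{2}$. So the $\eta_0=\beta/(V(D-\bar W))$ you derive has $\bar W^{2}$ in its last term, whereas the statement has $\bar W$; the paper's own proof makes the same identification. Your expression is the correct one, and the stated $\eta_0$ appears to contain a typo (it is not even homogeneous in the variance scale). You should not claim verbatim agreement with it.
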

\begin{proof}
    See Section \ref{proof:explicit-sol-TI}.
\end{proof}

\begin{remark}
    In the case of full observation and side information, we have $\bar{W} = 0$ and hence $\bar{N} = W$. Substituting these values recovers the result in \cite{LQG-sideinfo1} for explicit solution.
\end{remark}

\section{Gaussian-Markov Process}
In this section, we consider the Gaussian-Markov source with the weighted mean-square error distortion. Similar to the LQG plant, it is also sufficient to consider a linear policy set for the conditional directed information optimization problem. And the convex optimization forms also hold for both the time-varying case and the time-invariant case. The only difference is that the control cost constraint is replaced by the weighted mean-square error distortion constraint.

The optimal linear policy set for the optimization problem \eqref{GM-plain-opt-form} is denoted as the policy set $\Lambda_1$:
\begin{defn}[Policy Set $\Lambda_1$]\label{def:GM-opt-policy}
    The policy set $\Lambda_1$ consists of linear encoders and the conditional mean decoder:
    \begin{enumerate}
         \item A linear encoder:
        \begin{align}
            c_t = L'_{t} z^{t} + h_t
        \end{align}
        where $L'_{t} = L_t L^{*}_t$ is the product of two matrices where $\mathbb{E}[x_t|y^{t},z^{t}] = L^{*}_t z^{t} +\text{linear\_func}(y^{t})$ and $h_t\sim \mathcal{N}(0,H_{t})$ is white Gaussian noise independent of $(x^{t},y^{t},z^{t})$ with $H_{t}\succ 0$;
        \item The conditional mean decoder $
        \hat{x}_t = \mathbb{E}[x_t|y^t , c^t]$.
    \end{enumerate}
\end{defn}
\begin{remark}
    Since $y^{t}$ are known to the decoder and $\bar{x}_t = \mathbb{E}[x_t|y^{t},z^{t}]$, effectively, the codeword $c_t$ carries the same information as $L_t \bar{x}_t + h_t$ at the decoder.
\end{remark}
Following the proof of Theorem \ref{thm:linear-policy-opt}, we have the following Corollary for the optimal linear policy of the Gaussian-Markov source.
\begin{corollary}[Optimal Linear Policy for GM]
It is sufficient to consider the linear policy set $\Lambda_1$ for the optimization problem \eqref{GM-plain-opt-form}.
\end{corollary}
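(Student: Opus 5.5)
The plan is to mirror the argument for Theorem \ref{thm:linear-policy-opt}, with $u_t$ replaced by $\hat{x}_t$ and the control cost replaced by the WMSE. Fix any feasible $\lambda\in\Lambda$ with $\text{WMSE}\le d$. We will build a policy in $\Lambda_1$ whose WMSE is no larger and whose conditional directed information $I(z^T\rightarrow \hat{x}^T\Vert y^T)$ is no larger.

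\emph{Step 1: reduce the distortion to the observable Markov process \eqref{eqn:observable-process-GM}.} Since $\hat{x}^t$ is generated from $(c^t,y^t)$, we have the Markov chain $\hat{x}^t-(y^t,z^t)-x_t$. Together with the decomposition $x_t=\bar{x}_t+\bar{w}_t$, where $\bar{w}_t$ is independent of $(y^t,z^t)$, this gives $\mathbb{E}[(x_t-\hat{x}_t)^\T\Wt(x_t-\hat{x}_t)] = \mathbb{E}[(\bar{x}_t-\hat{x}_t)^\T\Wt(\bar{x}_t-\hat{x}_t)]+\Tr(\Wt\bar{W}_t)$. The distortion is therefore minimized by the conditional mean decoder $\hat{x}_t=\mathbb{E}[\bar{x}_t|c^t,y^t]=\mathbb{E}[x_t|c^t,y^t]$. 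Replacing the decoder by this one does not increase the WMSE. By the data-processing inequality applied termwise, each term $I(z^t;\hat{x}_t|\hat{x}^{t-1},y^t)$ does not increase either, since $\hat{x}_t$ is a function of the decoder's information.

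\emph{Step 2: rewrite the objective in terms of $\bar{x}$.} By the chain rule,
\begin{align*}
I(z^T\rightarrow \hat{x}^T\Vert y^T)=\sum_t I(z^t;\hat{x}_t|\hat{x}^{t-1},y^t) \ge \sum_t I(\bar{x}_t;\hat{x}_t|\hat{x}^{t-1},y^t).
\end{align*}
The inequality holds because $\bar{x}_t$ is a linear function of $(z^t,y^t)$ and $y^t$ is in the conditioning. Equality holds for policies in $\Lambda_1$, because there the channel input effectively depends only on $\bar{x}_t$ (Remark after Definition \ref{def:GM-opt-policy}).

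\emph{Step 3: Gaussianization.} Given an arbitrary policy, form the jointly Gaussian process $(\bar{x}^G,y^G,\hat{x}^G)$ with the same second-order statistics. The WMSE is unchanged, since it depends only on covariances. The right-hand side of Step 2 does not increase, because the Gaussian process maximizes conditional entropy for fixed covariance and $\bar{x}_t$ is Gaussian driven by independent $\bar{n}_t$. This is the standard argument used in \cite{GMsdp,LQG-sideinfo1,multiple-sensor-lower-bound}, extended to side information and to the dynamics \eqref{eqn:observable-process-GM}.

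Then show that any jointly Gaussian causal kernel can be realized as $c_t=L_t\bar{x}_t+h_t$ with independent Gaussian $h_t$, followed by the conditional mean decoder. Innovations with respect to $(c^{t-1},y^t)$ reduce a general linear dependence on $\bar{x}^t$ to dependence on $\bar{x}_t$ alone without loss: because $\bar{x}$ is Markov, past $\bar{x}_s$ carry no extra information once the estimate is sufficient. Finally, express $\bar{x}_t$ as $L^*_t z^t+\text{linear\_func}(y^t)$, which yields the encoder in $\Lambda_1$.

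The main obstacle is Step 3. Showing the Gaussian policy does not increase the conditional directed information requires care, because the side information $y_t$ is correlated with $\bar{x}_t$ through $x_t$ rather than through $\bar{x}_t$ alone. I would handle this by conditioning on $y^t$ throughout. I would write each term as $h(\bar{x}_t|\hat{x}^{t-1},y^t)-h(\bar{x}_t|\hat{x}^t,y^t)$ and argue term by term. The first entropy is fixed by the Gaussian dynamics of $(\bar{x},y)$ given the past, since $(\bar{x}_t,y_t)$ is jointly Gaussian given $\mathcal{I}_{t-1}$. The second is upper bounded by its Gaussian counterpart via the maximum-entropy property.
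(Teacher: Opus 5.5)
Your skeleton (Step 2, then Gaussianization, then realization by a linear encoder) is the paper's, but two steps fail as stated. \emph{First, Step 1 is false.} Replacing the decoder by $\tilde{x}_t=\mathbb{E}[x_t|c^t,y^t]$ changes the conditioning $\hat{x}^{t-1}$ as well as the output. So data processing does not apply termwise, and the new decoder can reveal far more of $c^t$ than the old one did. Take $c_t=z_t$ and $\hat{x}_t=0$, which is feasible for large $d$. The objective is then $0$. Yet $\tilde{x}_t=\bar{x}_t$ is a deterministic function of $(z^t,y^t)$ with a nondegenerate law given $(\bar{x}^{t-1},y^t)$ (e.g.\ scalar case with $C\neq 0$), so $I(z^T\rightarrow\tilde{x}^T\Vert y^T)=+\infty$. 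The paper changes the decoder only where the objective provably cannot move. After the $\mathbb{Q}$ construction it rewrites the objective as $\sum_t I(\bar{x}_t;c_t|c^{t-1},y^t)$, which depends only on the linear encoder and the side information. It then installs the conditional mean decoder, which can only lower the WMSE. Finally, it uses the chain (a)--(c) in the proof of \eqref{ineq3}, resting on \cite[Theorem 1]{multiple-sensor-lower-bound}, to get $I(z^T\rightarrow\hat{x}^T\Vert y^T)\le\sum_t I(\bar{x}_t;c_t|c^{t-1},y^t)$ for the resulting $\Lambda_1$ policy. Your phrase ``followed by the conditional mean decoder'' at the end of Step 3 is again a decoder swap and needs exactly this argument.

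\emph{Second, the term-by-term Gaussianization in your last paragraph does not go through.} The entropy $h(\bar{x}_t|\hat{x}^{t-1},y^t)$ is not fixed by the dynamics. It is true that $(\bar{x}_t,y_t)$ is Gaussian given $\I_{t-1}$, but you condition only on $(\hat{x}^{t-1},y^t)$, and under $\mathbb{P}$ the posterior of $\bar{x}_{t-1}$ given $(\hat{x}^{t-1},y^{t-1})$ is generally non-Gaussian. Moreover, maximum entropy gives $h_{\mathbb{P}}(\bar{x}_t|\hat{x}^{t-1},y^t)\le h_{\mathbb{G}}(\bar{x}_t|\hat{x}^{t-1},y^t)$, which is the wrong direction for this term. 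The paper's Lemma~\ref{lemma:Gaussian-lower-bound-CDI} instead compares the whole sum. It applies Bayes' rule to the transition kernel of $(\bar{x}_{t+1},y_{t+1})$ given $(\hat{x}^t,y^t,\bar{x}_t)$, linking the posterior $\dP(\bar{x}_t|\hat{x}^t,y^t)$ to the next predictive density $\dP(\bar{x}_{t+1}|\hat{x}^t,y^{t+1})$. It then telescopes and writes $\sum_t(I_{\mathbb{P}}-I_{\mathbb{G}})$ as a sum of KL divergences. The only subtracted ones are the KL divergence of $\bar{x}_1$ given $y_1$, which is zero by Gaussianity, and those of the transition kernels, which vanish by Lemma~\ref{lemma:P-equals-G}. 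That lemma is exactly where the side-information difficulty you flag is resolved. Since $y_{t+1}=F_{t+1}\bar{x}_{t+1}+F_{t+1}\bar{w}_{t+1}+v_{t+1}$ with $\bar{w}_{t+1}$ correlated with $\bar{n}_t$, one conditions on $\bar{n}_t$, using that $(\bar{w}_{t+1},v_{t+1},\bar{n}_t)$ is independent of the past. The same kernel identity makes precise your ``innovations/Markov'' reduction. Replacing $\dG(\hat{x}_t|\hat{x}^{t-1},y^t,\bar{x}^t)$ by $\dG(\hat{x}_t|\hat{x}^{t-1},y^t,\bar{x}_t)$ preserves the law of $(\hat{x}^t,y^t,\bar{x}_t)$, by the induction in Lemma~\ref{lemma:G-Q-identity}.
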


The proof follows from that for Theorem \ref{thm:linear-policy-opt}, except that $u_t$ is replaced by $\hat{x}_t$.
As a direct result, the optimization problem \eqref{GM-plain-opt-form} is convex for the scalar and time-varying case and has the same form as Theorem \ref{thm:convex-time-varying} except that the control cost constraint is replaced by the following weighted mean-square error distortion constraint:
\begin{align*}
    \sum_{t=1}^{T}[\Tr(\Wt \bar{P}_{t|t})+\Tr(\Wt \bar{W}_t)] \leq Td
\end{align*}
For the time-invariant case, we consider the following optimization problem
\begin{align}\label{GM-plain-opt-TI}
    \min_{\lambda\in\Lambda, \text{WMSE}\leq d} \limsup_{T\rightarrow \infty}\frac{1}{T}I(z^{T}\rightarrow \hat{x}^T \Vert y^T)
\end{align}
Following the proof of Theorem \ref{thm:convex-TI}, the optimization problem \eqref{GM-plain-opt-TI} is equivalent to the single-letter form as in Theorem \ref{thm:convex-TI}, except that the control cost constraint is replaced by the following
\begin{align*}
    \Tr(\mathrm{W}\bar{P}) + \Tr(\mathrm{W}\bar{W})\leq d
\end{align*}

\section{Numerical Simulation}
In this section, we present numerical simulations for the time-invariant scalar systems. We draw the rate-cost tradeoff curves for the four cases: full observation, partial observation, full observation with side information, and partial observation with side information. These four cases correspond to the dash-dot line, dotted line, dashed line, and solid line, respectively.

In particular, we consider the LQG plant with $B=W=Q=R=F=C = 1$ and $A = 5$. We adjust the value of $N$ and $V$ to indicate different qualities of partial observations and side information. For the case with the same quality of partial observation and side information, we set $N=V=1$, as shown in Figure \ref{fig:simulation}. 
For the case with high-quality partial observation and low-quality side information, we set $N=0.1$ and $V=1$, as shown in Figure \ref{fig:simulation1}. For the case with low-quality partial observation and high-quality side information, we set $N=1$ and $V=0.1$, as shown in Figure \ref{fig:simulation2}.

The numerical results lead to the following observations. First, side information can reduce the required amount of transmitted information even when its quality is lower than that of the partial observation. Second, side information can fundamentally change the partial observation setting by enabling control costs that would otherwise be unattainable using the partial observation alone. This is particularly pronounced when the quality of side information is comparable to that of the partial observation.


\begin{figure*}[t]
    \centering
    \begin{subfigure}[t]{0.32\textwidth}
        \centering
        \includegraphics[width=\linewidth]{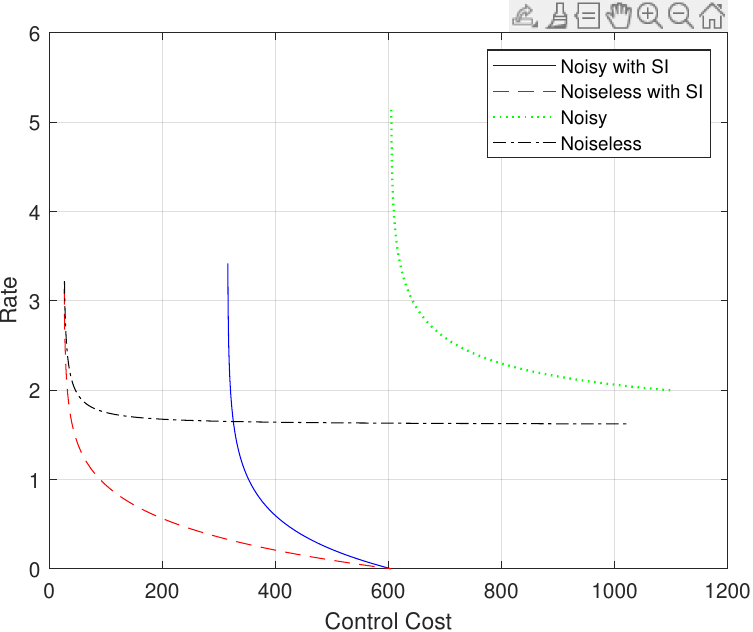}
        \caption{$N=V=1$. Equal-quality partial observation and side information.}
        \label{fig:simulation}
    \end{subfigure}
    \hfill
    \begin{subfigure}[t]{0.32\textwidth}
        \centering
        \includegraphics[width=\linewidth]{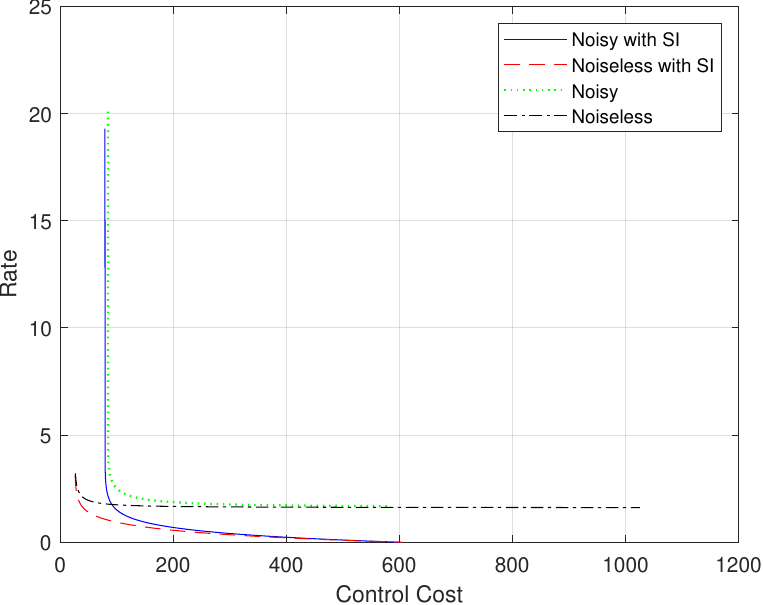}
        \caption{$N=0.1$, $V=1$. High-quality partial observation and low-quality side information.}
        \label{fig:simulation1}
    \end{subfigure}
    \hfill
    \begin{subfigure}[t]{0.32\textwidth}
        \centering
        \includegraphics[width=\linewidth]{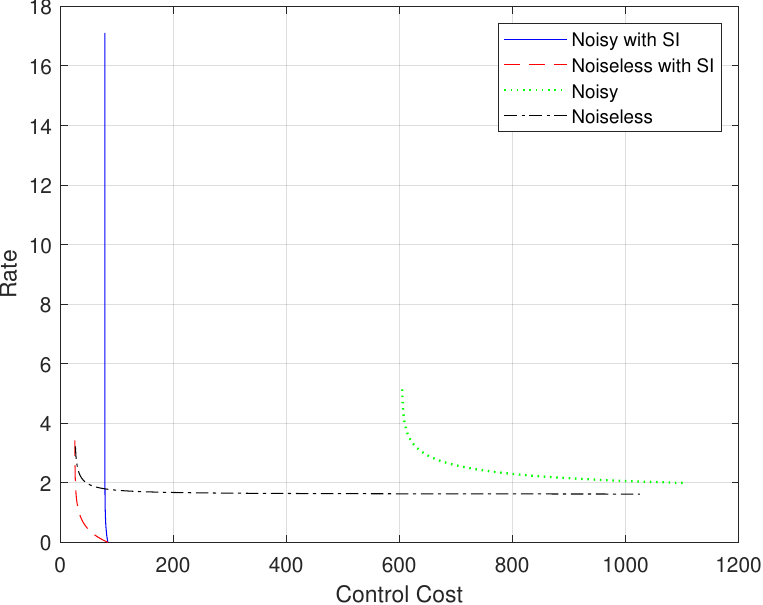}
        \caption{$N=1$, $V=0.1$. Low-quality partial observation and high-quality side information.}
        \label{fig:simulation2}
    \end{subfigure}

    \caption{Rate-cost tradeoff curves for the four settings with different qualities of partial observations and side information.}
    \label{fig:three_cases}
\end{figure*}

\section{Lemmas and Proofs}

In this section, we present useful lemmas and the proof of the main theorems.

\subsection{Lemmas}

\begin{lemma}\label{lemma:P-equals-G}
    For the LQG plant, we have
    \begin{align*}
        \dP(\bar{x}_{t+1},y_{t+1}|u^{t},y^{t},\bar{x}_t) = \dG(\bar{x}_{t+1},y_{t+1}|u^{t},y^{t},\bar{x}_t)
    \end{align*}
\end{lemma}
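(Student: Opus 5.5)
The plan is to show that, under any feasible policy in $\Gamma$, the conditional law of $(\bar{x}_{t+1},y_{t+1})$ given $(u^{t},y^{t},\bar{x}_t)$ is a Gaussian kernel. Its mean is linear in $(\bar{x}_t,u_t)$ and its covariance is fixed by the system parameters alone, so it coincides with the policy-independent linear-Gaussian kernel $\dG$. I will write both components explicitly as an affine function of $(\bar{x}_t,u_t)$ plus a noise vector that is independent of $\I_t$. I will then pass from conditioning on $\I_t$ to conditioning on the coarser $\sigma$-field generated by $(u^t,y^t,\bar{x}_t)$.

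First, by Proposition \ref{prop:decomposition} and \eqref{eqn:kalman-update},
\begin{align*}
\bar{x}_{t+1} &= A_t\bar{x}_t + B_t u_t + \bar{L}_{t+1} e_{t+1}, \\
e_{t+1} &\triangleq \bar{y}_{t+1}-\bar{F}_{t+1}\bar{x}_{t+1|t} = \bar{F}_{t+1}(A_t\bar{w}_t + w_t) + \bar{v}_{t+1}.
\end{align*}
Here $\bar{L}_{t+1}$ is deterministic, because the recursions \eqref{eqn:kalman-gain1}--\eqref{eqn:kalman-gain3} do not involve the policy. Similarly,
\begin{align*}
y_{t+1} = F_{t+1}(A_t\bar{x}_t + B_t u_t) + F_{t+1}(A_t\bar{w}_t + w_t) + v_{t+1}.
\end{align*}
Hence $(\bar{x}_{t+1},y_{t+1}) = M_t(\bar{x}_t,u_t) + \xi_t$, where $M_t$ is a fixed linear map and $\xi_t$ is a fixed linear function of $(\bar{w}_t,w_t,n_{t+1},v_{t+1})$.

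Second, I will argue that $(\bar{w}_t,w_t,n_{t+1},v_{t+1})$ is jointly Gaussian, has zero mean, and is independent of $\I_t=\{y^t,z^t,u^t\}$.
\begin{itemize}
\item For $\bar{w}_t$ this is exactly Proposition \ref{prop:decomposition}.
\item For the future noises $w_t,n_{t+1},v_{t+1}$, this follows from the causal structure \eqref{encoder}--\eqref{controller}. The codewords and controls up to time $t$ are generated from $(z^t,y^t)$ and the policy's private randomization, and none of these depend on the future noises. Moreover, $\bar{w}_t$ is a function of $(x_1,w^{t-1},\bar{v}^{t})$ given $\I_t$, which are independent of $(w_t,\bar{v}_{t+1})$.
\end{itemize}
Consequently $\xi_t\sim\mathcal{N}(0,\Sigma_t)$, independent of $\I_t$, where $\Sigma_t$ is built from $\bar{W}_t,W_t,N_{t+1},V_{t+1}$. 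In particular, the $\bar{x}$-block of $\Sigma_t$ is $\bar{N}_t$.

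Third, since $\bar{x}_t$ is $\I_t$-measurable, the conditional law of $(\bar{x}_{t+1},y_{t+1})$ given $\I_t$ is $\mathcal{N}(M_t(\bar{x}_t,u_t),\Sigma_t)$. This depends on $\I_t$ only through $(\bar{x}_t,u_t)$. Because $\sigma(u^t,y^t,\bar{x}_t)\subseteq\sigma(\I_t)$, the tower property shows that conditioning on $(u^t,y^t,\bar{x}_t)$ yields the same kernel. That kernel is the Gaussian law $\dG$ induced by the linear model, which proves the lemma.

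The main obstacle is the second step: rigorously justifying the joint independence of $\bar{w}_t$ and the future noises from $u^t$ under arbitrary (possibly randomized, nonlinear) measurable policies. I would handle it by making the policy randomization explicit as exogenous variables independent of $(x_1,w^T,n^T,v^T)$. I would then redo the induction of Proposition \ref{prop:decomposition} with these variables included in the conditioning.
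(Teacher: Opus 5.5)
\textbf{There is a genuine gap in the last step.} Your Steps 1--3 are a correct computation under $\mathbb{P}$. Writing $(\bar{x}_{t+1},y_{t+1})=M_t(\bar{x}_t,u_t)+\xi_t$ with $\xi_t\sim\mathcal{N}(0,\Sigma_t)$ independent of $\I_t$, and then coarsening the conditioning to $(u^t,y^t,\bar{x}_t)$, is the same content as the first half of the paper's proof. (The paper splits it into $\bar{n}_t$, and then $F_{t+1}\bar{w}_{t+1}+v_{t+1}$ given $\bar{n}_t$.) Your handling of the future noises and the policy's private randomization is more careful than the paper's.

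The problem is your identification of $\mathbb{G}$. In this lemma $\mathbb{G}$ is not ``the policy-independent linear-Gaussian kernel''. It is the Gaussian version of $\mathbb{P}$: the jointly Gaussian law of $(y^{T},z^{T},u^{T},x^{T})$ with the same first and second moments as $\mathbb{P}$. For a nonlinear or randomized policy, $\mathbb{P}$ is not Gaussian and $\mathbb{G}$ is a different measure. That the $\mathbb{G}$-conditional of $(\bar{x}_{t+1},y_{t+1})$ given $(u^t,y^t,\bar{x}_t)$ is again $\mathcal{N}(M_t(\bar{x}_t,u_t),\Sigma_t)$ is exactly what has to be shown. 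It is also what is used downstream, in step (c) of the proof of Lemma~\ref{lemma:G-Q-identity} and in the KL decomposition of Lemma~\ref{lemma:Gaussian-lower-bound-CDI}. As written, you assume it.

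\textbf{How to close it.} The paper obtains this step from Lemmas~5 and~6 of \cite{lqgsdp}: Markov chains and linear relations valid under $\mathbb{P}$ persist under $\mathbb{G}$. You can also do it directly in a few lines.
\begin{enumerate}
\item The gains in \eqref{eqn:kalman-update}--\eqref{eqn:kalman-gain3} are deterministic, so $\bar{x}_t$ is a fixed linear function of $(\bar{y}^t,u^{t-1})$. Hence $\xi_t\triangleq(\bar{x}_{t+1},y_{t+1})-M_t(\bar{x}_t,u_t)$ is a fixed linear function of the coordinates on which both $\mathbb{P}$ and $\mathbb{G}$ live.
\item Under $\mathbb{P}$, $\xi_t$ has mean zero and covariance $\Sigma_t$. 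By your independence from $\I_t$, it has zero cross-covariance with $(u^t,y^t,\bar{x}_t)$.
\item These are first- and second-moment statements, so they hold verbatim under $\mathbb{G}$.
\item Because $\mathbb{G}$ is jointly Gaussian, zero cross-covariance upgrades to independence. So under $\mathbb{G}$ as well, $(\bar{x}_{t+1},y_{t+1})=M_t(\bar{x}_t,u_t)+\xi_t$ with $\xi_t\sim\mathcal{N}(0,\Sigma_t)$ independent of $(u^t,y^t,\bar{x}_t)$. This gives the same kernel.
\end{enumerate}
With this added, your proof is complete. Note that the step you flagged as the main obstacle is routine; the real content of the lemma is this transfer to $\mathbb{G}$.
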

\begin{proof}
    By the observable process \eqref{eqn:observable-process-LQG} and Proposition \ref{prop:decomposition}, we have that given $(\bar{x_t},u_t)$, $\bar{x}_{t+1}$ is left with $\bar{n}_t$, which is a Gaussian random variable independent of $\I_t$ and hence independent of $(u^t, y^t, \bar{x}_t)$, which implies:
    \begin{align*}
        \dP(\bar{x}_{t+1}|u^{t},y^{t},\bar{x}_t) 
        =\dP(\bar{n}_t|u^{t},y^{t},\bar{x}_t) 
        = \dP(\bar{n}_t)
    \end{align*}
    Similarly, for $y_{t+1}$, we have:
    \begin{align*}
        \dP(y_{t+1}|u^{t},y^{t},\bar{x}_t,\bar{x}_{t+1})
=&\dP(F_{t+1}\bar{w}_{t+1}+v_{t+1}|u^{t},y^{t},\bar{x}_t,\bar{n}_{t})\\
       =&\dP(F_{t+1}\bar{w}_{t+1}+v_{t+1}|\bar{n}_{t})
    \end{align*}
    where the first equality is by $y_{t+1} = F_{t+1}x_{t+1}+v_{t+1}$, $\bar{w}_{t+1} = x_{t+1}-\bar{x}_{t+1}$ and the observable process \eqref{eqn:observable-process-LQG}; the second equality is due to that $(\bar{w}_{t+1},v_{t+1},\bar{n}_t)$ are jointly independent of $(u^{t},y^{t},\bar{x}_t)$.

    Next, we show that the Gaussian version $\mathbb{G}$ follows the same distribution.
     Since $(y^{t},u^{t-1})-(u_t,\bar{x}_t)-\bar{x}_{t+1}$ is Markov and follows the observable process \eqref{eqn:observable-process-LQG}, by the \cite[Lemma 5]{lqgsdp}, the same Markov chain also holds in $\mathbb{G}$. Similarly, the Markov chain $(u^{t-1},y^{t})-(u_t,\bar{x}_t,\bar{x}_{t+1})-y_{t+1}$ also holds in $\mathbb{G}$. By \cite[Lemma 6]{lqgsdp}, the linear relationship \eqref{eqn:observable-process-LQG} also holds in $\mathbb{G}$. Hence we have
    \begin{align*}
        \dG(\bar{x}_{t+1}|u^{t},y^{t},\bar{x}_t)
        =& \dG(\bar{x}_{t+1}|u_t,\bar{x}_t)\\
        =&\dP(\bar{x}_{t+1}|u_t,\bar{x}_t)\\
        =&\dP(\bar{n}_t)\\
        =&\dP(\bar{x}_{t+1}|u^{t},y^{t},\bar{x}_t)
    \end{align*}
    
    Similarly, since $(\bar{w}_{t+1},v_{t+1},\bar{n}_t)$ are jointly Gaussian and independent of $(u^{t},y^{t},\bar{x}_t)$ in $\mathbb{P}$, by \cite[Lemma 5]{lqgsdp}, $(F_{t+1}\bar{w}_{t+1}+v_{t+1})-\bar{n}_t -(u^{t},y^{t},\bar{x}_t)$ is also Markov in $\mathbb{G}$. Therefore, we have:
    \begin{align*}
        &\dG(y_{t+1}|u^{t},y^{t},\bar{x}_t,\bar{x}_{t+1}) \\
        =&\dG(F_{t+1}\bar{w}_{t+1}+v_{t+1}|u^{t},y^{t},\bar{x}_t,\bar{n}_{t})\\
        =&\dG(F_{t+1}\bar{w}_{t+1}+v_{t+1}|\bar{n}_{t})\\
        =&\dP(F_{t+1}\bar{w}_{t+1}+v_{t+1}|\bar{n}_{t})\\
        =& \dP(y_{t+1}|u^{t},y^{t},\bar{x}_t,\bar{x}_{t+1})
    \end{align*}
Combining these two groups of equalities, we show that these two conditional distributions are the same in $\mathbb{P}$ and $\mathbb{G}$.
 
\end{proof}

\subsection{Proof of Theorem \ref{thm:linear-policy-opt}}\label{proof:linear-policy-opt}

    The proof follows a structure similar to that of \cite{lqgsdp}. For any feasible policy in $\Gamma$, we denote the corresponding probability measure of $(y^{T},z^{T},u^{T},x^{T})$ as $\mathbb{P}$ and its Gaussian version, which has the same covariance matrix, as $\mathbb{G}$.
    We show Theorem \ref{thm:linear-policy-opt} by establishing the following inequalities: 
    \begin{align}
        &\min_{\gamma\in\Gamma, J\leq d} I(z^{T}\rightarrow u^T \Vert y^T) \nonumber\\
        \geq &\min_{\gamma\in\Gamma, J\leq d} \sum_{t=1}^{T}I_{\mathbb{G}}(\Bar{x}_t; u_t |u^{t-1},y^{t})\label{ineq1}\\
        \geq &\min_{\gamma\in\Gamma_1, J\leq d} \sum_{t=1}^{T}I(\Bar{x}_t; c_t |c^{t-1},y^{t})\label{ineq2}\\
        \geq &\min_{\gamma\in\Gamma, J\leq d} I(z^{T}\rightarrow u^T \Vert y^T)\label{ineq3}
    \end{align}

    \textit{Proof of inequality $\eqref{ineq1}$:}
    We first establish the following chain of inequalities: 
    \begin{align*}
       I(z^{T}\rightarrow u^T \Vert y^T)
        \geq  \sum_{t=1}^{T}I(\Bar{x}_t; u_t |u^{t-1},y^{t})
    \end{align*}
    where the inequality follows from that $\bar{x}_t = \mathbb{E}[x_t|\I_t]$ and therefore
    \begin{align*}
        I(z^{T}\rightarrow u^T \Vert y^T)
        =&\sum_{t=1}^{T}I(z^{t};u_t|u^{t-1},y^{t})\\
        =&\sum_{t=1}^{T}I(z^{t},y^{t},u^{t-1},\bar{x}_t;u_t|u^{t-1},y^{t})\\
        \geq & \sum_{t=1}^{T}I(\bar{x}_t;u_t|u^{t-1},y^{t}),
    \end{align*}
    This implies that the rate is lower bounded by the sum of conditional mutual information between the control signal and the observable process $\bar{x}_t$.

    Next, we argue that for any feasible policy in $\Gamma$ and the induced probability measure $\mathbb{P}$, the objective is lower bounded by its Gaussian version $\mathbb{G}$.
    \begin{lemma}\label{lemma:Gaussian-lower-bound-CDI}
        For any feasible policy and the induced measure $\mathbb{P}$, the conditional directed information is lower bounded by its Gaussian version:
        \begin{align*}
            \sum_{t=1}^{T}I_{\mathbb{P}}(\bar{x}_t;u_t|u^{t-1},y^{t}) \geq \sum_{t=1}^{T}I_{\mathbb{G}}(\bar{x}_t;u_t|u^{t-1},y^{t})
        \end{align*}
    \end{lemma}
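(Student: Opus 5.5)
The plan is to adapt the Gaussian lower bound argument of \cite{lqgsdp} (their Lemma on the relative entropy between $\mathbb{P}$ and $\mathbb{G}$), carried out with respect to the observable Markov chain $(\bar{x}_t)$ and the side information $y_t$ rather than the hidden state. First I will rewrite the objective in a form that separates a ``channel'' part from a ``source'' part. By the chain rule,
\begin{align*}
\sum_{t=1}^{T}I(\bar{x}_t;u_t|u^{t-1},y^{t}) = \sum_{t=1}^{T}\left[h(u_t|u^{t-1},y^{t}) - h(u_t|u^{t-1},y^{t},\bar{x}_t)\right].
\end{align*}
I will express this as a relative entropy between the joint law of $(\bar{x}^T,y^T,u^T)$ and a product-type law built from the sequential kernels.

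Concretely, the joint density of $(\bar{x}^{T},y^{T},u^{T})$ factors as
\begin{align*}
\prod_{t}\dP(\bar{x}_{t+1},y_{t+1}|u^{t},y^{t},\bar{x}_t)\,\dP(u_t|u^{t-1},y^{t},\bar{x}_t),
\end{align*}
together with the initial law of $(\bar{x}_1,y_1)$. Here I use that $\bar{x}^{t-1}$ can be dropped from the conditioning of the next-state kernel, which is exactly what Lemma \ref{lemma:P-equals-G} provides. The sum of conditional mutual informations then equals
\begin{align*}
\mathbb{E}_{\mathbb{P}}\left[\log \prod_t \frac{\dP(u_t|u^{t-1},y^{t},\bar{x}_t)}{\dP(u_t|u^{t-1},y^{t})}\right].
\end{align*}
Note that $u_t$ may depend on $\bar{x}^{t}$ beyond $\bar{x}_t$ through $c^{t}$. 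I will handle this by working with the given quantity $I(\bar{x}_t;u_t|u^{t-1},y^t)$ directly: the terms $h(u_t|u^{t-1},y^t)$ and $h(u_t|u^{t-1},y^t,\bar{x}_t)$ are functionals of the joint law of $(\bar{x}_t,u^t,y^t)$.

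Next comes the standard step. I will write
\begin{align*}
I_{\mathbb{P}}-I_{\mathbb{G}} = \KL(\mathbb{P}\Vert\mathbb{G})\text{-type terms} - \KL(\text{marginal terms}),
\end{align*}
as in \cite{lqgsdp}. Because $\mathbb{G}$ has the same second moments as $\mathbb{P}$, we have $\mathbb{E}_{\mathbb{P}}[\log \dG(\cdot)] = \mathbb{E}_{\mathbb{G}}[\log\dG(\cdot)]$ for any Gaussian conditional log-density. Hence
\begin{align*}
&\sum_t I_{\mathbb{P}}(\bar{x}_t;u_t|u^{t-1},y^t) - \sum_t I_{\mathbb{G}}(\bar{x}_t;u_t|u^{t-1},y^t)\\
&= \KL\big(\mathbb{P}(\bar{x}^{T},y^{T},u^{T})\Vert\mathbb{G}(\bar{x}^{T},y^{T},u^{T})\big)\\
&\quad - \sum_t \mathbb{E}_{\mathbb{P}}\log\frac{\dP(\bar{x}_{t+1},y_{t+1}|u^{t},y^{t},\bar{x}_t)}{\dG(\bar{x}_{t+1},y_{t+1}|u^{t},y^{t},\bar{x}_t)} - (\text{initial term}) - \KL(\text{decoder-side marginals}).
\end{align*}
By Lemma \ref{lemma:P-equals-G} the middle (plant/side-information transition) terms vanish, and the initial term vanishes because $(\bar{x}_1,y_1)$ is Gaussian. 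What remains is a difference of KL divergences of the form
\begin{align*}
\KL(\mathbb{P}_{\bar{x}^T,y^T,u^T}\Vert\mathbb{G}_{\bar{x}^T,y^T,u^T}) - \sum_t\KL(\mathbb{P}_{u_t|u^{t-1},y^{t}}\Vert\mathbb{G}_{u_t|u^{t-1},y^{t}}\,|\,\mathbb{P}).
\end{align*}
This is nonnegative by the chain rule for relative entropy plus data processing: the full-joint divergence dominates the causally conditioned divergence of $u_t$ given the past $(u^{t-1},y^t)$.

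The main obstacle is this bookkeeping. I have to ensure that the terms left over are exactly a joint KL minus a sum of causally conditioned KLs on the subset $(u^T,y^T)$, with no stray term involving $\bar{x}_t$. In particular, the conditional density of $u_t$ given $(u^{t-1},y^t,\bar{x}_t)$ does not telescope by itself, because $u_t$ genuinely depends on earlier $\bar{x}$ through the codewords.

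To address this, I will first bound
\begin{align*}
\sum_t I(\bar{x}_t;u_t|u^{t-1},y^t) \ge \sum_t I(\bar{x}^{t};u_t|u^{t-1},y^t)\text{-free form}
\end{align*}
using Lemma \ref{lemma:P-equals-G}. More precisely, I will use the identity
\begin{align*}
\sum_t I(\bar{x}_t;u_t|u^{t-1},y^t) = I(\bar{x}^T\to u^T\Vert y^T)
\end{align*}
This holds because $\bar{x}^{t-1}-(\bar{x}_t,u^{t-1},y^t)-u_t$ fails in general but $\bar{x}_{t+1}$ depends on the past only through $(\bar{x}_t,u_t)$. I will verify this identity in both $\mathbb{P}$ and $\mathbb{G}$, the latter via \cite[Lemma 5]{lqgsdp}. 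Then the directed-information form telescopes cleanly, giving the KL representation above and hence the inequality.
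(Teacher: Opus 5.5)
There is a genuine gap at the identity you use to make the bookkeeping work: $\sum_{t}I(\bar{x}_t;u_t|u^{t-1},y^{t}) = I(\bar{x}^T\rightarrow u^T\Vert y^T)$.

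By the chain rule, $I(\bar{x}^{t};u_t|u^{t-1},y^{t}) = I(\bar{x}_t;u_t|u^{t-1},y^{t}) + I(\bar{x}^{t-1};u_t|u^{t-1},y^{t},\bar{x}_t)$. So the identity is equivalent to the Markov chain $\bar{x}^{t-1}-(\bar{x}_t,u^{t-1},y^{t})-u_t$ holding for every $t$. That is exactly the chain you concede fails. The Markov structure of the transition of $(\bar{x}_{t+1},y_{t+1})$ says nothing about how $u_t$ depends on the past. For example, in the scalar case with $C_1\neq 0$, take $c_1,u_1$ trivial and $c_2$ a quantization of $z_1$. Then $I(\bar{x}_1;u_2|u_1,y^{2},\bar{x}_2)>0$, because $(\bar{x}_1,y_1)$ determines $z_1$ while $(y^{2},\bar{x}_2)$ does not.

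Hence your representation (a joint $\KL$ minus $\sum_t\KL$ of $u_t$ given $(u^{t-1},y^{t})$) is in general valid only for the full-memory quantity $I(\bar{x}^T\rightarrow u^T\Vert y^T)$. Since $\sum_t I(\bar{x}_t;u_t|u^{t-1},y^{t})\leq I(\bar{x}^T\rightarrow u^T\Vert y^T)$ under both $\mathbb{P}$ and $\mathbb{G}$, the comparison $I_{\mathbb{P}}(\bar{x}^T\rightarrow u^T\Vert y^T)\geq I_{\mathbb{G}}(\bar{x}^T\rightarrow u^T\Vert y^T)$ does not imply the lemma.

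That comparison would still rescue \eqref{ineq1} of Theorem~\ref{thm:linear-policy-opt}, via $I(z^T\rightarrow u^T\Vert y^T)\geq I_{\mathbb{P}}(\bar{x}^T\rightarrow u^T\Vert y^T)$ and $I_{\mathbb{G}}(\bar{x}^T\rightarrow u^T\Vert y^T)\geq\sum_t I_{\mathbb{G}}(\bar{x}_t;u_t|u^{t-1},y^{t})$. For this you would need the transition kernels conditioned on $\bar{x}^{t}$ to agree in $\mathbb{P}$ and $\mathbb{G}$. That follows from Proposition~\ref{prop:decomposition} plus a Gaussianization argument, not from Lemma~\ref{lemma:P-equals-G} as stated. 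Either way, it does not prove the statement at hand.

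The missing idea is to avoid the controller kernel entirely and telescope posteriors of $\bar{x}_t$, as the paper does. Write the $t$-th term as $\mathbb{E}\log\frac{\dP(\bar{x}_t|u^{t},y^{t})}{\dP(\bar{x}_t|u^{t-1},y^{t})}$. Multiply and divide by the transition kernel $\dP(\bar{x}_{t+1},y_{t+1}|u^{t},y^{t},\bar{x}_t)$, and apply Bayes' rule:
\begin{align*}
&\dP(\bar{x}_t|u^{t},y^{t})\,\dP(\bar{x}_{t+1},y_{t+1}|u^{t},y^{t},\bar{x}_t)\\
&\quad=\dP(y_{t+1}|u^{t},y^{t})\,\dP(\bar{x}_{t+1}|u^{t},y^{t+1})\,\dP(\bar{x}_t|u^{t},y^{t+1},\bar{x}_{t+1}).
\end{align*}
The ratios $\dP(\bar{x}_{t+1}|u^{t},y^{t+1})/\dP(\bar{x}_t|u^{t-1},y^{t})$ then telescope. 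Repeat this under $\mathbb{G}$, and use that Gaussian conditional log-densities are quadratic while second moments match under $\mathbb{P}$ and $\mathbb{G}$. The difference of the two sums becomes a signed sum of conditional $\KL$ divergences.

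\begin{itemize}
\item With a plus sign: the terminal posterior of $\bar{x}_{T+1}$, the predictors of $y_{t+1}$, and the backward kernels of $\bar{x}_t$ given $(u^{t},y^{t+1},\bar{x}_{t+1})$.
\item With a minus sign: only the initial law of $\bar{x}_1$ given $y_1$ and the $\bar{x}_t$-conditioned transition kernels. These vanish by Gaussianity of $(\bar{x}_1,y_1)$ and by Lemma~\ref{lemma:P-equals-G}, respectively.
\end{itemize}

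This argument never requires $u_t$ to depend on the past only through $\bar{x}_t$, which is precisely what your route cannot avoid.
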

    \begin{proof}
        See Section \ref{sec:proof-Gaussian-lower-bound-CDI}.
    \end{proof}

    \textit{Proof of inequality \eqref{ineq2}:}
    Now consider another policy $\dQ$ such that:
    \begin{align*}
        \dQ(u_t| u^{t-1},y^{t},\bar{x}_t) = \mathcal{N}(\text{llmsee}(u^{t-1},y^{t},\bar{x}_t),H_t)
    \end{align*}
    where llmsee is the linear least mean-square error estimation of $u_t$ given $(u^{t-1},y^{t},\bar{x}_t)$ in the Gaussian measure $\mathbb{G}$. We take $u_t = \text{llmsee} (u^{t-1},y^{t},\bar{x}_t) + h_t$ where $h_t$ is a Gaussian random variable with covariance matrix $H_t$ and is independent of $(x^{t},y^{t},z^{t},u^{t-1})$. By construction, we have that
    \begin{align}\label{eqn:Q-construction-property}
        \dQ(u_t|u^{t-1},y^{t},\bar{x}_t)\dG(u^{t-1},y^{t},\bar{x}_t) = \dG(u^{t},y^{t},\bar{x}_t)
    \end{align}
    Recursively we can define $\mathbb{Q}(u^{T},y^{T+1},\bar{x}^{T+1})$, the measure generated by $\dQ(u_t| u^{t-1},y^{t},\bar{x}_t)$, \eqref{eqn:observable-process-LQG}, \eqref{eqn:enc-observation} and \eqref{eqn:side-info} as:
    \begin{align}
        \dQ(u^{t},y^{t+1},\bar{x}^{t+1}) &= \dP(\bar{x}_{t+1},y_{t+1}|u_{t},\bar{x}_{t})\dQ(u^{t},y^{t},\bar{x}^{t})\label{eqn:Q-construct1} \\
        \dQ(u^{t},y^{t},\bar{x}^{t}) &= \dQ(u_t|u^{t-1},y^{t},\bar{x}_t) \dQ(u^{t-1},y^{t},\bar{x}^{t})\label{eqn:Q-construct2}
    \end{align}
    Next we show the following identity between $\mathbb{G}$ and $\mathbb{Q}$.
    \begin{lemma}\label{lemma:G-Q-identity}
        $\mathbb{Q}(u^{t-1},y^{t},\bar{x}_t) = \mathbb{G}(u^{t-1},y^{t},\bar{x}_t)$
    \end{lemma}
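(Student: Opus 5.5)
We prove Lemma \ref{lemma:G-Q-identity} by induction on $t$, following the template of \cite{lqgsdp}; the main tool is Lemma \ref{lemma:P-equals-G}. We will in fact carry the stronger hypothesis $\dQ(u^{t-1},y^{t},\bar{x}_t)=\dG(u^{t-1},y^{t},\bar{x}_t)$, obtained by marginalizing $\dQ(u^{t-1},y^{t},\bar{x}^{t})$ over $\bar{x}^{t-1}$. This suffices because the recursion \eqref{eqn:Q-construct1}--\eqref{eqn:Q-construct2} only passes information forward through $(u^{t},y^{t},\bar{x}_t)$. The conditional kernel in \eqref{eqn:Q-construct1} depends on $(u_t,\bar{x}_t)$ only, and $\dQ(u_t|\cdot)$ depends on $(u^{t-1},y^{t},\bar{x}_t)$ only, so the marginal on $(u^{t},y^{t+1},\bar{x}_{t+1})$ is determined by the marginal on $(u^{t-1},y^{t},\bar{x}_t)$.

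\emph{Base case.} At $t=1$ there are no controls. The pair $(y_1,\bar{x}_1)$ is a linear function of $(x_1,\bar{y}_1)$ with a policy-independent Gaussian law, so $\mathbb{P}$ and hence its Gaussian version $\mathbb{G}$ coincide on it. The measure $\mathbb{Q}$ is initialized with this same law, which gives $\mathbb{Q}(y_1,\bar{x}_1)=\mathbb{G}(y_1,\bar{x}_1)$.

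\emph{Inductive step.} Assume $\dQ(u^{t-1},y^{t},\bar{x}_t)=\dG(u^{t-1},y^{t},\bar{x}_t)$. Multiplying by $\dQ(u_t|u^{t-1},y^{t},\bar{x}_t)$ and using the construction property \eqref{eqn:Q-construction-property} gives $\dQ(u^{t},y^{t},\bar{x}_t)=\dG(u^{t},y^{t},\bar{x}_t)$. Next, apply \eqref{eqn:Q-construct1} and integrate out $\bar{x}_t$:
\begin{align*}
\dQ(u^{t},y^{t+1},\bar{x}_{t+1}) &= \int \dP(\bar{x}_{t+1},y_{t+1}|u_t,\bar{x}_t)\,\dG(u^{t},y^{t},\bar{x}_t).
\end{align*}
By Lemma \ref{lemma:P-equals-G}, the kernel $\dP(\bar{x}_{t+1},y_{t+1}|u_t,\bar{x}_t)$ equals $\dP(\bar{x}_{t+1},y_{t+1}|u^{t},y^{t},\bar{x}_t)$, which in turn equals $\dG(\bar{x}_{t+1},y_{t+1}|u^{t},y^{t},\bar{x}_t)$. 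The first equality holds because the kernel depends only on $(u_t,\bar{x}_t)$ through \eqref{eqn:observable-process-LQG} and the noise $(\bar{n}_t,\bar{w}_{t+1},v_{t+1})$ is independent of $\I_t$. Substituting, the integrand becomes $\dG(u^{t},y^{t+1},\bar{x}_t,\bar{x}_{t+1})$, and integrating over $\bar{x}_t$ yields $\dG(u^{t},y^{t+1},\bar{x}_{t+1})$. This closes the induction.

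\emph{Main obstacle.} The delicate point is identifying the kernel in \eqref{eqn:Q-construct1} with the $\mathbb{G}$-conditional given the full past $(u^{t},y^{t},\bar{x}_t)$, not merely given $(u_t,\bar{x}_t)$. This is exactly the Markov structure that Lemma \ref{lemma:P-equals-G} transfers from $\mathbb{P}$ to $\mathbb{G}$ via \cite[Lemmas 5--6]{lqgsdp}. I would check carefully that the joint noise $(\bar{n}_t, F_{t+1}\bar{w}_{t+1}+v_{t+1})$, including the correlation between $\bar{n}_t$ and $\bar{w}_{t+1}$, is treated identically in both measures, so that the joint kernel, and not only its two factors separately, agrees. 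I would also confirm that the initial laws coincide, which holds because $\mathbb{G}$ preserves the covariances of $\mathbb{P}$.
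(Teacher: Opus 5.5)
Your proposal is correct and follows the paper's proof essentially step for step. It has the same Gaussian base case and the same inductive chain: the construction \eqref{eqn:Q-construct1}--\eqref{eqn:Q-construct2}, then the induction hypothesis with \eqref{eqn:Q-construction-property}, then Lemma~\ref{lemma:P-equals-G} to rewrite the integrand as $\dG(u^{t},y^{t+1},\bar{x}_{t+1},\bar{x}_t)$ before integrating out $\bar{x}_t$. You are also slightly more explicit than the paper in identifying the kernel $\dP(\bar{x}_{t+1},y_{t+1}\mid u_t,\bar{x}_t)$ of \eqref{eqn:Q-construct1} with the one conditioned on $(u^{t},y^{t},\bar{x}_t)$.

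One small correction to your caveat: $\bar{n}_t$ is independent of $\bar{w}_{t+1}$, since $\bar{n}_t$ is $\I_{t+1}$-measurable and $\bar{w}_{t+1}$ is independent of $\I_{t+1}$. The dependence between $\bar{n}_t$ and the $y_{t+1}$-noise $F_{t+1}\bar{w}_{t+1}+v_{t+1}$ therefore arises only through $v_{t+1}$.
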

    \begin{proof}
        See Section \ref{proof:G-Q-identity}
    \end{proof}

    The Lemma above together with \eqref{eqn:Q-construction-property}, we have that $\mathbb{Q}(u^{t},y^{t},\bar{x}_t)=\mathbb{G}(u^{t},y^{t},\bar{x}_t)$. This implies that
    \begin{align*}
        \sum_{t=1}^{T}I_{\mathbb{G}}(\bar{x}_t;u_t|u^{t-1},y^{t}) =\sum_{t=1}^{T}I_{\mathbb{Q}}(\bar{x}_t;u_t|u^{t-1},y^{t})
    \end{align*}
    Note that for $\mathbb{E}[\Vert x_{t+1} \Vert^{2}_{Q_t}]$ we have:
    \begin{align*}
        \mathbb{E}[\Vert x_{t+1} \Vert^{2}_{Q_t}] 
        =&\mathbb{E}[\Vert \bar{x}_{t+1} + \bar{w}_{t+1} \Vert^{2}_{Q_t}]\\
        =&\mathbb{E}[\Vert \bar{x}_{t+1}  \Vert^{2}_{Q_t}] + \mathbb{E}[\Vert \bar{w}_{t+1}  \Vert^{2}_{Q_t}]
    \end{align*}
    where the last equality is due to that $\bar{w}_{t+1}$ is independent of $(y^{t+1},z^{t+1},u^{t+1})$ and $\bar{x}_{t+1} = \mathbb{E}[x_{t+1}|\I_{t+1}]$. So if the second moments of $(\bar{x}_t, u_t)$ are the same between $\mathbb{P}$ and $\mathbb{Q}$ for $(\bar{x}_t, u_t)$, the control costs are the same between $\mathbb{P}$ and $\mathbb{Q}$. Since we have that $\mathbb{Q}(u^{t},y^{t},\bar{x}_t)=\mathbb{G}(u^{t},y^{t},\bar{x}_t)$, the second moments between $\mathbb{P} $ and $\mathbb{G}$ are the same for $(\bar{x}_t,u_t)$. Hence, the policy constructed by $\dQ$ gives a lower bound for the original optimization problem \eqref{LQG-plain-opt-form}. 

    Next, we show that the policies in $\Gamma_1$ can recover the policy constructed by $\dQ$. Note that in $\dQ$, we have
    \begin{align*}
        u_t = \text{llmsee} (u^{t-1},y^{t},\bar{x}_t) + h_t
    \end{align*}
    where $h_t\sim \mathcal{N}(0,H_t)$ is an independent Gaussian random variable. Denote $L'_t$ to be the coefficient of $z^{t}$ from $\bar{x}_t$ in $\text{llmsee} (u^{t-1},y^{t},\bar{x}_t)$ and $c_t = L'_t z^{t} + h_t$. Note that the coefficient of $z^{t}$ can be effectively written as $L_t L^{*}_t$ where $\bar{x}_t = \mathbb{E}[x_t|y^{t},z^{t},u^{t-1}] = L^{*}_t z^{t} +\text{linear\_func}(y^{t},u^{t-1})$.
    Hence, conditioning on $(u^{t-1},y^{t})$ is the same as conditioning on $(c^{t-1},y^{t})$. This implies the following identity:
    \begin{align*}
        \sum_{t=1}^{T}I_{\mathbb{Q}}(\bar{x}_t;u_t|u^{t-1},y^{t}) =\sum_{t=1}^{T}I(\bar{x}_t;c_t|c^{t-1},y^{t})
    \end{align*}
    Since $c_t$s are linear encoders as in the policy set $\Gamma_1$, and the controller $u_{t}$ is a linear function of $(u^{t-1},y^{t},c_t)$, the R.H.S. above can be written as the sum of log determinants with $\{\bar{P}_{t|t}\}_{t\in [T]}$ being variables, which are fixed given the linear encoders and side information. Thus, the objective is independent of the choice of controllers.
    In particular, from the separation principle, the certainty equivalence controller is optimal given the linear sensors and side information.
    Therefore, we can replace the controllers in $\dQ$ with the certainty equivalence controller since the objective is fixed and the control performance is better.
    This completes the proof of the inequality.


    \textit{Proof of inequality \eqref{ineq3}:}
    Note that $\Gamma_1$ is a set of feasible policies, and by the structure of the encoder, we have the following for any policy $\gamma_{1}\in\Gamma_{1}$:
    \begin{align*}
        \sum_{t=1}^{T}I(\bar{x}_t;c_t|c^{t-1},y^{t}) 
        \stackrel{(a)}{=} & \sum_{t=1}^{T}I(\bar{x}_t,z^{t};c_t|c^{t-1},y^{t})\\
        \stackrel{(b)}{\geq} & \sum_{t=1}^{T}I(z^{t};c_t|c^{t-1},u^{t-1},y^{t}) \\
        \stackrel{(c)}{\geq} & \sum_{t=1}^{T}I(z^{t};u_t|u^{t-1},y^{t})
    \end{align*}
    where equality (a) is due to that $c_t - (\bar{x}_t,c^{t-1},y^{t}) - z^{t}$ is Markov, which is true due to that given $y^{t}$, $c_t$ can be viewed as $L_t \bar{x}_t +h_t$ as Remark \ref{remark:linear-controller}; inequality (b) is due to the nonnegativity of mutual information and $u_t$ is the certainty equivalence controller; inequality (c) is due to that $c_t$ and $u_t$ are feasible encoders and controllers hence the proof as \cite[Theorem 1]{multiple-sensor-lower-bound} applies. We can further lower bound the objective by considering the set of all feasible policies. This completes the proof.

\subsection{Proof of Theorem \ref{thm:convex-time-varying}}\label{proof:convex-TV}
Note that it is sufficient to consider the following optimization problem from the proof of Theorem \ref{thm:linear-policy-opt}:
\begin{align*}
    \min_{\gamma\in\Gamma_1, J\leq d} \sum_{t=1}^{T}I(\Bar{x}_t; c_t |c^{t-1},y^{t})
\end{align*}
The $\Gamma_1$ consists of linear encoders and the certainty equivalence controller. The objective, denoted as $I$, can be written as the sum of difference of the log determinant of covariance matrices:
\begin{align*}
    I =& \sum_{t=1}^{T}I(\Bar{x}_t; c_t |c^{t-1},y^{t})\\
    =& \sum_{t=1}^{T}\frac{1}{2}(\log|\bar{P}_{t|t-1}^{\SI}| - \log |\bar{P}_{t|t}|)
\end{align*}
By equation \eqref{eqn:y1update-for-Pbar} and the Woodbury matrix identity, we have the following equality:
\begin{align}
    \bar{P}_{t+1|t}^{\SI} = &\bar{P}_{t+1|t}- (\bar{P}_{t+1|t}+\bar{W}_{t+1})F_{t+1}^{\T}\nonumber\\
    &(V_{t+1} + F_{t+1}(\bar{P}_{t+1|t}+\bar{W}_{t+1})F_{t+1}^{\T})^{-1}\nonumber\\
    &F_{t+1}(\bar{P}_{t+1|t}+\bar{W}_{t+1})\label{eqn:Pbarone-full}
\end{align}

For the scalar case, this is reduced to 
\begin{align}\label{eqn:Pbar-scalar}
    \bar{P}_{t+1|t}^{\SI} = \bar{P}_{t+1|t} - \frac{F_{t+1}^{2}(\bar{P}_{t+1|t}+\bar{W}_{t+1})^{2}}{V_{t+1}+F_{t+1}^{2}(\bar{P}_{t+1|t}+\bar{W}_{t+1})}
\end{align}
Note that $\bar{P}_{t+1|t} = A_{t}\bar{P}_{t|t}A_{t}^{\T} + \bar{N}_t$ where $\bar{N}_t$ is the covariance matrix of $\bar{n}_t$ and that $\bar{w}_{t+1} = \bar{w}_{t+1|t} - \bar{n}_t$, which follows from the update equation for $\bar{w}_{t+1}$.

Since $\bar{w}_{t+1}$ is independent of $\I_{t+1}$ and $\bar{n}_t$ is measurable with respect to $\I_{t+1}$, we have that $\bar{w}_{t+1}$ is independent of $\bar{n}_t$. Hence, the following covariance matrix identity is true:
\begin{align}\label{eqn:Wbar-update-and-Nbar}
    \bar{W}_{t+1|t} =\bar{W}_{t+1} +\bar{N}_t
\end{align}
Then we can rewrite the objective function $I$:
\begin{align*}
    I =& \sum_{t=1}^{T}\frac{1}{2}(\log\bar{P}_{t|t-1}^{\SI} -\log \bar{P}_{t|t})\\
    =& \sum_{t=1}^{T-1}\frac{1}{2}(\log\bar{P}_{t+1|t}^{\SI} - \log \bar{P}_{t|t})+\frac{1}{2}(\log\bar{P}_{1|0}^{\SI} - \log\bar{P}_{T|T})\\
    =&\sum_{t=1}^{T-1}\frac{1}{2}f_t(\bar{P}_{t|t}) +\frac{1}{2}(\log\bar{P}_{1|0}^{\SI} - \log\bar{P}_{T|T})
\end{align*}
where the last equality is by \eqref{eqn:Pbar-scalar}, and $f_t(\bar{P}_{t|t})=\log(\alpha_t  \bar{P}_{t|t}+\beta_t) - \log(\gamma_t \bar{P}_{t|t} + \delta_t)-\log\bar{P}_{t|t}$ with parameters 
\begin{align*}
    \alpha_{t} &= A_{t}^{2}(V_{t+1}-F_{t+1}^{2}\bar{W}_{t+1})\\
    \beta_t &= \bar{N}_t(V_{t+1} - F_{t+1}^{2}\bar{W}_{t+1})-F_{t+1}^{2}\bar{W}_{t+1}^{2}\\
    \gamma_t &= A_{t}^{2}F_{t+1}^{2}\\
    \delta_t &= V_{t+1}+F_{t+1}^{2}(\bar{N}_{t}+\bar{W}_{t+1})
\end{align*}
Next, we show that $f_t(\bar{P}_{t|t})$ is a convex function of $\bar{P}_{t|t}$ for $ \bar{P}_{t|t}>0$ by evaluating the second derivative directly. 
The first derivative of $f_t(\bar{P}_{t|t})$ is
\begin{align*}
    f_t'(\bar{P}_{t|t}) = \frac{\alpha_t}{\alpha_t \bar{P}_{t|t}+\beta_t} - \frac{\gamma_t}{\gamma_t \bar{P}_{t|t}+\delta} - \frac{1}{\bar{P}_{t|t}}
\end{align*}
The second derivative is
\begin{align*}
    f_t''(\bar{P}_{t|t}) &= -\frac{\alpha_t^{2}}{(\alpha_t \bar{P}_{t|t}+\beta_t)^2} + \frac{\gamma_t^{2}}{(\gamma_t \bar{P}_{t|t}+\delta)^2}+\frac{1}{\bar{P}_{t|t}^2}\\
    &= \frac{2\alpha_t \beta_t \bar{P}_{t|t}+\beta_t^{2}}{\bar{P}_{t|t}^{2}(\alpha_t \bar{P}_{t|t}+\beta_t)^2} + \frac{\gamma_t^{2}}{(\gamma_t \bar{P}_{t|t}+\delta)^2}
\end{align*}
Therefore, it is sufficient to show $\alpha_t >0$ and $\beta_t>0$ to conclude that $ f_t''(\bar{P}_{t|t})>0$ for $ \bar{P}_{t|t}>0$.
From the standard Kalman filter update equation, we have
\begin{align}\label{eqn:Wbar-update-scalar}
    \bar{W}_{t+1}^{-1} = \bar{W}_{t+1|t}^{-1} + \frac{F_{t+1}^{2}}{V_{t+1}} + \frac{C_{t+1}^{2}}{N_{t+1}}
\end{align}
which implies $\bar{W}_{t+1}^{-1} > \frac{F_{t+1}^{2}}{V_{t+1}}$, hence $\alpha_t > 0$.
Then we show $\beta_t >0$. By \eqref{eqn:Wbar-update-and-Nbar}, we have the following equalities for $\beta_t$:
\begin{align*}
    \beta_t &=  \bar{N}_t(V_{t+1} - F_{t+1}^{2}\bar{W}_{t+1})-F_{t+1}^{2}\bar{W}_{t+1}^{2}\\
    &= \bar{N}_t V_{t+1} - F_{t+1}^{2}\bar{W}_{t+1}(\bar{W}_{t+1} + \bar{N}_t)\\
    &= V_{t+1}(\bar{W}_{t+1|t} - \bar{W}_{t+1}) - F_{t+1}^{2}\bar{W}_{t+1}\bar{W}_{t+1|t}
\end{align*}
By \eqref{eqn:Wbar-update-scalar}, we can write $\bar{W}_{t+1}$ as
\begin{align*}
    \bar{W}_{t+1} = \frac{\bar{W}_{t+1|t}}{1+(\frac{F_{t+1}^{2}}{V_{t+1}}+\frac{C_{t+1}^{2}}{N_{t+1}})\bar{W}_{t+1|t}}
\end{align*}
Hence, we can write $\beta_t$ as
\begin{align*}
    \beta_t &= \frac{\bar{W}_{t+1|t}^{2}}{1+(\frac{F_{t+1}^{2}}{V_{t+1}}+\frac{C_{t+1}^{2}}{N_{t+1}})\bar{W}_{t+1|t}}\left[ F_{t+1}^{2} + \frac{C_{t+1}^{2}V_{t+1}}{N_{t+1}} -F_{t+1}^{2}\right]\\
    &\geq 0
\end{align*}
Therefore, $f_t''(\bar{P}_{t|t})>0$ for $ \bar{P}_{t|t}>0$. $f_t(\bar{P}_{t|t})$ is a convex function for $\bar{P}_{t|t}>0$.
Since the objective $I$ is the sum of $f_t(\bar{P}_{t|t})$ and $-\log\bar{P}_{T|T}$, and $\bar{P}_{1|0}^{\SI}$ is a constant, we have that $I$ is jointly convex in $\{\bar{P}_{t|t}>0\}_{t=1}^{T}$.

Next we consider the constraint $\bar{P}_{t|t}\preceq\bar{P}_{t|t-1}^{\SI}$. By \eqref{eqn:Pbarone-full} and the Schur complement of positive definite matrix, this constraint is equivalent to
\begin{align*}
\begin{bmatrix}
    \bar{P}_{t|t-1} - \bar{P}_{t|t} & (\bar{P}_{t|t-1}+\bar{W}_{t})F_{t}^{\T}\\
    F_{t}(\bar{P}_{t|t-1}+\bar{W}_{t}) & F_{t}(\bar{P}_{t|t-1}+\bar{W}_{t})F_{t}^{\T}+V_{t}
\end{bmatrix}\succeq 0
\end{align*}
For the control cost constraint, for the certainty equivalence controller, the control cost can be written as 
\begin{align*}
    \sum_{t=1}^{T}\left[\Tr(\Theta_{t}P_{t|t})+\Tr(S_t W_t)\right]+\Tr(\Phi_1 P_{1|0})
\end{align*}
Since $P_{t|t} = \bar{P}_{t|t} + \bar{W}_{t|t}$, this can be written as
\begin{align*}
    \sum_{t=1}^{T}\left[\Tr(\Theta_{t}\bar{P}_{t|t})+\Tr(\Theta_{t}\bar{W}_{t|t})+\Tr(S_t W_t)\right]+\Tr(\Phi_1 P_{1|0})
\end{align*}

\subsection{Proof of Theorem \ref{thm:convex-TI}}\label{proof:convex-TI}
Firstly, the arguments for Theorem \ref{thm:linear-policy-opt} also hold for the optimization problem \eqref{LQG-plain-opt-form-TI}. So it is sufficient to consider the policy set $\Gamma_1$. Secondly, the certainty equivalence controller is time-invariant for the infinite horizon case since $(A,B)$ is stabilizable. Hence, following the proof of Theorem \ref{thm:convex-time-varying}, the optimization problem \eqref{LQG-plain-opt-form-TI} can be written as:
\begin{align}
    \min_{\{\bar{P}_{t|t}\}_{t\in[T]}} &\limsup_{T\rightarrow\infty}\frac{1}{T} \left[\sum_{t=1}^{T-1} \frac{1}{2}f_{t}(\bar{P}_{t|t}) + \frac{1}{2}\log\frac{\bar{P}_{1|0}^{\SI}}{\bar{P}_{T|T}}\right]\label{convex-TV-forTI}\\
    \text{s.t. }& \limsup_{T\rightarrow\infty}\frac{1}{T}\left[\sum_{t=1}^{T}\Tr(\Theta \bar{P}_{t|t}) + \eta_1 \right]\leq d \label{cons:limsup-TV-control}\\
    & \begin{bmatrix}
       \bar{P}_{t|t-1}  - \bar{P}_{t|t}& (\bar{P}_{t|t-1}+\bar{W}_t)F^{\T} \\
        F(\bar{P}_{t|t-1}+\bar{W}_t) & F_(\bar{P}_{t|t-1}+\bar{W}_t)F^{\T}+V
    \end{bmatrix} \succeq 0 \label{cons:limsup-TV}\\
        & \bar{P}_{t+1|t} = A\bar{P}_{t|t} A^{\T} + \bar{N}_t \\
    &\bar{P}_{t|t} > 0
\end{align}
where $\eta_1 = \sum_{t=1}^{T}[\Tr(S W) + \Tr(\Theta \bar{W}_t)] $.

Define the average of $\bar{P}_{t|t}$ as $\bar{P}^{'}_{T} \triangleq  \frac{1}{T}\sum_{t=1}^{T}\bar{P}_{t|t}$ and the set $\mathcal{D} $ as
\begin{align}
    D = \{\bar{P}\geq0: &\Tr(\Theta\bar{P})+\Tr(WS)+\Tr(\Theta\bar{W})\leq d\nonumber\\
    &\begin{bmatrix}
        \bar{P}^{+}-\bar{P} & (\bar{P}^{+}+\bar{W})F^{\T}\\
        F(\bar{P}^{+}+\bar{W}) & F(\bar{P}^{+}+\bar{W})F^{\T}+V
    \end{bmatrix}\succeq 0\}
\end{align}
By the control cost constraint \eqref{cons:limsup-TV-control}, we have that 
\begin{align*}
    \Tr(\Theta \bar{P}^{'}_{T})+\Tr(S W)+\frac{1}{T}\sum_{t=1}^{T} \Tr(\Theta \bar{W}_t) \leq d\\
\end{align*}
Since $\bar{W}_t$ converges to $\bar{W}$, we have that for any $\epsilon >0$, there exists $T_{\epsilon}$ such that for any $T\geq T_{\epsilon}$ we have that
\begin{align*}
    \Tr(\Theta \bar{P}^{'}_{T})+\Tr(S W)+ \Tr(\Theta \bar{W}) \leq d +\epsilon\\
\end{align*}

By summing the $T+1$ terms of constraint \eqref{cons:limsup-TV} and dividing it by $T$, and letting $\bar{W}^{'}_{T} \triangleq \frac{1}{T}\sum_{t=1}^{T} \bar{W}_{t}$ and $\bar{N}^{'}_{T} \triangleq \frac{1}{T}\sum_{t=1}^{T} \bar{N}_{t}$ we have the following:
\begin{align*}
    &\begin{bmatrix}
        A^{2}\bar{P}^{'}_{T} +\bar{N}^{'}_{T}-\bar{P}^{'}_{T} & F(A^{2}\bar{P}_{T}^{'}+\bar{W}^{'}_T+\bar{N}^{'}_{T}) \\
        F(A^{2}\bar{P}_{T}^{'}+\bar{W}^{'}_{T}+\bar{N}^{'}_{T}) & F^2(A^{2}\bar{P}_{T}^{'}+\bar{W}^{'}_{T}+\bar{N}^{'}_{T}) + V
    \end{bmatrix} \\
    \succeq & \frac{1}{T}\begin{bmatrix}
        \bar{P}_{T+1|T+1} - \bar{P}_{1|0} & -F(\bar{P}_{1|0}+\bar{W}_{T+1}) \\
        -F(\bar{P}_{1|0}+\bar{W}_{T+1}) & -F^{2}(\bar{P}_{1|0}+\bar{W}_{T+1}) + V
    \end{bmatrix}\\
    \succeq & \frac{1}{T}\begin{bmatrix}
         - \bar{P}_{1|0} & -F(\bar{P}_{1|0}+\bar{W}_{T+1}) \\
        -F(\bar{P}_{1|0}+\bar{W}_{T+1}) & -F^{2}(\bar{P}_{1|0}+\bar{W}_{T+1}) + V
    \end{bmatrix}
\end{align*}
Note that $\bar{W}_t$ converges to $\bar{W}$ and $\bar{N}_t$ converges to $\bar{N}$, for any $\epsilon>0$, there exists $T_{\epsilon}$ such that for all $T\geq T_{\epsilon}$, we have
\begin{align*}
     &\begin{bmatrix}
        A^{2}\bar{P}^{'}_{T} +\bar{N}-\bar{P}^{'}_{T} & F(A^{2}\bar{P}_{T}^{'}+\bar{W}+\bar{N}) \\
        F(A^{2}\bar{P}_{T}^{'}+\bar{W}+\bar{N}) & F^2(A^{2}\bar{P}_{T}^{'}+\bar{W}+\bar{N}) + V
    \end{bmatrix} \\
    \succeq & \begin{bmatrix}
        -\epsilon & -\epsilon\\
        -\epsilon & -\epsilon
    \end{bmatrix}
\end{align*}

Define the set $\mathcal{D}_{\epsilon}$ as the following:
\begin{align*}
    &\left\{\bar{P}\geq 0: \Tr(\Theta\bar{P})+\Tr(WS)+\Tr(\Theta\bar{W})\leq d + \epsilon \right. \\
    &\left.
    \begin{bmatrix}
        \bar{P}^{+}-\bar{P} & (\bar{P}^{+}+\bar{W})F^{\T}\\
        F(\bar{P}^{+}+\bar{W}) & F(\bar{P}^{+}+\bar{W})F^{\T}+V
    \end{bmatrix}\succeq \begin{bmatrix}
        -\epsilon & -\epsilon\\
        -\epsilon & -\epsilon
    \end{bmatrix}\right\}
\end{align*}

The set $\mathcal{D}_{\epsilon}$ is compact as the \cite[Lemma 1]{SDP-TI-Equivalent}. Hence, the sequence $\{\bar{P}^{'}_{T}\}$ has a limit point in $\mathcal{D}_{\epsilon}$ for any $\epsilon >0$. So the sequence $\{\bar{P}^{'}_{T}\}$ has a limit point in $\mathcal{D} = \cap_{\epsilon>0} \mathcal{D}_{\epsilon}$. Let $\{T_{i}\}_{i\in\mathbb{N}}$ be a subsequence such that 
\begin{align*}
    \lim_{i\rightarrow \infty}\bar{P}^{'}_{T_{i}} = \bar{P}^{'}_{\infty}\in\mathcal{D}
\end{align*}
Following the proof of \cite[Lemma 3]{SDP-TI-Equivalent}, we have
\begin{align}\label{ineq:log-i-bounded}
    \limsup_{i\rightarrow\infty}\frac{1}{T_{i}}\log(\bar{P}_{T_{i}|T_{i}}) \leq 0
\end{align}
The following lemma implies that we only need to consider the time-invariant function $f(\cdot)$ for the objective.
\begin{lemma}\label{lemma:f-converges}
Suppose that the optimization problem \eqref{convex-TV-forTI} has a solution and is bounded, then we have that
\begin{align*}
    \limsup_{T\rightarrow \infty}\frac{1}{T}\sum_{t=1}^{T} f_t(\bar{P}_{t|t}) = \limsup_{T\rightarrow \infty}\frac{1}{T}\sum_{t=1}^{T} f(\bar{P}_{t|t})
\end{align*}
\end{lemma}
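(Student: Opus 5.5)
The plan is to show that the Ces\`aro averages of $f_t(\bar{P}_{t|t})$ and $f(\bar{P}_{t|t})$ differ by a quantity tending to zero, i.e.
\begin{align*}
\lim_{T\rightarrow\infty}\frac{1}{T}\sum_{t=1}^{T}\left|f_t(\bar{P}_{t|t})-f(\bar{P}_{t|t})\right| = 0 .
\end{align*}
The two $\limsup$'s then coincide, since $\limsup(a_T+b_T)=\limsup a_T$ whenever $b_T\rightarrow 0$. Since $f_t-f$ does not contain the term $-\log\bar{P}_{t|t}$, we have
\begin{align*}
f_t(p)-f(p) = \log\frac{\alpha_t p+\beta_t}{\alpha p+\beta} - \log\frac{\gamma_t p+\delta_t}{\gamma p+\delta}.
\end{align*}
The whole problem therefore reduces to controlling two ratios of affine functions of $p=\bar{P}_{t|t}>0$, uniformly in $p$.

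\textbf{Step 1.} The coefficients $\alpha_t,\beta_t,\gamma_t,\delta_t$ converge to $\alpha,\beta,\gamma,\delta$. This holds because $\bar{W}_t\rightarrow\bar{W}$ by the detectability assumption and the Riccati convergence cited from \cite{kailath2000linear}, $\bar{N}_t\rightarrow\bar{N}$, and the system parameters are constant. The limits are strictly positive. For $\alpha$ and $\gamma$ this comes from the steady-state form of \eqref{eqn:Wbar-update-scalar}, which gives $\bar{W}^{-1}>F^2/V$. For $\beta$ and $\delta$ it comes from the closed form of $\beta_t$ in the proof of Theorem \ref{thm:convex-TV}: in the limit it equals a positive multiple of $C^2V/N$, which is positive whenever $C\neq 0$. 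The case $C=0$ has to be treated separately. There $\beta=0$, no information flows, and the problem is degenerate; I would remark on this case or exclude it.

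\textbf{Step 2.} This is the uniform estimate. For positive $a,b,a',b'$ and every $p>0$, the ratio $(a'p+b')/(ap+b)$ lies between $\min(a'/a,b'/b)$ and $\max(a'/a,b'/b)$, because it is a monotone function of $p$ interpolating between $b'/b$ at $p=0$ and $a'/a$ as $p\rightarrow\infty$. Consequently
\begin{align*}
\sup_{p>0}|f_t(p)-f(p)| \leq \max\left(\left|\log\frac{\alpha_t}{\alpha}\right|,\left|\log\frac{\beta_t}{\beta}\right|\right)+\max\left(\left|\log\frac{\gamma_t}{\gamma}\right|,\left|\log\frac{\delta_t}{\delta}\right|\right)=:\epsilon_t .
\end{align*}
By Step 1, $\epsilon_t\rightarrow 0$. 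For the finitely many early $t$ at which some $\beta_t$ might vanish, $\epsilon_t$ is still finite, because the quantities are positive for each $t$ as shown in the proof of Theorem \ref{thm:convex-TV}.

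\textbf{Step 3.} By Ces\`aro averaging, $\frac{1}{T}\sum_{t=1}^T\epsilon_t\rightarrow 0$. This bounds the averaged difference regardless of the trajectory $\{\bar{P}_{t|t}\}$. The hypothesis that the problem has a bounded solution is used only to ensure that both $\limsup$'s are finite, so that subtracting them is legitimate.

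I expect the main obstacle to be Step 2. A naive argument via the mean value theorem would need $\bar{P}_{t|t}$ to be bounded away from $0$ and $\infty$, which is not available: the only available control is \eqref{ineq:log-i-bounded}, a $\limsup$ bound along a subsequence. The ratio-of-affine trick avoids any such bound, but it depends on every coefficient being strictly positive in the limit, in particular $\beta>0$. That is why I would verify positivity carefully from the closed form of $\beta_t$ before anything else.
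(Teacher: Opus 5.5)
Your proposal is correct and follows essentially the same route as the paper: both bound the ratios $(\alpha_t p+\beta_t)/(\alpha p+\beta)$ and $(\gamma_t p+\delta_t)/(\gamma p+\delta)$ uniformly in $p>0$ by the coefficient ratios, then use convergence of the coefficients and Ces\`aro averaging. The paper inserts the intermediate factor $\alpha p+\beta_t$ and gets the sum $|\log(\alpha_t/\alpha)|+|\log(\beta_t/\beta)|$, whereas you get the slightly sharper maximum; your remark that $\beta_t$ is a positive multiple of $C^2V/N$, so $\beta>0$ requires $C\neq 0$, is a caveat the paper skips when it asserts $\beta_t>0$ after only showing $\beta_t\geq 0$.
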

\begin{proof}
    See Section \ref{proof:f-converges}.
\end{proof}

Then we have the following chain of inequalities:
\begin{align*}
    &\limsup_{T\rightarrow\infty}\frac{1}{T} \left[\sum_{t=1}^{T-1} \frac{1}{2}f_{t}(\bar{P}_{t|t}) + \frac{1}{2}\log\frac{\bar{P}_{1|0}^{\SI}}{\bar{P}_{T|T}}\right]\\
    \geq & \limsup_{i\rightarrow\infty}\frac{1}{T_{i}} \left[\sum_{t=1}^{T_{i}-1} \frac{1}{2}f_{t}(\bar{P}_{t|t}) + \frac{1}{2}\log\frac{\bar{P}_{1|0}^{\SI}}{\bar{P}_{T_{i}|T_{i}}}\right]\\
    \geq & \limsup_{i\rightarrow\infty}\frac{1}{T_{i}} \sum_{t=1}^{T_{i}-1} \frac{1}{2}f_{t}(\bar{P}_{t|t}) \\
    = & \limsup_{i\rightarrow\infty}\frac{1}{T_{i}} \sum_{t=1}^{T_{i}-1} \frac{1}{2}f(\bar{P}_{t|t})\\
    \geq & \limsup_{i\rightarrow\infty} f(\bar{P}^{'}_{T_{i}})
\end{align*}
where the first inequality is due to that $\{T_{i}\}_{i\in\mathbb{N}}$ is a subsequence of $\mathbb{N}$; the second inequality is due to that $\bar{P}_{1|0}^{\SI}$ is bounded and \eqref{ineq:log-i-bounded}; the first equality is due to Lemma \ref{lemma:f-converges} and the last inequality is due to that $f(\cdot)$ is a convex function so the Jensen's inequality applies.
Hence, we have shown that the following single-letter convex optimization problem provides a lower bound for \eqref{convex-TV-forTI}.
\begin{align}
    \min_{\bar{P}> 0} & \frac{1}{2}\left[\log(\alpha  \bar{P}+\beta) - \log(\gamma \bar{P} + \delta)-\log\bar{P}\right] \label{convex-TI-single-letter}\\
    \text{s.t. }&  \Tr(\Theta \bar{P})+\Tr(S W)+ \Tr(\Theta \bar{W}) \leq d \\
    & \begin{bmatrix}
        \bar{P}^{+}-\bar{P} & (\bar{P}^{+}+\bar{W})F\\
        F(\bar{P}^{+}+\bar{W}) & F^{2}(\bar{P}^{+}+\bar{W})+V
    \end{bmatrix} \succeq 0 \\
        & \bar{P}^{+} = A\bar{P} A^{\T} + \bar{N} 
\end{align}

Next, we need to show that the single-letter optimization form \eqref{convex-TI-single-letter} is lower bounded by \eqref{convex-TV-forTI}. Consider any $\bar{P}$ that is feasible for \eqref{convex-TI-single-letter}, denote $\bar{P}^{+} = A^{2}\bar{P}+\bar{N}$ and $(L, H)$ such that 
\begin{align*}
    \bar{P}^{-1}-\left( [(\bar{P}^{+}+\bar{W})^{-1}+F^{2}V^{-1}]^{-1}-\bar{W}\right)^{-1} = L^{2}H^{-1}
\end{align*}
We need to show that the following recursions of $\bar{P}_{t|t}$ converge to $\bar{P}$.
\begin{subequations}\label{eqn:recursions-Pbar}
    \begin{align}
    (\bar{P}_{t+1|t}^{\SI}+\bar{W}_{t+1})^{-1} &=(\bar{P}_{t+1|t}+\bar{W}_{t+1})^{-1}+F^{2}V^{-1}\\
    \bar{P}_{t+1|t+1}^{-1} &=(\bar{P}_{t+1|t}^{\SI})^{-1} +L^{2}H^{-1}\\
    \bar{P}_{t+1|t} &= A^{2}\bar{P}_{t|t}+\bar{N}_t
\end{align}
\end{subequations}

\begin{lemma}\label{lemma:convergence-recursion}
    The recursions \eqref{eqn:recursions-Pbar} of $\bar{P}_{t|t}$ converges to $\bar{P}$ for any initial value $P_{1|0}>0$.
\end{lemma}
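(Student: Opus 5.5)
The approach is to reduce the coupled recursions \eqref{eqn:recursions-Pbar} to a single scalar iteration $\bar{P}_{t+1|t+1}=g_t(\bar{P}_{t|t})$. Here $g_t$ is a linear-fractional (Möbius) map whose coefficients depend on $(\bar{W}_{t+1},\bar{N}_t)$ and converge to those of a time-invariant map $g$. I then show that $g$ has $\bar{P}$ as its unique globally attracting fixed point on $(0,\infty)$, and that the convergence survives the asymptotically vanishing perturbation $g_t\to g$.

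\textbf{Step 1 (explicit form of $g_t$).} By the computation in the proof of Theorem \ref{thm:convex-time-varying} (equation \eqref{eqn:Pbar-scalar} together with $\bar{W}_{t+1|t}=\bar{W}_{t+1}+\bar{N}_t$), the side-information step gives
\begin{align*}
\bar{P}^{\SI}_{t+1|t}=\frac{\alpha_t\bar{P}_{t|t}+\beta_t}{\gamma_t\bar{P}_{t|t}+\delta_t}.
\end{align*}
This is exactly the identity behind $f_t$. Inverting and adding $L^2H^{-1}$ then gives
\begin{align*}
g_t(P)=\frac{\alpha_t P+\beta_t}{(\gamma_t+\kappa\alpha_t)P+\delta_t+\kappa\beta_t},\qquad \kappa=L^{2}H^{-1}\ge 0 .
\end{align*}
From the proof of Theorem \ref{thm:convex-time-varying}, all of $\alpha_t,\beta_t,\gamma_t,\delta_t$ are positive.

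The determinant of this map equals $\alpha_t\delta_t-\beta_t\gamma_t$, since the $\kappa$ terms cancel. It is positive, because it has the sign of $\frac{d}{dP}\bar{P}^{\SI}_{t+1|t}$. That derivative is positive: the side-information update $X\mapsto [(X+\bar{W})^{-1}+F^2V^{-1}]^{-1}-\bar{W}$ is strictly increasing, and $P\mapsto A^2P+\bar{N}_t$ is increasing. Hence each $g_t$ is strictly increasing and strictly concave on $[0,\infty)$. It also satisfies $g_t(0)>0$ and is bounded above by $\alpha_t/(\gamma_t+\kappa\alpha_t)$ (or by $\bar{P}^{\SI}$ when $\kappa=0$).

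\textbf{Step 2 (time-invariant map).} For the limit map $g$, built from $(\alpha,\beta,\gamma,\delta)$, the choice of $(L,H)$ in the statement is exactly the requirement $g(\bar{P})=\bar{P}$. An increasing, strictly concave map of $[0,\infty)$ into a bounded interval with $g(0)>0$ has exactly one positive fixed point. The reason is that $g(P)-P$ is strictly concave, positive at $0$, and tends to $-\infty$. That fixed point therefore is $\bar{P}$, and it satisfies $g'(\bar{P})<1$. Monotonicity gives convergence for every starting value: iterates starting below $\bar{P}$ increase to it, and iterates starting above decrease to it.

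For the perturbation argument I will use a quantitative version of this. Fix a compact interval $\mathcal{K}=[a,b]\ni\bar{P}$ with $0<a<\bar{P}<b$. On $\mathcal{K}$, concavity gives the bound $|g(P)-\bar{P}|\le \rho|P-\bar{P}|$ for some $\rho<1$. Below $\bar{P}$ this uses the secant slope to $g(a)>a$; above $\bar{P}$ it uses $g'(\bar{P})<1$.

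\textbf{Step 3 (time-varying perturbation, the main obstacle).} Since $\bar{W}_t\to\bar{W}$ and $\bar{N}_t\to\bar{N}$ (the Riccati convergence cited from \cite{kailath2000linear}), $g_t\to g$ uniformly on $[0,\infty)$. This holds because both are Möbius maps with converging positive coefficients and uniformly bounded ranges.

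Next I confine the trajectory. After one step, $\bar{P}_{t|t}$ lies in a compact interval $[g_t(0),\sup g_t]$ that is uniformly bounded away from $0$ and $\infty$ for large $t$. I enlarge $\mathcal{K}$ to contain these intervals.

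On $\mathcal{K}$, with $e_t=|\bar{P}_{t|t}-\bar{P}|$ and $\epsilon_t=\sup_{\mathcal{K}}|g_t-g|$, we get
\begin{align*}
e_{t+1}\le \rho e_t+\epsilon_t .
\end{align*}
Since $\epsilon_t\to 0$, this gives $e_t\to 0$. That proves convergence for any $\bar{P}_{1|0}>0$.

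The delicate points are the following. First, establishing the uniform contraction constant $\rho$ on the whole invariant interval, rather than just locally; if the secant argument proves awkward, I will fall back on the Hilbert projective metric, in which positive Möbius maps are strict contractions. Second, checking that the coupling between the Kalman recursion for $\bar{W}_t$ and the $\bar{P}$ recursion enters only through these vanishing $\epsilon_t$.
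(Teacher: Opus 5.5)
Your proof is correct, but it takes a different route from the paper's. The paper works in the shifted variable $P_{t|t}=\bar{P}_{t|t}+\bar{W}_t$, writes the recursion as $P_{t+1|t+1}=\phi_t(\rho(P_{t|t}))$, and shows the sequence is eventually confined to $(\bar{W},1/c_1]$. It then proves that $a=\liminf P_{t|t}$ and $b=\limsup P_{t|t}$ are both fixed points of the limiting map $\phi_\infty\circ\rho$. Finally it concludes $a=b$ because that map is increasing and concave, satisfies $\phi_\infty(\rho(\bar{W}))>\bar{W}$, and has bounded range, so it has only one fixed point.

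Your $g_t$ is the same map written in the unshifted variable, $g_t(x)=\phi_t(\rho(x+\bar{W}_t))-\bar{W}_{t+1}$. So the structural facts you use (monotone, concave, bounded, unique fixed point) are the same as the paper's. What differs is the limiting argument.

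The paper's $\liminf/\limsup$ argument is purely qualitative: it needs only continuity, monotonicity and uniqueness, with no contraction constant. Your perturbed-contraction estimate $e_{t+1}\le q\,e_t+\epsilon_t$ (writing $q<1$ for your contraction factor, to avoid a clash with the paper's map $\rho$) buys you two things. First, it gives a quantitative rate that inherits the decay of $\epsilon_t$. Second, it identifies the limit as $\bar{P}$ explicitly through $g(\bar{P})=\bar{P}$; the paper leaves this step implicit when it says the sequence converges to ``the fixed point of $\phi_\infty(\rho(x))$''.

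The delicate point you flag is not an obstacle. Concavity and $g(0)>0$ give the secant bound on all of $[0,\bar{P}]$ with factor $1-g(0)/\bar{P}$. Above $\bar{P}$, the factor $g'(\bar{P})<1$ works. So you get a global contraction toward $\bar{P}$ on $[0,\infty)$ without compactness or the Hilbert metric.

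You can also check the determinant directly rather than through the monotonicity of the side-information update: $\alpha_t\delta_t-\beta_t\gamma_t=A^2V^2$, independent of $t$ and of $\kappa$.
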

\begin{proof}
    See Section \ref{proof:convergence-recursion}.
\end{proof}
By Lemma \ref{lemma:convergence-recursion}, for fixed $(L,H)$, the recursions \eqref{eqn:recursions-Pbar} of $\bar{P}_{t|t}$ converge to $\bar{P}$. Therefore, $\bar{P}^{\SI}_{t|t-1}$ also converges to 
\begin{align*}
    \bar{P}^{+} - \frac{F^{2}(\bar{P}^{+}+\bar{W})^{2}}{V+F^{2}(\bar{P}^{+}+\bar{W})}
\end{align*}
The objective mutual information term $I(\Bar{x}_t; c_t |c^{t-1},y^{t})$ converges to 
\begin{align*}
    &\frac{1}{2}\log(\bar{P}^{+} - \frac{F^{2}(\bar{P}^{+}+\bar{W})^{2}}{V+F^{2}(\bar{P}^{+}+\bar{W})}) - \log \bar{P}\\
  =&  \frac{1}{2}\log(\frac{\alpha \bar{P}+\beta}{\gamma\bar{P}+\delta}) - \log\bar{P}
\end{align*}
By the Ces\`aro sum, we have:
\begin{align*}
    \lim_{T\rightarrow\infty}\frac{1}{T}\sum_{t=1}^{T}I(\Bar{x}_t; c_t |c^{t-1},y^{t}) = \frac{1}{2}\log(\frac{\alpha \bar{P}+\beta}{\gamma\bar{P}+\delta}) - \log\bar{P}
\end{align*}
Similarly, since $\bar{P}_{t|t}$ converges to $\bar{P}$ and $\bar{W}_t$ converges to $\bar{W}$, the average of the control cost also converges:
\begin{align*}
    &\lim_{T\rightarrow\infty}\frac{1}{T}\sum_{t=1}^{T}\left[\Tr(\Theta \bar{P}_{t|t}) + \Tr(S W) + \Tr(\Theta \bar{W}_t) \right]\\
    =& \Tr(\Theta \bar{P}) + \Tr(S W) + \Tr(\Theta \bar{W}) 
\end{align*}
This completes the proof.

\subsection{Proof of Corollary \ref{col:explicit-solution-TI}}\label{proof:explicit-sol-TI}
The first derivative of $f(x)$ is negative for $x>0$ as the following argument:
\begin{align*}
    f'(x) &= \frac{\alpha}{\alpha x+\beta} - \frac{\gamma}{\gamma x+\delta} - \frac{1}{x}\\
    &=-\frac{\beta}{(\alpha x+\beta)x}- \frac{\gamma}{\gamma x+\delta}\\
    &< 0
\end{align*}
Hence, the objective is minimized when $\bar{P}$ attains its maximum. Next, we examine the control cost constraint and the Semidefinite constraint.

From the control cost constraint, we have that 
\begin{align*}
    \bar{P}\leq \frac{d-SW}{\Theta}-\bar{W}
\end{align*}

For the Semidefinite constraint, it is equivalent to having nonnegative diagonal elements and a nonnegative determinant for the two-by-two matrix, which yields:
\begin{align}
    (A^{2}-1)\bar{P}+\bar{N} &\geq 0 \label{cons:positive-entry1}\\
    (A^{2}\bar{P}+\bar{N}-\bar{P})[F^{2}(A^{2}\bar{P}+\bar{N}+\bar{W})+V]&\nonumber\\
    -F^{2}(A^{2}\bar{P}+\bar{N}+\bar{W})^{2} &\geq 0\label{cons:positive-determinant}
\end{align}
Denote the determinant as $g(\bar{P})$, which can be simplified as
\begin{align*}
    g(\bar{P}) =& -\bar{P}^{2} F^{2}A^{2}+V\bar{N}-F^{2}(\bar{N}+\bar{W})\bar{W} \\
        &+ \bar{P}\left[(A^{2}-1)V-(A^{2}\bar{W}+\bar{N}+\bar{W})F^{2}  \right].
\end{align*}
The second constraint is $g(\bar{P})\geq 0$.
Since $g(0) = \beta >0$, the polynomial $g(\bar{P})$ has a positive solution $\bar{P}^{*}$. Note that $F^{2}A^{2}>0$, we have the constraint that $\bar{P}\leq \bar{P}^{*}$.

Then we consider the positive diagonal element constraint \eqref{cons:positive-entry1}.
When $|A| \geq 1$, the constraint $\eqref{cons:positive-entry1}$ is satisfied for any $\bar{P}\geq 0$. When $0<|A|<1$, the constraint $\eqref{cons:positive-entry1}$ is reduced to 
\begin{align*}
    \bar{P}\leq -\frac{\bar{N}}{A^{2}-1}
\end{align*}
Since 
\begin{align*}
    &g(-\frac{\bar{N}}{A^{2}-1}) \\
    =& -A^{2}F^{2}\frac{\bar{N}^{2}}{(A^{2}-1)^{2}}-V\bar{N}+\frac{\bar{N}}{A^{2}-1}(A^{2}\bar{W}+\bar{N}+\bar{W})F^{2} \\
    &+V\bar{N} - F^{2}(\bar{N}+\bar{W})\bar{W} <0,
\end{align*}
we have that $\bar{P}^{*} \leq -\frac{\bar{N}}{A^{2}-1} $. Therefore, it is sufficient to consider $\bar{P}\leq \bar{P}^{*}$ for $|A| >0$. Overall, we have that 
\begin{align*}
    \bar{P}\leq \min\{\bar{P}^{*}, \frac{d-SW}{\Theta} -\bar{W}\}.
\end{align*}

When $d \leq \Theta \bar{W } + SW + \Theta \bar{P}^{*}$, we consider $\bar{P} = \frac{d-SW}{\Theta} -\bar{W}$. The function $f(\bar{P})$ has the value
\begin{align*}
    &f(\frac{d-SW}{\Theta} -\bar{W})\\
    =&\log(\alpha +\frac{\beta}{\frac{d-SW}{\Theta} -\bar{W}}) -\log(\gamma (\frac{d-SW}{\Theta} -\bar{W})+\delta).
\end{align*}

The $\log(\alpha +\frac{\beta}{\frac{d-SW}{\Theta} -\bar{W}})-\log V$ yields $\log\left(A^{2}(1-F^{2}V^{-1}\bar{W})+\frac{\bar{N}(1-F^{2}V^{-1}\bar{W})-F^{2}V^{-1}\bar{W}}{\frac{d-SW}{\Theta}-\bar{W}}\right)$ and for $\log(\gamma (\frac{d-SW}{\Theta} -\bar{W})+\delta)$, it equals
\begin{align*}
    &\log(A^{2}F^{2}(\frac{d-SW}{\Theta} -\bar{W}) + V+F^{2}(\bar{N}+\bar{W}))\\
    =& \log(A^{2}F^{2}(\frac{d-SW}{\Theta})+V +F^{2}W)
\end{align*}
where the equality is due to that $\bar{N} +\bar{W} = A^{2}\bar{W}+W$. Therefore, $\log(\gamma (\frac{d-SW}{\Theta} -\bar{W})+\delta) - \log V$ yields $\log\left(A^{2}F^{2}V^{-1}\frac{d-SW}{\Theta}+1+F^{2}V^{-1}W\right)$. Overall, we have the optimal value as in Corollary \ref{col:explicit-solution-TI}.

When $d > \Theta \bar{W } + SW + \Theta \bar{P}^{*}$, we have that $\bar{P} = \bar{P}^{*}$. By evaluating the objective at $\bar{P}^{*}$, it is easy to verify that it is zero.

\subsection{Proof of Lemma \ref{lemma:Gaussian-lower-bound-CDI}}\label{sec:proof-Gaussian-lower-bound-CDI}

Denote $\dP$ as $\dP(x^{T+1},y^{T+1},z^{T+1},u^{T})$, we first rewrite the $\sum_{t=1}^{T}I(\bar{x}_t;u_t|u^{t-1},y^{t})$ in $\mathbb{P}$ as follows:
\begin{align*}
    &\sum_{t=1}^{T}I_{\mathbb{P}}(\bar{x}_t;u_t|u^{t-1},y^{t}) \\
    =& \sum_{t=1}^{T} \int \log(\frac{\dP(\bar{x}_t|u^{t},y^{t})}{\dP(\bar{x}_t|u^{t-1},y^{t})})\dP \\
    =& \sum_{t=1}^{T}\int \log(\frac{\dP(\bar{x}_t|u^{t},y^{t})}{\dP(\bar{x}_t|u^{t-1},y^{t})}\frac{\dP(\bar{x}_{t+1},y_{t+1}|u^{t},y^{t},\bar{x}_{t})}{\dP(\bar{x}_{t+1},y_{t+1}|u^{t},y^{t},\bar{x}_{t})})\dP\\
    =& \sum_{t=1}^{T}\int\log\left(\dP(y_{t+1}|u^{t},y^{t})\frac{\dP(\bar{x}_{t+1}|u^{t},y^{t+1})}{\dP(\bar{x}_t|u^{t-1},y^{t})} \right.\\
    &\left.  \frac{\dP(\bar{x}_t|u^{t},y^{t+1},\bar{x}_{t+1})}{\dP(\bar{x}_{t+1},y_{t+1}|u^{t},y^{t},\bar{x}_t)}  \right)\dP\\
    =&\sum_{t=1}^{T} \log(\frac{\dP(y_{t+1}|u^{t},y^{t})\cdot\dP(\bar{x}_t|u^{t},y^{t+1},\bar{x}_{t+1})}{\dP(\bar{x}_{t+1},y_{t+1}|u^{t},y^{t},\bar{x}_t)})\dP\\
    & + \int\log(\frac{\dP(\bar{x}_{T+1}|u^{T},y^{T+1})}{\dP(\bar{x}_1|y_1)})\dP
\end{align*}
where the equalities follow from repeated application of the chain rule of probability. The same arguments also hold for the sum of mutual information terms computed in $\mathbb{G}$, therefore, we can write $\sum_{t=1}^{T}I_{\mathbb{P}}(\bar{x}_t;u_t|u^{t-1},y^{t}) - I_{\mathbb{G}}(\bar{x}_t;u_t|u^{t-1},y^{t}))$ as:
\begin{align*}
    &\sum_{t=1}^{T}\left(I_{\mathbb{P}}(\bar{x}_t;u_t|u^{t-1},y^{t}) - I_{\mathbb{G}}(\bar{x}_t;u_t|u^{t-1},y^{t})\right)\\
=&\KL(\dP(\bar{x}_{T+1}|u^{T},y^{T+1})\Vert \dG(\bar{x}_{T+1}|u^{T},y^{T+1})) \\
-& \KL(\dP(\bar{x}_1|y_1)\Vert \dG(\bar{x}_1|y_1))\\
+& \sum_{t=1}^{T}\left[\KL(\dP(y_{t+1}|u^{t},y^{t})\Vert\dG(y_{t+1}|u^{t},y^{t}))  \right.\\
+&\left. \KL(\dP(\bar{x}_t|u^{t},y^{t+1},\bar{x}_{t+1})\Vert \dG(\bar{x}_t|u^{t},y^{t+1},\bar{x}_{t+1})) \right.\\
-&\left. \KL(\dP(\bar{x}_{t+1},y_{t+1}|u^{t},y^{t},\bar{x}_t)\Vert \dG(\bar{x}_{t+1},y_{t+1}|u^{t},y^{t},\bar{x}_t))\right]\\
\geq &0
\end{align*}
where $\KL(\cdot\Vert\cdot)$ denotes the KL divergence; the first equality follows because the logarithms of conditional Gaussian densities are quadratic in the relevant random variables, and the corresponding second-order moments under $\mathbb{P}$ and $\mathbb{G}$ coincide; the last inequality follows from the nonnegativity of the KL divergence, the Gaussianity of $\dP(\bar{x}_1\mid y_1)$, and Lemma~\ref{lemma:P-equals-G}.

\subsection{Proof of Lemma \ref{lemma:G-Q-identity}}\label{proof:G-Q-identity}
We prove this by induction. Note that for $t=1$, we have $\mathbb{Q}(y_1 ,\bar{x}_1) = \mathbb{G}(y_1, \bar{x}_1)$ since $\bar{x}_1 = \mathbb{E}[x_1|y_1, z_1, u_1] =\mathbb{E}[x_1|y_1, z_1] $ and $x_1, y_1,z_1$ are jointly Gaussian.

Assuming the identity in the lemma, namely, $\mathbb{Q}(u^{t-1},y^{t},\bar{x}_t) = \mathbb{G}(u^{t-1},y^{t},\bar{x}_t)$, holds at time step $t$, we show it also holds for $t+1$.
\begin{align*}
    &\mathbb{Q}(u^{t},y^{t+1},\bar{x}_{t+1})\\
    =&\int_{\bar{x}_{t}} \dQ(u^{t},y^{t+1},\bar{x}_{t+1},\bar{x}_t)\\
    \stackrel{(a)}{=}&\int_{\bar{x}_t} \dP(y_{t+1},\bar{x}_{t+1}|u^{t},y^{t},\bar{x}_t)\dQ(u_t|u^{t-1},y^{t},\bar{x}_t)\\
    &\dQ(u^{t-1},y^{t},\bar{x}_t)\\
    \stackrel{(b)}{=}&\int_{\bar{x}_t}\dP(y_{t+1},\bar{x}_{t+1}|u^{t},y^{t},\bar{x}_t)\dG(u^{t},y^{t},\bar{x}_t)\\
    \stackrel{(c)}{=}&\int_{\bar{x}_t}\dG(u^{t},y^{t+1},\bar{x}_{t+1},\bar{x}_t)\\
    =&\mathbb{G}(u^{t},y^{t+1},\bar{x}_{t+1})
\end{align*}
where equality (a) is by \eqref{eqn:Q-construct1} and \eqref{eqn:Q-construct2}; equality (b) is by the induction hypothesis and \eqref{eqn:Q-construction-property}; equality (c) is due to Lemma \ref{lemma:P-equals-G}.

\subsection{Proof of Lemma \ref{lemma:f-converges}}\label{proof:f-converges}

Note that $\alpha_t >0$ and $\beta_t >0$ and the limit of each term is $\alpha$ and $\beta$ respectively. We first show that the sequence $\log(\frac{\alpha_t \bar{P}_{t|t}+\beta_t}{\alpha \bar{P}_{t|t}+\beta})$ converges to zero. 
Note that
\begin{align*}
    |\log(\frac{\alpha_t \bar{P}_{t|t}+\beta_t}{\alpha \bar{P}_{t|t}+\beta})| &= |\log(\frac{\alpha_t \bar{P}_{t|t}+\beta_t}{\alpha \bar{P}_{t|t}+\beta_t}\cdot\frac{\alpha \bar{P}_{t|t}+\beta_t}{\alpha \bar{P}_{t|t}+\beta})|\\
    &\leq |\log(\frac{\alpha_t \bar{P}_{t|t}+\beta_t}{\alpha \bar{P}_{t|t}+\beta_t})|+ |\log(\frac{\alpha \bar{P}_{t|t}+\beta_t}{\alpha \bar{P}_{t|t}+\beta})|\\
    &\leq |\log(\frac{\alpha_t}{\alpha})| + |\log(\frac{\beta_t}{\beta})|
\end{align*}
where the last inequality is due to
\begin{align*}
 \min\left\{1,\frac{\alpha_t}{\alpha}\right\}&\leq    \frac{\alpha_t \bar{P}_{t|t}+\beta_t}{\alpha \bar{P}_{t|t}+\beta_t}\leq\max\left\{1,\frac{\alpha_t}{\alpha}\right\}\\
 \min\left\{1,\frac{\beta_t}{\beta}\right\}&\leq\frac{\alpha \bar{P}_{t|t}+\beta_t}{\alpha \bar{P}_{t|t}+\beta}\leq \max\left\{1,\frac{\beta_t}{\beta}\right\}.
\end{align*}
Since $\alpha_t$ converges to $\alpha$ and $\beta_t $ converges to $\beta$, this implies 
\begin{align*}
    0\leq &\lim_{t\rightarrow \infty}|\log(\alpha_t \bar{P}_{t|t}+\beta_t)-\log(\alpha \bar{P}_{t|t}+\beta)|\\
    \leq &\lim_{t\rightarrow \infty}|\log(\frac{\alpha_t}{\alpha})|+|\log(\frac{\beta_t}{\beta})|\\
    =&0
\end{align*}
Hence, the sequence $\log(\frac{\alpha_t \bar{P}_{t|t}+\beta_t}{\alpha \bar{P}_{t|t}+\beta})$ converges to zero. By the Ces\`aro sum, we have that
\begin{align*}
    \lim_{T\rightarrow\infty}\frac{1}{T}\sum_{t=1}^{T}
    \log(\frac{\alpha_t \bar{P}_{t|t}+\beta_t}{\alpha \bar{P}_{t|t}+\beta}) = 0
\end{align*}

Similarly, since $\gamma_t>0$, $\delta_t>0$ and the limit of each term $\gamma$ and $\beta$ are both positive numbers, we can show that 
\begin{align*}
\lim_{T\rightarrow\infty}\frac{1}{T}\sum_{t=1}^{T}\log(\frac{\gamma_t \bar{P}_{t|t}+\delta_t}{\gamma \bar{P}_{t|t}+\delta}) = 0
\end{align*}

Therefore, this implies that 
 \begin{align*}
     &\limsup_{T\rightarrow\infty}\frac{1}{T}\sum_{t=1}^{T}f_{t}(\bar{P}_{t|t}) \\
     =&\limsup_{T\rightarrow\infty}\frac{1}{T}\sum_{t=1}^{T}\left(\log(\frac{\alpha_t  \bar{P}_{t|t}+\beta_t}{\gamma_t \bar{P}_{t|t} + \delta_t})-\log\bar{P}_{t|t}\right)\\
     =& \limsup_{T\rightarrow\infty}\frac{1}{T}\sum_{t=1}^{T}\left(\log(\frac{\alpha  \bar{P}_{t|t}+\beta}{\gamma \bar{P}_{t|t} + \delta}) - \log\bar{P}_{t|t}\right)\\
     =&\limsup_{T\rightarrow\infty}\frac{1}{T}\sum_{t=1}^{T}f(\bar{P}_{t|t})
 \end{align*}

\subsection{Proof of Lemma \ref{lemma:convergence-recursion}}
\label{proof:convergence-recursion}
We first set up the parameters and recursion functions.
Let $c_1 = \frac{F^{2}}{V}$, $c_2 = \frac{C^{2}}{N}$ and $c_3 = \frac{L^{2}}{H}$. Note that the sequence $\bar{W}_t$ and $\bar{N}_t$ follow the following recursions:
\begin{align*}
    \bar{W}_{t+1|t} &= A^{2}\bar{W}_t +W\\
    \bar{W}_{t+1}^{-1} &= \bar{W}_{t+1|t}^{-1} + c_1+c_2\\
    \bar{N}_t &= \bar{W}_{t+1|t} - \bar{W}_{t+1}
\end{align*}
From the \cite{kailath2000linear}, we have that the sequence $\bar{W}_t$ converges to $\bar{W}$ and hence $\bar{N}_t$ converges to $\bar{N}$. Since $P_{t|t} = \bar{P}_{t|t} +\bar{W}_t$ and $\bar{W}_{t+1|t} - \bar{W}_{t+1} = \bar{N}_t$, we have
\begin{align*}
    \bar{P}_{t+1|t} +\bar{W}_{t+1} &= A^{2}\bar{P}_{t|t}+\bar{N}_t +\bar{W}_{t+1}\\
    &= A^{2}\bar{P}_{t|t}+ \bar{W}_{t+1|t} \\
    &= A^{2}(\bar{P}_{t|t}+\bar{W}_t)+W\\
    &= A^{2}P_{t|t} +W
\end{align*}
By the recursions \eqref{eqn:recursions-Pbar} of $\bar{P}_{t|t}$, we can write $\bar{P}_{t+1|t+1}$ as a function of $P_{t|t} $ as follows:
\begin{equation}\label{eqn:P-to-Pbar}
    \bar{P}_{t+1|t+1} = \frac{\rho(P_{t|t})-\bar{W}_{t+1}}{1+c_{3}(\rho(P_{t|t})-\bar{W}_{t+1})} 
\end{equation}
where $\rho(x ) = \frac{A^{2} x + W}{1+c_{1}(A^{2}x+W)}$. Equivalently, we can write $P_{t+1|t+1}$ as a function of $P_{t|t}$ as follows:
\begin{align*}
    P_{t+1|t+1 } &= \phi_t (\rho(P_{t|t}))\\
    &\triangleq \frac{\rho(P_{t|t})-\bar{W}_{t+1}}{1+c_{3}(\rho(P_{t|t})-\bar{W}_{t+1})}  +\bar{W}_{t+1}
\end{align*}
Next, we show the convergence of recursions \eqref{eqn:recursions-Pbar} for the $(L,H)$ constructed from any feasible $\bar{P}$. We first show that $\bar{P}_{t|t}$ is bounded from below by a positive constant for sufficiently large $t$, which is used to ensure the limit inferior of $P_{t|t}$ is a valid input for the function $\phi_t (\rho(P_{t|t}))$. Since the function $\rho(x)$ can be written as:
\begin{align*}
    \rho(x ) &= \frac{A^{2} x + W}{1+c_{1}(A^{2}x+W)}\\
    &= \frac{1}{c_1}-\frac{1}{c_1+c_1^{2}(A^{2}x+W)}
\end{align*}
we have that $\rho(x)$ is an increasing function for $x>0$. And note that $P_{t|t} > \bar{W}_t$, the following inequalities are true:
\begin{align*}
    \rho(P_{t|t}) \geq &\rho(\bar{W}_t)\\
            =& \frac{A^{2} \bar{W}_t + W}{1+c_{1}(A^{2}\bar{W}_t+W)}\\
            > & \frac{A^{2} \bar{W}_t + W}{1+(c_{1}+c_{2})(A^{2}\bar{W}_t+W)}\\
            =&\bar{W}_{t+1}
\end{align*}
Since $\bar{W}_t$ converges to $\bar{W}$, for sufficiently large $T$, there exists a positive constant $p$ such that $\rho(P_{t|t}) - \bar{W}_{t+1}\geq \rho(\bar{W}_t)-\bar{W}_{t+1}>p$ for all $t\geq T$. Combining \eqref{eqn:P-to-Pbar}, this implies that $\bar{P}_{t|t}$ is bounded from below by a positive constant for sufficiently large $t$, which further implies that 
\begin{align*}
    \liminf_{t\rightarrow\infty} P_{t|t} = \liminf_{t\rightarrow\infty} (\bar{P}_{t|t} + \bar{W}_{t})
    > \bar{W}
\end{align*}
The limit inferior of $P_{t|t}$ is strictly larger than $\bar{W}$, the limit of $\bar{W}_{t|t}$.

Next, we show that $P_{t|t}$ is a bounded sequence. This is true since $P_{t+1|t+1} < \rho(P_{t|t}) -\bar{W}_{t+1} + \bar{W}_{t+1} < \frac{1}{c_1}$.Hence, $P_{t|t}$ is a bounded sequence between 0 and $\frac{1}{c_1}$ for large $t$. Denote the limit inferior and limit superior of $P_{t|t}$ as follows
\begin{align*}
     a = \liminf_{t\rightarrow\infty}P_{t|t} \quad b = \limsup_{t\rightarrow\infty}P_{t|t}
\end{align*}
By what we have argued, we have the following inequalities:
\begin{align*}
    \bar{W} < a \leq b \leq \frac{1}{c_1}
\end{align*}

The following step is to show that both $a$ and $b$ are fixed points of the function $\phi_{\infty}(\rho(x))$ with $\bar{W}_{\infty} = \bar{W}$ and $x\geq \bar{W}$.
Since $\bar{W}_t$ converges to $\bar{W}$ and $a$ is the limit inferior of $P_{t|t}$, for any $\epsilon >0$, there exists a sufficiently large $T_{\epsilon}$ such that for any $t\geq T_{\epsilon}$, we have that $|\bar{W} - \bar{W}_t|<\epsilon$ and $P_{t|t} > a-\epsilon$. Pick $\epsilon$ small enough such that $a-\epsilon > \bar{W}+\epsilon$. 
Consider the following compact intervals:
\begin{align*}
    \mathcal{K}\times\mathcal{J} = [a-\epsilon,\frac{1}{c_1}] \times [\bar{W}-\epsilon, \bar{W}+\epsilon]
\end{align*}
Denote the function $\phi(P_{t|t},\bar{W}_{t+1}) \triangleq \phi_t(\rho(P_{t|t}))$, i.e. $\phi(\cdot)$ is the same as $\phi_t(\rho(\cdot))$ for fixed $\bar{W}_{t+1}$. One can check that for any $(x,y)\in \mathcal{K}\times\mathcal{J}$, we have $\phi(x,y)>0$. So $\phi$ is a continuous and positive function on this interval. This implies that $\phi_{t}(\rho(P_{t|t}))$ converges uniformly to $\phi_{\infty}(\rho(P_{t|t}))$ for any $P_{t|t}$ in $\mathcal{K}$. In particular, we have the following inequalities for sufficiently large $t$:
\begin{align*}
    a-\epsilon \leq \phi_t(\rho( P_{t|t})) \leq  b+\epsilon
\end{align*}
Consider a subsequence $\{t_{i}\}_{i\in\mathbb{N}}$ of $P_{t|t}$ such that $P_{t_i | t_i}$ converges to $a$, the inequality above implies:
\begin{align*}
    a-\epsilon \leq \phi_{\infty}(\rho(a))
\end{align*}
Similarly, we can show:
\begin{align*}
    \phi_{\infty}(\rho(b)) \leq b+\epsilon
\end{align*}
Since $\phi_t(\rho(x))$ is an increasing function for $x\geq \bar{W}_t$ and $a-\epsilon\leq P_{t|t}\leq b+\epsilon$, for sufficiently large $t$, we have: 
\begin{align*}
    \phi_t(\rho(a-\epsilon))\leq \phi_t(\rho(P_{t|t}))= P_{t+1|t+1} \leq \phi_t(\rho(b+\epsilon))
\end{align*}
Taking the limit inferior and the limit superior of $P_{t+1|t+1}$ gives the following inequality:
\begin{align*}
    \phi_{\infty}(\rho(a-\epsilon))\leq a\leq b \leq \phi_{\infty}(\rho(b+\epsilon))
\end{align*}
Since this holds for arbitrarily small $\epsilon$, taking $\epsilon\rightarrow 0$ yields:
\begin{align*}
    a &= \phi_{\infty}(\rho(a))\\
    b &= \phi_{\infty}(\rho(b))
\end{align*}
That is, both $a$ and $b$ are fixed point for the function $\phi_{\infty}(\rho(x))$ with $x \in [a,\frac{1}{c_1}]$. 

The last step is to show that there is only one fixed point for $\phi_{\infty}(\rho(x))$ with $x \geq \bar{W}$. Firstly, $\phi_{\infty}(\rho(x))$ is a composition of two concave increasing functions $\rho(x)$ and $\phi_{\infty}(x)$ both for $x >\bar{W}$. Since $\rho(\bar{W})>\bar{W}$, $\phi_{\infty}(\rho(x))$ is also an increasing concave function for $x\geq\bar{W}$. Recall that $\rho(\bar{W}) > \bar{W}$, so we have $\phi_{\infty}(\rho(\bar{W}))-\bar{W} >0$. Also recall that $\phi_{\infty}(x) < \frac{1}{c_1}$ for any $x\geq \bar{W}$, we have that $\phi_{\infty}(\rho(x)) - x\leq \frac{1}{c_1} -x < 0$ for $x>\frac{1}{c_1}$. Combining the concavity of $\phi_{\infty}(\rho(x))$, we argue that there is only one fixed point for $\phi_{\infty}(\rho(x))$ with $x\geq \bar{W}$. Since $\bar{W}<a\leq b$, this implies that $a=b$. Hence, the sequence $P_{t|t}$ converges to the fixed point of $\phi_{\infty}(\rho(x))$ with $x\geq \bar{W}$. Recall that $P_{t|t} = \bar{P}_{t|t} + \bar{W}_t$ and $\bar{W}_t$ converges to $\bar{W}$, the sequence $\bar{P}_{t|t}$ also converges. Note that the arguments are independent of the initial value $P_{1|0} $, hence they hold for any $P_{1|0} >0 $.


\section{Conclusion}
In this paper, we study the minimum rate required for a linear system with partial observation and side information to achieve a certain distortion requirement. We consider the LQG plant with control cost and the Gaussian-Markov source with the weighted mean-square error. We characterize the optimal policy set for optimizing the conditional directed information lower bound for the rate. We show that the optimal policy set consists of linear encoders, which is a linear function of the estimation of the plant state with all the observations and some Gaussian noise, and the certainty equivalence controller for LQG plant and the conditional mean estimation for the Gaussian-Markov source. The optimal policy result generalizes the past results regarding the single encoder and single controller/decoder settings. We further show that for the scalar case, the resulting optimization problem is convex for both time-varying and time-invariant systems. For the time-invariant system, we show that the optimization problem admits a single-letter form. To show the tightness of the single-letter form, we show the convergence of two nested Riccati recursions. Our result helps to further understand the rate-distortion tradeoffs for the networked case, where there are multiple encoders and a single decoder.

A direct and interesting future direction is to characterize the optimal policy set of the networked case. One can start with the two-encoder setting with a full-observation sensor and a partial-observation sensor. Another immediate direction is to generalize the convex optimization results to the vector case.

\bibliographystyle{ieeetr}
\bibliography{IEEEabrv,Refs.bib}

@ARTICLE{lqgsdp,
  author={Tanaka, Takashi and Esfahani, Peyman Mohajerin and Mitter, Sanjoy K.},
  journal={IEEE Transactions on Automatic Control}, 
  title={LQG Control With Minimum Directed Information: Semidefinite Programming Approach}, 
  year={2018},
  volume={63},
  number={1},
  pages={37-52},
  doi={10.1109/TAC.2017.2709618}}

@ARTICLE{partial-observation-GM,
  author={Stavrou, Photios A. and Skoglund, Mikael},
  journal={IEEE Transactions on Automatic Control}, 
  title={Indirect NRDF for Partially Observable Gauss–Markov Processes With MSE Distortion: Characterizations and Optimal Solutions}, 
  year={2024},
  volume={69},
  number={9},
  pages={5867-5882},
  doi={10.1109/TAC.2024.3364028}}

@ARTICLE{GMsdp,
  author={Tanaka, Takashi and Kim, Kwang-Ki K. and Parrilo, Pablo A. and Mitter, Sanjoy K.},
  journal={IEEE Transactions on Automatic Control}, 
  title={Semidefinite Programming Approach to Gaussian Sequential Rate-Distortion Trade-Offs}, 
  year={2017},
  volume={62},
  number={4},
  pages={1896-1910},
  doi={10.1109/TAC.2016.2601148}}

@INPROCEEDINGS{GM-zero-delay-partial-observation,
  author={Tanaka, Takashi},
  booktitle={2015 54th IEEE Conference on Decision and Control (CDC)}, 
  title={Zero-delay rate-distortion optimization for partially observable Gauss-Markov processes}, 
  year={2015},
  volume={},
  number={},
  pages={5725-5730},
  doi={10.1109/CDC.2015.7403118}}

@ARTICLE{LQG-sideinfo1,
  author={Sabag, Oron and Tian, Peida and Kostina, Victoria and Hassibi, Babak},
  journal={IEEE Transactions on Automatic Control}, 
  title={Reducing the LQG Cost With Minimal Communication}, 
  year={2023},
  volume={68},
  number={9},
  pages={5258-5270},
  doi={10.1109/TAC.2022.3220511}}

@INPROCEEDINGS{LQG-sideinfo2,
  author={Cuvelier, Travis C. and Tanaka, Takashi},
  booktitle={2021 55th Annual Conference on Information Sciences and Systems (CISS)}, 
  title={Rate of Prefix-free Codes in LQG Control Systems with Side Information}, 
  year={2021},
  volume={},
  number={},
  pages={1-6},
  doi={10.1109/CISS50987.2021.9400217}}

@INPROCEEDINGS{multiple-sensor-lower-bound,
  author={Li, Sijie and Tanaka, Takashi and Kim, Hyeji},
  booktitle={2025 IEEE International Symposium on Information Theory (ISIT)}, 
  title={Lower Bound of Networked Linear Quadratic Gaussian Plant with Two Linear Sensors and One Controller}, 
  year={2025},
  volume={},
  number={},
  pages={1-6},
  doi={10.1109/ISIT63088.2025.11195408}}

@INPROCEEDINGS{SDP-TI-Equivalent,
  author={Tanaka, Takashi},
  booktitle={2015 IEEE Conference on Control Applications (CCA)}, 
  title={Semidefinite representation of sequential rate-distortion function for stationary Gauss-Markov processes}, 
  year={2015},
  volume={},
  number={},
  pages={1217-1222},
  doi={10.1109/CCA.2015.7320778}}

@book{kailath2000linear,
  title={Linear Estimation},
  author={Kailath, T. and Sayed, A.H. and Hassibi, B.},
  isbn={9780130224644},
  lccn={99047033},
  series={Prentice-Hall information and system sciences series},
  url={https://books.google.com/books?id=zNJFAQAAIAAJ},
  year={2000},
  publisher={Prentice Hall}
}

@ARTICLE{lqg-epi,
  author={Kostina, Victoria and Hassibi, Babak},
  journal={IEEE Transactions on Automatic Control}, 
  title={Rate-Cost Tradeoffs in Control}, 
  year={2019},
  volume={64},
  number={11},
  pages={4525-4540},
  doi={10.1109/TAC.2019.2912256}}

@ARTICLE{lqg-scalar,
  author={Tatikonda, S. and Sahai, A. and Mitter, S.},
  journal={IEEE Transactions on Automatic Control}, 
  title={Stochastic linear control over a communication channel}, 
  year={2004},
  volume={49},
  number={9},
  pages={1549-1561},
  doi={10.1109/TAC.2004.834430}}

@article{lqgceo,
author = {Jung, Hyunho and Pedram, Ali Reza and Cuvelier, Travis C. and Tanaka, Takashi},
title = {Optimized data rate allocation for dynamic sensor fusion over resource constrained communication networks},
journal = {International Journal of Robust and Nonlinear Control},
volume = {33},
number = {1},
pages = {237-263},
doi = {https://doi.org/10.1002/rnc.6076},
url = {https://onlinelibrary.wiley.com/doi/abs/10.1002/rnc.6076},
eprint = {https://onlinelibrary.wiley.com/doi/pdf/10.1002/rnc.6076},
year = {2023}
}

@ARTICLE{1.254-gap,
  author={Silva, Eduardo I. and Derpich, Milan S. and Østergaard, Jan and Encina, Marco A.},
  journal={IEEE Transactions on Automatic Control}, 
  title={A Characterization of the Minimal Average Data Rate That Guarantees a Given Closed-Loop Performance Level}, 
  year={2016},
  volume={61},
  number={8},
  pages={2171-2186},
  doi={10.1109/TAC.2015.2500658}}

@INPROCEEDINGS{vector-achievable-scheme,
  author={Tanaka, Takashi and Johansson, Karl Henrik and Oechtering, Tobias and Sandberg, Henrik and Skoglund, Mikael},
  booktitle={2016 IEEE International Symposium on Information Theory (ISIT)}, 
  title={Rate of prefix-free codes in LQG control systems}, 
  year={2016},
  volume={},
  number={},
  pages={2399-2403},
  doi={10.1109/ISIT.2016.7541729}}

@misc{longversion,
  title         = {Lower Bound of Networked Control with
Multiple Sensors and One Controller And The
Application to Tracking Gaussian-Markov
Source
},
  author        = {Sijie Li and Takashi Tanaka and Hyeji Kim
},
  year          = {2026},
note   = {arXiv:2607.04172},
  url    = {https://arxiv.org/abs/2607.04172}
}

\end{document}